\setlist{noitemsep}
\DeclareMathAlphabet{\mathcal}{OMS}{cmsy}{m}{n} 
\definecolor{darkblue}{rgb}{0.0, 0.0, 0.5}
\definecolor{midnightblue}{rgb}{0.1378,0.3784,0.4838}
\definecolor{grayblue}{rgb}{0.3235, 0.3784, 0.4333}
\definecolor{graypink}{rgb}{0.75, 0.35, 0.25}
\definecolor{darkgreen}{rgb}{0.0, 0.25, 0.0}
\definecolor{darkpurple}{rgb}{0.3, 0.0, 0.25}
\definecolor{darkorange}{rgb}{1.0, 0.65, 0.0}
\definecolor{darkred}{rgb}{0.5, 0.0, 0.0}
\definecolor{verylightgray}{rgb}{0.9, 0.9, 0.9}
\definecolor{gray75}{gray}{0.3}
\definecolor{orcidlogocol}{HTML}{A6CE39}
\tikzset{
  orcidlogo/.pic={
    \fill[orcidlogocol] svg{M256,128c0,70.7-57.3,128-128,128C57.3,256,0,198.7,0,128C0,57.3,57.3,0,128,0C198.7,0,256,57.3,256,128z};
    \fill[white] svg{M86.3,186.2H70.9V79.1h15.4v48.4V186.2z}
                 svg{M108.9,79.1h41.6c39.6,0,57,28.3,57,53.6c0,27.5-21.5,53.6-56.8,53.6h-41.8V79.1z M124.3,172.4h24.5c34.9,0,42.9-26.5,42.9-39.7c0-21.5-13.7-39.7-43.7-39.7h-23.7V172.4z}
                 svg{M88.7,56.8c0,5.5-4.5,10.1-10.1,10.1c-5.6,0-10.1-4.6-10.1-10.1c0-5.6,4.5-10.1,10.1-10.1C84.2,46.7,88.7,51.3,88.7,56.8z};
  }
}
\newcommand\orcidicon[1]{\href{https://orcid.org/#1}{\mbox{\scalerel*{
\begin{tikzpicture}[yscale=-1,transform shape]
\pic{orcidlogo};
\end{tikzpicture}
}{|}}}}
\newcommand{\dd}{\mathrm{d}}		                     									
\newcommand{\hsp}{\hspace{0.05cm}}															
\newcommand{\ii}{\mathrm{i}}
\newcommand{\ee}{\mathrm{e}}
\newcommand{\TT}{\Gamma}
\newcommand{\bra}[1]{\langle#1|}
\newcommand{\ket}[1]{|#1\rangle}
\newcommand{\tr}{\mathrm{Tr}\hspace{0.05cm}}
\newcommand{\geqcurved}{\succcurlyeq}
\newcommand{\leqcurved}{\preccurlyeq}
\theoremstyle{plain}																		
\newtheorem{thm}{Theorem}
\newtheorem{lem}[thm]{\bf Lemma}
\newtheorem{obsapp}{\bf Observation}
\newtheorem{example}{Example}
\newtheorem{observation}{\bf Observation}
\theoremstyle{remark}																		
\newtheorem*{rem}{Remark}
\begin{document}

\title{Two-point measurement correlations beyond the quantum regression theorem}

\author{Leonardo S. V. Santos}
\email{leonardo.svsantos@student.uni-siegen.de}
\author{Otfried G\"{u}hne}
\email{otfried.guehne@uni-siegen.de}
\author{Stefan Nimmrichter}
\email{stefan.nimmrichter@uni-siegen.de}

\date{\today}

\affiliation{Naturwissenschaftlich-Technische Fakultät, Universität Siegen, Walter-Flex-Straße 3, 57068 Siegen, Germany}

\begin{abstract}
Temporal correlations are fundamental in quantum physics, yet their computation is often challenging. The regression theorem (or hypothesis) serves as a key tool in this context, offering a seemingly straightforward approach. 
However, it fails for systems strongly coupled to their surroundings, where memory effects become significant. Here, we extend the analysis of temporal correlations beyond the regression theorem, revealing what can be learned about open quantum systems when this hypothesis fails.
We introduce robust, operationally meaningful methods to explore how the breakdown of the regression hypothesis can uncover fundamental quantum features of non-Markovian open systems, including entanglement, coherence, and quantum memory, namely, the fundamental impossibility of simulating memory in non-Markov processes with classical feedback mechanisms. Finally, we demonstrate how these quantum features are linked to microscopic properties such as the bath spectral density and heat flow.
\end{abstract}
\maketitle

To learn about a quantum system, we must measure it and analyze the resulting data. The data acquisition frequently takes the form of a classical stochastic process, either as a continuous current or a discrete time series. Most information about a stochastic process is not found in its average value, but in the temporal fluctuations around it \cite{Gardiner1985}. For example, diffusion in complex systems is often characterized by two-time correlations through the Hurst exponent, which reveals anomalous diffusion, fractality, and complexity \cite{Mandelbrot1968}. In quantum mechanics, two-time correlations inform about Rabi oscillations \cite{Vijay2012}, relaxation rates, and equilibration in both open and isolated systems \cite{Alhambra2020,Dowling2023a,Dowling2023b}, and form the basis of Glauber’s theory of optical coherence \cite{Glauber1963}, among other phenomena. 

Computing two-time correlations in open quantum systems is challenging. In an optimistic scenario, one obtains a master equation governing the system's state evolution, from which expected values at individual points in time can be computed. The additional regression hypothesis (often called ``theorem'') posits that the evolution between consecutive times follows the same dynamical law as the system's evolution in the absence of measurement \cite{Lax68}, enabling the calculation of two- and multi-time correlations for a wide range of applications \cite{GardinerZoller2004,Landi2024}. However, central to the validity of the regression hypothesis is that the system couples weakly to its surroundings. More precisely, the hypothesis is formally equivalent to a strong notion of Markovianity or memorylessness \cite{Li2018}, which often fails to accurately reproduce multi-time correlations in non-Markovian processes; see Refs.~\cite{Ford1996,Alonso2005,Alonso2007,Guarnieri2014,VA17, Cosacchi2021,Kurt2021,Lonigro2022a,Lonigro2022b} for in-depth discussions.

Here we study temporal correlations from two-point measurement data on open quantum systems (see Fig.~\ref{fig:cover}), particularly when the regression hypothesis is violated. 
We categorize these correlations into two groups: those that can be represented as statistical mixtures of correlations satisfying the regression hypothesis, and those that cannot. 
We show that the violation of this extension of the regression hypothesis signals non-classicality by simultaneously revealing system-environment entanglement, quantum coherence and quantum memory effects. The latter refers to the impossibility of simulating the environment's memory using classical feedback mechanisms alone, a phenomenon that has been intensively studied in recent years \cite{GC21,MSXPMG21,NQAGMV21,Taranto2024,Roy2024,BBS24,Santos2024,Ohst2024,Goswami2024,Foligno2023,Carignano2024,Yu2025}.

In the framework of high-order quantum operations, quantum memory in non-Markov processes corresponds to an instance of separability problem with marginal constraints \cite{GC21,NQAGMV21}. Using this fact, here we propose a comprehensive method for witnessing and detecting non-classical temporal correlations through what we call ``entanglement retrievers''. They are defined operationally in terms of a channel discrimination task which involves memory-assisted quantum measurements (often called ``testers'') in order to ``recover'' or ``swap'' system-environment entanglement which is used as a resource. The corresponding non-classicality witnesses are then obtained from a hierarchy of semidefinite programs. 
A recent related approach employs a semidefinite program-based non-classicality criterion to construct similar witnesses \cite{Yu2025}.

\begin{figure}
    \centering
    \includegraphics[width=1\linewidth]{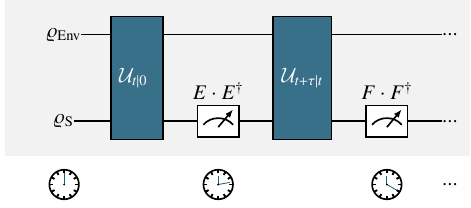}
    \caption{\textbf{Quantum two-point measurement process.} An open quantum system, initially prepared in a product state between the system and environment, is probed at successive points in time with arbitrary quantum instruments. We focus on the two-point measurement scenario, wherein the system is probed at times $t$ and $t+\tau$. In the considered scenario, all empirically accessible information can be expressed through correlations between operators $E$ and $F$; see Eq.~\eqref{eq:TPM}.}
    \label{fig:cover}
\end{figure}

Our quantum memory witnesses are, in a certain sense, complete and are not ``mere entanglement witnesses'' obtained through external approximations to the set of classical processes \cite{GC21}, nor do they involve calculation of complicated convex roofs and floors \cite{BBS24}. Furthermore, we show that these witnesses have an operational meaning in terms of quantum information backflow, taking into account the dynamic nature of the underlying processes---an aspect often overlooked in prior studies. Crucially, any witness that we derive for a given process tensor can be cast into an experimentally accessible two-time correlation functions between system operators in the Heisenberg picture, facilitating quantum memory characterization in practice, even when the process tensor is unknown. As an example, we construct a witness for the spin-boson model and demonstrate its capacity to detect quantumness of a memory. The witness admits a physical interpretation: classical temporal correlations impose a bound on the heat exchange rate between the system and the bath---a violation signals non-classicality.

\section{Two-point measurement correlations and quantum regression hypothesis}

Consider a quantum system $\rm S$ that interacts with uncontrolled degrees of freedom of an environment (Env). In the absence of interventions, the combined system $\rm S+Env$ is assumed to be isolated and undergoes unitary time evolution, described by operators $U(t|s)$ which result from time-ordered exponentiation of the total Hamiltonian from a time $s$ to a future time $t$. We denote by $\mathcal{U}_{t|s}[\cdot]=U(t|s)\cdot U(t|s)^\dagger$ the associated map that evolves density operators $\varrho_{\rm SEnv}(s)$ of the total system. The reduced system state at time $t$ will be $\varrho_{\rm S}(t)=\tr_{\rm Env} \hsp\mathcal{U}_{t|s}[\varrho_{\rm SEnv}(s)]$, where $\tr_{\rm Env}$ denotes partial trace over the environment. 

The scenario we envision involves two-point measurements (TPM), where the system $\rm S$ is locally probed at two successive points in time, $t$ and $t+\tau$, with \emph{arbitrary} quantum instruments (see Fig.~\ref{fig:cover}). TPM correlations can expressed as
\begin{equation}\label{eq:TPM}
\mathfrak{g}^{\textrm{\tiny$(2)$}}(E,F;t,\tau)=\langle E(t)^\dagger F(t+\tau)^\dagger F(t+\tau)E(t)\rangle,
\end{equation}
which equals the joint probability of observing the outcomes corresponding to the operators $E$ and $F$ at the times $t$ and $t+\tau$, respectively, with $E^\dagger E,F^\dagger F \leq \mathrm{id}$. Explicit time dependence indicates operators in the Heisenberg picture, and $\langle \cdot \rangle$ denotes expected value with respect to an initial product state for the total system, $\varrho_{\rm SEnv}(0) = \varrho_{\rm S} \otimes \varrho_{\rm Env}$. For example, if $E,F \propto b$, the Bosonic annihilation operator, then $\mathfrak{g}^{\textrm{\tiny$(2)$}}$ is proportional to the second-order coherence function \cite{GardinerZoller2004}.

Under the assumption of memoryless (or Markovian) evolution, the regression hypothesis offers a powerful simplification to calculate TPM correlations. Formally, a TPM process satisfies the regression hypothesis at times $t$ and $t+\tau$ if there exists a completely-positive (CP) and unital map $\hat{\Phi}_{t+\tau|t}$ [i.e., $\hat{\Phi}_{t+\tau|t}(\mathrm{id})=\mathrm{id}$, representing an \emph{arbitrary} channel in the Heisenberg picture] such that 
\begin{equation}\label{eq:QRT}
\mathfrak{g}^{\textrm{\tiny$(2)$}}(E,F;t,\tau)=\tr[E^\dagger\hat{\Phi}_{t+\tau|t}(F^\dagger F) \hsp E\hsp\varrho_{\rm S}(t)], \quad \forall \, E,F .
\end{equation}
In most cases where this hypothesis is invoked, the evolution of the system is described by a quantum dynamical semigroup, $\hat{\Phi}_{t+\tau|t}=\ee^{\tau\mathcal{L}^\dagger}$, with $\mathcal{L}$ the Lindblad generator \cite{Lindblad76,GKS76},
\begin{equation}\label{eq:Lindblad}
\mathcal{L} X=-\frac{\ii}{\hbar}[H_{\rm S},X]+\sum_{k}\gamma_k\frac{[J_kX,J_k^\dagger]+[J_k, XJ_k^\dagger]}{2},
\end{equation}
$J_k$ being the jump operator for the $k$th channel with corresponding decay rate $\gamma_k$, and $H_{\rm S}$ the (possibly Lamb-shifted) system Hamiltonian. 
More generally, the map $\hat{\Phi}_{t+\tau|t}$ may exhibit explicit time dependence, as in (non-)adiabatic Markovian master equations \cite{Albash2012,Yamaguchi2017,Dann2018,DiMeglio2024}.

The regression hypothesis requires a more stringent notion of memorylessness than those typically associated with quantum Markov processes in the literature, such as CP-divisilibity \cite{RHP14}. In fact, a finite-dimensional TPM process satisfies Eq.~\eqref{eq:QRT} for all $E$ and $F$ if and only if it is indistinguishable from (i.e., yields the same statistics as) a process where the system evolves through the application of \emph{independent} noisy quantum channels acting on the system \cite{Pollock2018a,Pollock2018b,Li2018,MM21}. 
In other words, on the timescale the system is interrogated, the environment’s effect is limited to adding noise to the system's evolution without retaining any memory.

A natural extension of the regression hypothesis consists of TPM correlations arising from statistical mixtures of elements that satisfy Eq.~\eqref{eq:QRT},
\begin{equation}\label{eq:GQRT}
\mathfrak{g}^{\textrm{\tiny$(2)$}}(E,F;t,\tau)=\int \mathfrak{g}^{\textrm{\tiny$(2)$}}_\lambda(E,F;t,\tau) \hsp \dd \omega_\lambda,
\end{equation}
where $\omega$ is a probability measure and $\mathfrak{g}^{\textrm{\tiny$(2)$}}_\lambda(E,F;t,\tau)$ satisfies Eq.~\eqref{eq:QRT} for all $\lambda$. As we will discuss below, this convex extension of the regression hypothesis encompasses open system dynamics with strong memory effects. Yet, it can be violated in some regimes, which, as we will argue, is an indication of non-classicality.

\section{Process matrices and testers}

The structure of TPM processes is most clearly represented with help of the state-channel duality provided by the Choi-Jamio{\l}kowki (CJ) isomorphism \cite{Choi75,J72} 
In this approach, we treat $\rm S$ as distinct systems with Hilbert spaces $\mathcal{H}_{\rm A}, \mathcal{H}_{\rm B}, \mathcal{H}_{\rm C}$ at different moments in time: we denote $\rm S$ as $\rm A$ at time $t$ before measuring, as $\rm B$ immediately after, and as $\rm C$ at time $t+\tau$ (see \hyperlink{Heisenberg_to_CJ}{Appendix A} for details). Eq.~\eqref{eq:TPM} becomes
\begin{equation}\label{eq:TPM2}
\mathfrak{g}^{\textrm{\tiny(2)}}(E,F;t,\tau)=\tr[(M_E\otimes F^\dagger F) W(t,\tau)].
\end{equation}
with $M_E$ the CJ representation
of the map $\mathcal{M}_{E}\cdot =E \cdot E^\dagger$ and $W(t,\tau)$ the so-called process matrix, 
\begin{equation}\label{eq:link_product}
W(t,\tau)^\mathrm{T}=\tr_{\rm Env}\{[\varrho_{\rm AEnv}(t)^{\TT_{\rm Env}}\otimes \mathrm{id}_{\rm BC}](\mathrm{id}_{\rm A}\otimes J_{t+\tau|t})\}.
\end{equation}
Here, $\mathrm{T}$ means the transposition and $\TT_{\rm Env}$ denotes a partial transposition over the environment, and $J_{t+\tau|t}$ is obtained from the CJ representation of the unitary map $\mathcal{U}_{t+\tau|t}$ by tracing out the environment at $t+\tau$.

The process matrix $W(t,\tau)$ is the analogue of the density operator for the TPM scenario, as it completely determines the measurement statistics and can be reconstructed via (process) tomography \cite{White2022}. 
Every TPM process matrix must be positive ($W\geqcurved 0$), normalized ($\tr W=\dim\mathcal{H}_{\rm B}$), and additionally satisfy $(\dim \mathcal{H}_{\rm B})\tr_{\rm C}\hsp W=\tr_{\rm BC}\hsp W\otimes \mathrm{id}_{\rm B}$ \cite{CDP08a,CDP08b,CDP09}. The first two conditions ensure valid probabilities while the third prohibits a retrocausal influence of an intervention at time $t+\tau$ on the statistics at time $t$.

Measurements can also be generalized to the level of TPM processes, giving rise to the notion of testers. These are protocols that probe the system at times $t$ and $t+\tau$, aided by an auxiliary memory system. Mathematically, a tester with outcomes in a finite alphabet $\mathcal{X}$ is defined as a family of operators $\{E_x\}_{x\in\mathcal{X}}$ such that $\tr(E_x W)$ gives the probability of obtaining outcome $x$ for a TPM process $W$. Positivity and normalization of probabilities demand that
the operators obey $E_x \geqcurved 0\hspace{0.15cm}\forall x\in\mathcal{X}$, $\sum_{x\in\mathcal{X}}E_x =(\dim \mathcal{H}_{\rm C})^{-1}\tr_{\rm C}\sum_{x\in\mathcal{X}}\hsp E_x\otimes \mathrm{id}$, and $(\dim \mathcal{H}_{\rm C})^{-1}\tr_{\rm BC}\hsp \sum_{x\in\mathcal{X}}E_x=\mathrm{id}$. 

\section{Results}

The primary advantage of formulating TPM correlations in the process matrix framework is that, in Eq.~\eqref{eq:TPM2}, the measurement operators are ``decoupled'' from the process, in contrast to Eq.~\eqref{eq:TPM}. Therefore, the breakdown of the decomposition in Eq.~\eqref{eq:GQRT} can be rephrased solely in terms of the process. This allows us to formulate our first main result (see \hyperlink{Proof_Obs1}{Appendix B} for the formal proof).

\begin{observation}\hypertarget{Observation1}{ }
TPM correlations satisfy the generalized quantum regression formula [Eq.~\eqref{eq:GQRT}] for all $E$ and $F$ if and only if the TPM process matrix can be written as
\begin{equation}\label{eq:CM}
W=\int \varrho_\lambda\otimes N_\lambda\hsp \dd \omega_\lambda,
\end{equation}
where $\omega$ is a probability measure, $\varrho_{\lambda}$ are density operators that average to (with respect to $\omega$) the actual reduced system state at $t$, and $N_{\lambda}$ are CJ representations of channels transforming system states from $t$ to $t+\tau$; that is $N_\lambda\geqcurved 0$ and $\tr_{\rm C}\hsp N_\lambda=\mathrm{id}_{\rm B}$ for all $\lambda$. Two sufficient conditions for a TPM process matrix to satisfy Eq.~\eqref{eq:CM} are: 
\begin{itemize}
    \item[(i)]\hypertarget{i}{ } the system-environment state $\varrho_{\rm SEnv}(t)$ is separable or
    \item[(ii)]\hypertarget{ii}{ } there exists an orthonormal basis $\{\ket{\mu}\}$ for the Hilbert space of the environment such that the maps
    \begin{equation}
    \tr_{\rm Env}\left[\mathcal{U}_{t+\tau|t}(\cdot \otimes \ket{\mu}\bra{\nu}) \right] \equiv 0 \quad \text{for} \quad \mu\neq\nu.
    \end{equation}
\end{itemize}
\end{observation}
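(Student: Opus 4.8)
The plan is to transfer the whole statement into the process-matrix picture of Eqs.~\eqref{eq:TPM2}--\eqref{eq:link_product} and use two preliminary facts. First, the linear map that sends a TPM process matrix $W$ to its family of correlations through Eq.~\eqref{eq:TPM2} is \emph{injective}: since $F$ may be rescaled freely, $F^\dagger F$ sweeps all positive operators on $\mathcal{H}_{\rm C}$ and hence spans $\Lin\mathcal{H}_{\rm C}$, while the $M_E$ are rank-one Choi operators of $\,\cdot\mapsto E\cdot E^\dagger$ and span $\Lin(\mathcal{H}_{\rm A}\otimes\mathcal{H}_{\rm B})$ by polarization, so the $M_E\otimes F^\dagger F$ span the full operator space and $W$ is determined by its correlations. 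Second, I will invoke the characterization cited above \cite{Pollock2018a,Pollock2018b,Li2018,MM21}: a correlation equals the right-hand side of Eq.~\eqref{eq:QRT} for all $E,F$, for some state $\varrho$ and CP--unital $\hat\Phi$, if and only if it is the correlation of the product process matrix $W=\varrho\otimes N$, with $N$ the Choi operator of the predual of $\hat\Phi$ (so $N\geqcurved 0$, $\tr_{\rm C}N=\mathrm{id}_{\rm B}$); its easy half, product $\Rightarrow$ regression, is a direct unwinding of the Choi--Jamio{\l}kowski dictionary of Appendix~A.

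Granted these, the equivalence is short. For ``$\Leftarrow$'', inserting Eq.~\eqref{eq:CM} into Eq.~\eqref{eq:TPM2} and using linearity gives $\mathfrak{g}^{\textrm{\tiny$(2)$}}=\int\mathfrak{g}^{\textrm{\tiny$(2)$}}_\lambda\,\dd\omega_\lambda$ with $\mathfrak{g}^{\textrm{\tiny$(2)$}}_\lambda$ the correlation of $\varrho_\lambda\otimes N_\lambda$, which obeys Eq.~\eqref{eq:QRT}. For ``$\Rightarrow$'', each $\mathfrak{g}^{\textrm{\tiny$(2)$}}_\lambda$ in the assumed decomposition has the form of the right-hand side of Eq.~\eqref{eq:QRT} for some $\varrho_\lambda$ and CP--unital $\hat\Phi_\lambda$, hence by the cited fact it is the correlation of a product matrix $W_\lambda=\varrho_\lambda\otimes N_\lambda$; setting $W'=\int W_\lambda\,\dd\omega_\lambda$ one has $\tr[(M_E\otimes F^\dagger F)W']=\mathfrak{g}^{\textrm{\tiny$(2)$}}=\tr[(M_E\otimes F^\dagger F)W]$ for all $E,F$, so $W=W'$ by injectivity. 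Taking $\tr_{\rm BC}$ of Eq.~\eqref{eq:CM}, using $\tr_{\rm BC}(\varrho_\lambda\otimes N_\lambda)=(\dim\mathcal{H}_{\rm B})\varrho_\lambda$ and the fixed A-marginal of any TPM process matrix, yields $\int\varrho_\lambda\,\dd\omega_\lambda=\varrho_{\rm S}(t)$ in the conventions of Appendix~A; in finite dimension $\omega$ may be taken discrete throughout.

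For sufficiency of (i), I would fix any separable decomposition $\varrho_{\rm SEnv}(t)=\sum_\lambda p_\lambda\,\varrho^\lambda_{\rm S}\otimes\varrho^\lambda_{\rm Env}$ and substitute it into Eq.~\eqref{eq:link_product}. The crucial point is that the partial transpose acts termwise, $(\varrho^\lambda_{\rm S}\otimes\varrho^\lambda_{\rm Env})^{\TT_{\rm Env}}=\varrho^\lambda_{\rm S}\otimes(\varrho^\lambda_{\rm Env})^{\mathrm T}$, with $(\varrho^\lambda_{\rm Env})^{\mathrm T}\geqcurved 0$ because transposition preserves positivity of a single party; contracting this positive operator against the positive $J_{t+\tau|t}$ (a Choi operator of a CP map) produces a positive $N_\lambda$, and trace preservation of the global unitary together with $\tr\varrho^\lambda_{\rm Env}=1$ gives $\tr_{\rm C}N_\lambda=\mathrm{id}_{\rm B}$. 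Transposing back, $W=\sum_\lambda p_\lambda\,\varrho_\lambda\otimes N_\lambda$ with $\varrho_\lambda\leftrightarrow\varrho^\lambda_{\rm S}$ and $\sum_\lambda p_\lambda\varrho^\lambda_{\rm S}=\varrho_{\rm S}(t)$, i.e.\ Eq.~\eqref{eq:CM}. Separability is precisely what keeps the partial transpose a positive combination of product operators; for entangled $\varrho_{\rm SEnv}(t)$ this step fails.

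For sufficiency of (ii), expand $\varrho_{\rm SEnv}(t)=\sum_{\mu\nu}\sigma_{\mu\nu}\otimes\ket{\mu}\bra{\nu}$ in the given basis, with $\sigma_{\mu\mu}=\bra{\mu}\varrho_{\rm SEnv}(t)\ket{\mu}\geqcurved 0$ and $\sum_\mu\sigma_{\mu\mu}=\varrho_{\rm S}(t)$. In the environment contraction of Eq.~\eqref{eq:link_product}, the $(\mu,\nu)$ block of $\varrho_{\rm SEnv}(t)$ pairs with the map $\tr_{\rm Env}[\mathcal{U}_{t+\tau|t}(\cdot\otimes\ket{\mu}\bra{\nu})]$ carried by $J_{t+\tau|t}$, which vanishes for $\mu\neq\nu$ by hypothesis, so only diagonal blocks survive and $W=\sum_\mu\sigma_{\mu\mu}\otimes N_\mu$ with $N_\mu$ the Choi operator of the genuine channel $\rho\mapsto\tr_{\rm Env}[\mathcal{U}_{t+\tau|t}(\rho\otimes\ket{\mu}\bra{\mu})]$, hence $N_\mu\geqcurved 0$ and $\tr_{\rm C}N_\mu=\mathrm{id}_{\rm B}$; with $p_\mu=\tr\sigma_{\mu\mu}$ and $\varrho_\mu=\sigma_{\mu\mu}/p_\mu$ this is Eq.~\eqref{eq:CM}. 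Equivalently, under (ii) the map producing $W$ only sees the dephased state $\sum_\mu\sigma_{\mu\mu}\otimes\ket{\mu}\bra{\mu}$, which is separable, reducing (ii) to (i). The step I expect to be most delicate is the ``$\Rightarrow$'' direction: promoting a convex decomposition of the \emph{scalar} correlations [Eq.~\eqref{eq:GQRT}] to a convex decomposition of the \emph{operator} $W$ [Eq.~\eqref{eq:CM}]---a priori distinct notions that coincide only because of the tomographic-completeness/injectivity claim---while the remainder is bookkeeping with the transposition conventions of Appendix~A.
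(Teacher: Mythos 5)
Your proposal is correct and follows essentially the same route as the paper: the equivalence is obtained by reducing to the single-channel case and using that the measurement operators $M_E\otimes F^\dagger F$ span the full operator space (tomographic completeness), and the sufficiency of (i) and (ii) is obtained from the block decomposition of $W$ over environment matrix elements $\ket{\mu}\bra{\nu}$, with separability keeping the partial transpose a positive product combination and condition (ii) killing the off-diagonal blocks. The only cosmetic difference is that you work directly from the link-product expression Eq.~\eqref{eq:link_product}, whereas the paper re-derives that block decomposition from scratch via Kraus-like operators (its Lemmas~1 and~2).
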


So, the breakdown of \hyperlink{i}{(i)} \emph{and} \hyperlink{ii}{(ii)} is  necessary for a system-environment dynamics to violate the generalized quantum regression formula \eqref{eq:GQRT} in TPM processes:
the system must become entangled with the environment and coherences in \emph{all} bases of the environment must have an effect upon the system. 
Conversely, \hyperlink{ii}{(ii)} holds for strictly incoherent operations on the environment \cite{Yadin2016}, wherein the system-environment unitary can be written as 
\begin{equation}
U(t+\tau|t)=\sum_\mu V_\mu \otimes \ket{\mu}\bra{f(\mu)},
\end{equation}
Here, $V_\mu$ are arbitrary unitaries acting upon the system and $f$ is a bijection. Consequently, pure dephasing models are essentially classical \footnote{This is because any strictly incoherent operation, which includes pure dephasing, can be performed in terms of incoherent unitary operations (see Ref.~\cite{Yadin2016} for details).}, even those that exhibit strong memory effects such as the ones implemented in the single-photon experiments reported in Refs.~\cite{LLHLGLBP11,LLSLLGMP18}.

The decomposition~\eqref{eq:CM} of TPM process matrices was also obtained in Ref.~\cite{GC21} and referred to as processes with classical memory [and quantum memory (QM) otherwise]. Accordingly, we will write $W\in\mathbf{CM}$ for such processes. The characterization arises from the assumption that memory can be simulated through a \emph{classical} feedback mechanism, formally implemented by enacting measure-and-prepare (i.e., entanglement-breaking) channels upon the environment \footnote{It is therefore consistent with previous definitions of ``quantum memory'' in different contexts; cf. \cite{Nielsen2003,Chiribella2008,Lvovsky2009,Terhal2015,Rosset2018}.}. 

The convex-geometric structure of this set allows us to define operators which, akin to entanglement witnesses~\cite{Guehne2009}, provide sufficient criteria for QM. Formally, a QM witness $Z$ is any self-adjoint operator for which $W\in\mathbf{CM}$ implies $\tr(WZ)\geq 0$. Consequently, $\tr(ZW)<0$ indicates the presence of QM. However, while Eq.~\eqref{eq:CM} might seem equivalent to separability in the standard sense of a density matrix proportional to $W$ in the bipartition $\rm A:BC$, this is not the case \cite{NQAGMV21}. Each term $N_\lambda$ must not only be positive semidefinite but also satisfy the marginal constraint $\tr_{\rm C}\hsp N_\lambda=\mathrm{id}_{\rm B}$, owing to the \emph{dynamic} nature of TPM processes.

A central observation of this paper is that QM witnesses can be formulated in a complete and operationally relevant manner. To this end, we introduce ``entanglement retrievers" (ERs) as operators, $\Theta\in\mathbf{ER}$, that satisfy the semidefinite constraints (i) $\Theta\geqcurved 0$ and (ii) $ \eta \otimes \mathrm{id}_{\rm C}\geqcurved \tr_{\rm A}\hsp\Theta$ for some density operator $\eta$ on $\mathcal{H}_{\rm B}$. Physically, ERs act on the system input ($\rm A$) and output ($\rm C$) of the unitary $\mathcal{U}_{t+\tau|t}$ and lead to an output space isomorphic to $\rm A$; see Fig.~\ref{fig:ERs}.
First, we note that every ER gives rise to QM witnesses, 
see \hyperlink{Proof_Obs2}{Appendix C} for a formal proof.

\begin{observation}
If $W\in\mathbf{CM}$ then
\begin{equation}\label{eq:ER}
\mathcal{E}(W):=\max_{\Theta\in\mathbf{ER}} \tr(\Theta W)\leq 1. 
\end{equation}
Given $\Theta\in\mathbf{ER}$ and $Z_0$ positive on TPMs [i.e., $\tr(Z_0 W)\geq 0$ for all $W\in\mathbf{TPM}$], then
\begin{equation}\label{eq:QM-Wit}
Z(\Theta,Z_0)=Z_0+\kappa(\Theta,Z_0)\left(\Theta-\frac{\mathrm{id}}{\dim(\mathcal{H}_{\rm A}\otimes\mathcal{H}_{\rm B})}\right)
\end{equation}
is a QM witness, with
\begin{equation}
{\kappa(\Theta,Z_0)}=\min_{\Omega\in\mathbf{CM}}\frac{\tr(Z_0\Omega)}{(\dim\mathcal{H}_{\rm A})^{-1}-\tr(\Theta \Omega)}.
\end{equation}
Conversely, any QM witness can be brought into the form~\eqref{eq:QM-Wit}. 
\end{observation}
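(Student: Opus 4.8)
The plan is to treat the three assertions one at a time, reducing each to the defining semidefinite constraints of $\mathbf{ER}$, $\mathbf{CM}$ and $\mathbf{TPM}$. For the bound $\mathcal{E}(W)\leq 1$, I would argue straight from the decomposition~\eqref{eq:CM}: since $\Theta\mapsto\tr(\Theta W)$ is linear and $\omega$ is a probability measure, it suffices to show $\tr[\Theta(\varrho\otimes N)]\leq 1$ for every $\Theta\in\mathbf{ER}$, every state $\varrho$ on $\mathcal{H}_{\rm A}$, and every $N\geqcurved 0$ with $\tr_{\rm C}N=\mathrm{id}_{\rm B}$. Writing $\tr[\Theta(\varrho\otimes N)]=\tr_{\rm BC}\{\tr_{\rm A}[\Theta(\varrho\otimes\mathrm{id}_{\rm BC})]\hsp N\}$, one chains the two $\mathbf{ER}$ constraints: spectrally decomposing $\varrho$ and $\mathrm{id}_{\rm A}-\varrho$ and using $\Theta\geqcurved 0$ yields $0\leqcurved\tr_{\rm A}[\Theta(\varrho\otimes\mathrm{id}_{\rm BC})]\leqcurved\tr_{\rm A}\Theta$, and constraint~(ii) then gives $\tr_{\rm A}[\Theta(\varrho\otimes\mathrm{id}_{\rm BC})]\leqcurved\eta\otimes\mathrm{id}_{\rm C}$. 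Since $N\geqcurved 0$, these combine to $\tr[\Theta(\varrho\otimes N)]\leq\tr_{\rm BC}[(\eta\otimes\mathrm{id}_{\rm C})N]=\tr_{\rm B}[\eta\hsp\tr_{\rm C}N]=\tr\eta=1$, and the maximum over $\Theta$ closes this part.

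For the witness construction, I would use $\tr W=\dim\mathcal{H}_{\rm B}$ to rewrite, for any $W\in\mathbf{TPM}$, $\tr[Z(\Theta,Z_0)W]=\tr(Z_0W)+\kappa(\Theta,Z_0)\,[\tr(\Theta W)-(\dim\mathcal{H}_{\rm A})^{-1}]$, and verify positivity on $\mathbf{CM}$ by a sign split on the bracket. First, $\kappa(\Theta,Z_0)\geq 0$: the minimized ratio has numerator $\tr(Z_0\Omega)\geq 0$ (because $\Omega\in\mathbf{CM}\subseteq\mathbf{TPM}$ and $Z_0$ is positive on $\mathbf{TPM}$) over the positive denominators that genuinely contribute. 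If $\tr(\Theta W)\geq(\dim\mathcal{H}_{\rm A})^{-1}$, both terms are nonnegative. If $\tr(\Theta W)<(\dim\mathcal{H}_{\rm A})^{-1}$, then $W$ is itself an admissible competitor in the minimization, whence $\kappa(\Theta,Z_0)\leq\tr(Z_0W)/[(\dim\mathcal{H}_{\rm A})^{-1}-\tr(\Theta W)]$, which rearranges to $\tr[Z(\Theta,Z_0)W]\geq 0$. The degenerate situations are trivial: if no admissible $\Omega$ exists the tilt is already nonnegative on $\mathbf{CM}$, and the choice $\Theta=\mathrm{id}/\dim(\mathcal{H}_{\rm A}\otimes\mathcal{H}_{\rm B})$---which one checks lies in $\mathbf{ER}$ with $\eta=\mathrm{id}_{\rm B}/\dim\mathcal{H}_{\rm B}$---makes the tilt vanish.

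For the converse I would invoke conic duality. The QM witnesses form the dual cone $\mathbf{CM}^*$, and the statement is equivalent to $\mathbf{CM}^*$ being the closed Minkowski sum of the cone of operators positive on $\mathbf{TPM}$ and the conic hull of $\{\mathrm{id}/\dim\mathcal{H}_{\rm B}-\Theta:\Theta\in\mathbf{ER}\}$; by duality this is in turn equivalent to the completeness statement $\mathbf{CM}=\mathbf{TPM}\cap\bigcap_{\Theta\in\mathbf{ER}}\{W:\tr[(\mathrm{id}/\dim\mathcal{H}_{\rm B}-\Theta)W]\geq 0\}$, i.e.\ that $W\in\mathbf{TPM}$ with $\mathcal{E}(W)\leq 1$ already forces $W\in\mathbf{CM}$. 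Granting that, I would take a QM witness $Z$, extract a conic decomposition $Z=Z_0+\sum_i t_i(\mathrm{id}/\dim\mathcal{H}_{\rm B}-\Theta_i)$ with $Z_0$ positive on $\mathbf{TPM}$, $t_i\geq 0$, $\Theta_i\in\mathbf{ER}$, collapse the sum into one term $\mu(\mathrm{id}/\dim\mathcal{H}_{\rm B}-\Theta_0)$ with $\Theta_0\in\mathbf{ER}$ by convexity of $\mathbf{ER}$, and then rewrite $\mathrm{id}/\dim\mathcal{H}_{\rm B}-\Theta_0$ as the $\mathbf{TPM}$-positive operator $\mathrm{id}/\dim\mathcal{H}_{\rm B}$ plus $\kappa(\widetilde\Theta-\mathrm{id}/\dim(\mathcal{H}_{\rm A}\otimes\mathcal{H}_{\rm B}))$, where $\widetilde\Theta=\mathrm{id}/\dim(\mathcal{H}_{\rm A}\otimes\mathcal{H}_{\rm B})-\Theta_0/\kappa\in\mathbf{ER}$ for $\kappa$ large enough; the identity pieces cancel exactly, leaving $Z$ in the form~\eqref{eq:QM-Wit}. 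The multiplier is then forced to equal $\kappa(\Theta,Z_0)$ by taking an extremal such decomposition: at a contact point $\Omega^\star\in\mathbf{CM}$ with $\tr(Z\Omega^\star)=0$ the minimization defining $\kappa(\Theta,Z_0)$ is saturated, which is exactly complementary slackness for the underlying semidefinite program.

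The step I expect to be the main obstacle is the converse: establishing (or importing) the completeness of the $\mathbf{ER}$ characterization of $\mathbf{CM}$, verifying that the Minkowski sum of the two cones is already closed so no limiting argument is needed, and carrying out the bookkeeping that ties $\Theta$, $Z_0$ and the multiplier together precisely so the latter equals $\kappa(\Theta,Z_0)$ rather than merely some admissible value. The first two assertions, by contrast, are a direct chain of L\"owner inequalities once the partial traces are regrouped as above.
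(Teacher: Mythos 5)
Your first two parts are sound and essentially match the paper. The bound $\mathcal{E}(W)\leq 1$ is the same chain of L\"owner inequalities the paper uses ($\varrho_\lambda\leqcurved \mathrm{id}_{\rm A}$, then $\tr_{\rm A}\Theta\leqcurved\eta\otimes\mathrm{id}_{\rm C}$, then $\tr_{\rm C}N_\lambda=\mathrm{id}_{\rm B}$), and your sign-split verification that $Z(\Theta,Z_0)$ is nonnegative on $\mathbf{CM}$ --- noting that any $\Omega\in\mathbf{CM}$ with $\tr(\Theta\Omega)<(\dim\mathcal{H}_{\rm A})^{-1}$ is itself a feasible point of the minimization defining $\kappa(\Theta,Z_0)$ --- is in fact more explicit than what the paper writes down.

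The converse is where the proposal has a genuine gap. Your duality route requires the completeness statement that $W\in\mathbf{TPM}$ with $\mathcal{E}(W)\leq 1$ already forces $W\in\mathbf{CM}$, equivalently that $\mathbf{CM}^*$ is the closed conic sum of the $\mathbf{TPM}$-positive cone and $\mathrm{cone}\{\mathrm{id}/\dim\mathcal{H}_{\rm B}-\Theta:\Theta\in\mathbf{ER}\}$. This premise is not just hard to establish --- it is inconsistent with the paper's own discussion: $\mathcal{E}(W)$ is a single SDP bounded by the singlet fraction of $\varrho_{\rm R}(W)$ and hence blind to PPT-type violations, while membership in $\mathbf{CM}$ is a separability problem with marginal constraints requiring a full SDP hierarchy; the condition $\mathcal{E}(W)\leq 1$ is only an outer relaxation of $\mathbf{CM}$. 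The missing information is smuggled into the statement through $\kappa(\Theta,Z_0)$ itself, which is an optimization \emph{over} $\mathbf{CM}$ and is precisely what makes the characterization complete without the cone identity you need. The paper's converse avoids all of this with an elementary explicit construction: for any self-adjoint $Z$, set $Z_0=(\mathrm{id}_{\rm A}/\dim\mathcal{H}_{\rm A})\otimes\tr_{\rm A}Z$ and $\Theta=\mathrm{id}/\dim(\mathcal{H}_{\rm A}\otimes\mathcal{H}_{\rm B})-\beta(Z-Z_0)$; since $\tr_{\rm A}(Z-Z_0)=0$, one gets $\tr_{\rm A}\Theta=(\mathrm{id}_{\rm B}/\dim\mathcal{H}_{\rm B})\otimes\mathrm{id}_{\rm C}$ exactly and $\Theta\geqcurved 0$ for $|\beta|$ small, so $\Theta\in\mathbf{ER}$ and $Z=Z_0-\beta^{-1}\bigl(\Theta-\mathrm{id}/\dim(\mathcal{H}_{\rm A}\otimes\mathcal{H}_{\rm B})\bigr)$. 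Positivity of $Z_0$ on $\mathbf{TPM}$ follows because $(\mathrm{id}_{\rm A}/\dim\mathcal{H}_{\rm A})\otimes\tr_{\rm A}W\in\mathbf{CM}$ for every $W\in\mathbf{TPM}$ and $Z$ is a QM witness; the coefficient is then pushed to the optimal value $\kappa(\Theta,Z_0)$, which by your own part-two argument is the largest admissible one. You should replace the duality argument by this construction.
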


This result provides both a complete characterization of QM witnesses in terms of ERs and a simple method to check whether a given TPM process matrix $W$ exhibits QM: it does if $\mathcal{E}(W)>1$. Computing $\mathcal{E}(W)$ amounts to solving a semidefinite program (SDP), which can be done efficiently \cite{SC2023}. Furthermore, given an ER $\Theta_*$ obtained from the SDP of Eq.~\eqref{eq:ER} from some $W\in\mathbf{TPM}$, the classical memory threshold can be reduced from one to
\begin{equation}
\lambda_*(\Theta_*)=\max_{\Omega\in\mathbf{CM}}\tr(\Theta_* \Omega),
\end{equation}
which, in turn, is equivalent to computing $\kappa(Z_0,\Theta_*)$ when $Z_0$ satisfies $\tr(Z_0 W)=1$ for all $W\in \mathbf{TPM}$; see \hyperlink{Proof_Obs1}{Appendix B} for details. This maximization is no longer a SDP, but an instance of constrained bilinear optimization \cite{Berta2016}, which can be estimated to any required precision using the hierarchy of SDPs of Ref.~\cite{Berta2021}. Still, it can be expected to be of a similar complexity as the standard separability problem.

To see the role entanglement plays in defining ERs, we note that the channel-state duality allows us to write $\mathcal{E}(W)=(\dim\mathcal{H}_{\rm A})\hsp\bra{\Phi^+}\hsp\varrho_{\rm R}(W)\hsp\ket{\Phi^+}$, where $\ket{\Phi^+}\in\mathcal{H}_{\rm A}^{\otimes 2}$ is the normalized Bell state, $\varrho_{\rm R}(W) = \Theta_* \star W$
and $\star$ denotes the link product \footnote{The link product between two operators, say $X$ and $Y$ on Hilbert spaces $\mathcal{H}_{\rm A}\otimes \mathcal{H}_{\rm B}$ and $\mathcal{H}_{\rm B}\otimes \mathcal{H}_{\rm C}$ respectively, is defined as $$X\star Y=\tr_{\rm B}\big[\big(X^{\Gamma_{\rm B}}\otimes \mathrm{id}_{\rm C}\big)\big(\mathrm{id}_{\rm A}\otimes Y\big)\big].$$ It can be understood as a simple way to translate composition between channels in the CJ picture; see Ref.~\cite{CDP09} for details.}. The (non-normalized) bipartite state $\varrho_{\rm R}(W)$ can thus be seen as the most entangled state that one can create from the backflow of quantum information in a non-Markovian process with QM. In fact, notice that $\mathcal{E}(W)$ is upper-bounded by the singlet fraction of $\varrho_{\rm R}(W)$, a proper entanglement measure in bipartite systems \cite{Guehne2009}. It is upper-bounded by the state's Schmidt rank \cite{Terhal2000}, which, in turn, allows us to define a notion of QM \emph{dimension}: $\mathcal{E}(W)>d\geq 1$  means that the environment cannot be simulated by a $d$-dimensional quantum system, where classical memory corresponds to $d=1$. In addition, the connection to the singlet fraction
implies that $\varrho_{\rm R}(W)$ can only be detected by
an entanglement retriever, if it violates the criterion
of the positivity of the partial transpose.

Given a QM witness, it is natural to ask how to measure it by interrogating the system state at the two times $t$ and $t+\tau$. We provide two approaches; one based on a tester, which may require coupling to an auxiliary memory system, and one based on conventional TPM. 

The first of them gives the operational meaning to ERs, which also allows the certification of the presence of QM in a calibration-robust way \cite{Moroder2012}; 
that is, without requiring trust in the experimental apparatus for certification apart from knowing the system dimension. To do this, consider the following task: 
at time $t$, Alice uniformly samples a letter $x\in\mathcal{X}$ from an alphabet of size $|\mathcal{X}|=(\dim \mathcal{H}_{\rm A})^2$ and encodes it deterministically in the system $\rm S$ by a \emph{unital} CPTP map $\mathcal{E}_{x}$, whose output dimension does not exceed that of the input. Bob then tries to decode it using a tester, see Fig.~\ref{fig:ERs}. We denote $\mathcal{S}=(\{\mathcal{E}_x\}_{x\in\mathcal{X}},\{E_y\}_{y\in\mathcal{X}\cup \emptyset})$, where $E_x$ denotes the tester effect corresponding to input $\mathcal{E}_x$ and $E_{\emptyset}$ corresponds to a potentially inconclusive event. The set of all strategies is denoted by $\mathbf{Strat}$. Then, as shown in \hyperlink{AppendixD}{Appendix D}, one has:

\begin{observation}
Given a TPM process $W$, the optimal 
average probability of success is characterized by
\begin{equation}
\max_{\mathcal{S}\in\mathbf{Strat}}
\frac{1}{|\mathcal{X}|}\sum_{x\in\mathcal{X}}\mathrm{Pr}(y=x|x,\mathcal{S})
= \frac{\mathcal{E}(W)}{\dim \mathcal{H}_{\rm A}}
\end{equation}
where $\mathrm{Pr}(y=x|x,\mathcal{S})$ reads as the probability of correctly decoding input $x$ in a strategy $\mathcal{S}$.
\end{observation}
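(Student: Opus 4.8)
The plan is to reduce the maximization over strategies to the semidefinite program defining $\mathcal{E}(W)$ in Eq.~\eqref{eq:ER}, by showing that strategies and entanglement retrievers are two descriptions of the same object, up to the normalization factor $\dim\mathcal{H}_{\rm A}$.

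First I would express the success probability of a fixed strategy $\mathcal{S}=(\{\mathcal{E}_x\},\{E_y\})$ as a trace against $W$: plugging the deterministic encoding $\mathcal{E}_x$ (with CJ operator $M_x$ on $\mathcal{H}_{\rm A}\otimes\mathcal{H}_{\rm B}$) into the time-$t$ slot of the process and contracting with the tester effect via the link product yields $\mathrm{Pr}(y=x\mid x,\mathcal{S})=\tr[\Lambda_x W]$ for an operator $\Lambda_x\geqcurved 0$ on $\mathcal{H}_{\rm A}\otimes\mathcal{H}_{\rm B}\otimes\mathcal{H}_{\rm C}$, whose positivity follows from complete positivity of $\mathcal{E}_x$ and positivity of the tester effect. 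Averaging gives $\tfrac1{|\mathcal{X}|}\sum_x\mathrm{Pr}(y=x\mid x,\mathcal{S})=\tfrac1{\dim\mathcal{H}_{\rm A}}\tr[\Theta_{\mathcal{S}}W]$ with $\Theta_{\mathcal{S}}:=\tfrac{\dim\mathcal{H}_{\rm A}}{|\mathcal{X}|}\sum_x\Lambda_x$. The key step is then to verify $\Theta_{\mathcal{S}}\in\mathbf{ER}$: positivity is immediate, while the constraint $\eta\otimes\mathrm{id}_{\rm C}\geqcurved\tr_{\rm A}\Theta_{\mathcal{S}}$ follows by combining (a) the unitality normalization and the output-dimension bound on the $\mathcal{E}_x$, which fix $\tr_{\rm B}M_x$, (b) the tester normalization constraints, and (c) the cardinality $|\mathcal{X}|=(\dim\mathcal{H}_{\rm A})^2$, which is exactly what makes the uniform average over the encodings generate the $\mathrm{id}_{\rm C}$ factor; the inconclusive outcome $E_\emptyset$ is precisely what turns the tester's equality normalization into the required inequality. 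This gives ``$\leq$'', since $\tfrac1{\dim\mathcal{H}_{\rm A}}\tr[\Theta_{\mathcal{S}}W]\leq\tfrac1{\dim\mathcal{H}_{\rm A}}\max_{\Theta\in\mathbf{ER}}\tr[\Theta W]=\mathcal{E}(W)/\dim\mathcal{H}_{\rm A}$.

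For the converse I would take an optimizer $\Theta_*$ of $\mathcal{E}(W)=\max_{\Theta\in\mathbf{ER}}\tr(\Theta W)$ and construct a strategy realizing it. Use the $(\dim\mathcal{H}_{\rm A})^2$ generalized-Pauli (Heisenberg--Weyl) unitaries $\{\sigma_x\}$ to define the encodings $\mathcal{E}_x(\cdot)=\sigma_x(\cdot)\sigma_x^\dagger$ --- manifestly unital and dimension-preserving, hence admissible --- and define the decoding effects $\{E_x\}$ by the conjugate twirl of $\Theta_*$ along these unitaries, using that $\tfrac1{(\dim\mathcal{H}_{\rm A})^2}\sum_x(\sigma_x\otimes\sigma_x^*)(\cdot)(\sigma_x\otimes\sigma_x^*)^\dagger$ projects onto the Bell-diagonal part, so that averaging the $\Lambda_x$ reproduces $\Theta_*$ exactly. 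Feasibility of $\{E_x\}$ as part of a tester is inherited from the ER constraints: $E_x\geqcurved 0$ from $\Theta_*\geqcurved 0$, and the slack in $\eta\otimes\mathrm{id}_{\rm C}\geqcurved\tr_{\rm A}\Theta_*$ defines a valid inconclusive effect $E_\emptyset\geqcurved 0$ completing the tester's normalization. Then $\tfrac1{|\mathcal{X}|}\sum_x\mathrm{Pr}(y=x\mid x)=\tfrac1{\dim\mathcal{H}_{\rm A}}\tr[\Theta_*W]=\mathcal{E}(W)/\dim\mathcal{H}_{\rm A}$, establishing ``$\geq$'' and hence equality. Physically this is a dense-coding protocol in which the non-Markovian backflow plays the role of the shared entanglement.

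The main obstacle is the achievability step: one must show that \emph{every} entanglement retriever can be synthesized as a uniform average over exactly $(\dim\mathcal{H}_{\rm A})^2$ pairs (unital encoding, tester effect) --- a teleportation/dense-coding-flavored decomposition in which the cardinality $|\mathcal{X}|=(\dim\mathcal{H}_{\rm A})^2$ is forced rather than arbitrary --- and that the retriever's inequality $\eta\otimes\mathrm{id}_{\rm C}\geqcurved\tr_{\rm A}\Theta_*$ lines up with the tester normalization once the inconclusive outcome is reserved. The rest is bookkeeping: tracking the factors of $\dim\mathcal{H}_{\rm A}$, $\dim\mathcal{H}_{\rm C}$, and the partial transposes in the link product so that both optimizations land on the same value; these are all determined by the constructions above but must be checked with care.
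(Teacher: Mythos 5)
Your proposal is correct and follows essentially the same route as the paper's Appendix D: the forward inequality by assembling $\Theta_{\mathcal{S}}=\frac{1}{\dim\mathcal{H}_{\rm A}}\sum_x M_x\star E_x$ from an arbitrary strategy and checking the $\mathbf{ER}$ constraints via unitality of the encodings, the tester marginals, and $|\mathcal{X}|\geq(\dim\mathcal{H}_{\rm A})^2$; and the converse by a dense-coding construction with the $(\dim\mathcal{H}_{\rm A})^2$ Heisenberg--Weyl encodings and decoding effects obtained by contracting the optimal $\Theta_*$ with the Bell-state measurement. One small correction: the average $\frac{1}{|\mathcal{X}|}\sum_x\Lambda_x$ returns $\Theta_*$ because each encoding unitary is undone term by term by its matched decoding effect (the paper's $F_x\star G_x$ cancellation), not because the twirl projects onto the Bell-diagonal sector --- $\Theta_*$ need not be invariant under that twirl, so the projection argument as you state it would not close, although the construction itself is sound.
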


\begin{figure}
    \centering
    \includegraphics[width=1\linewidth]{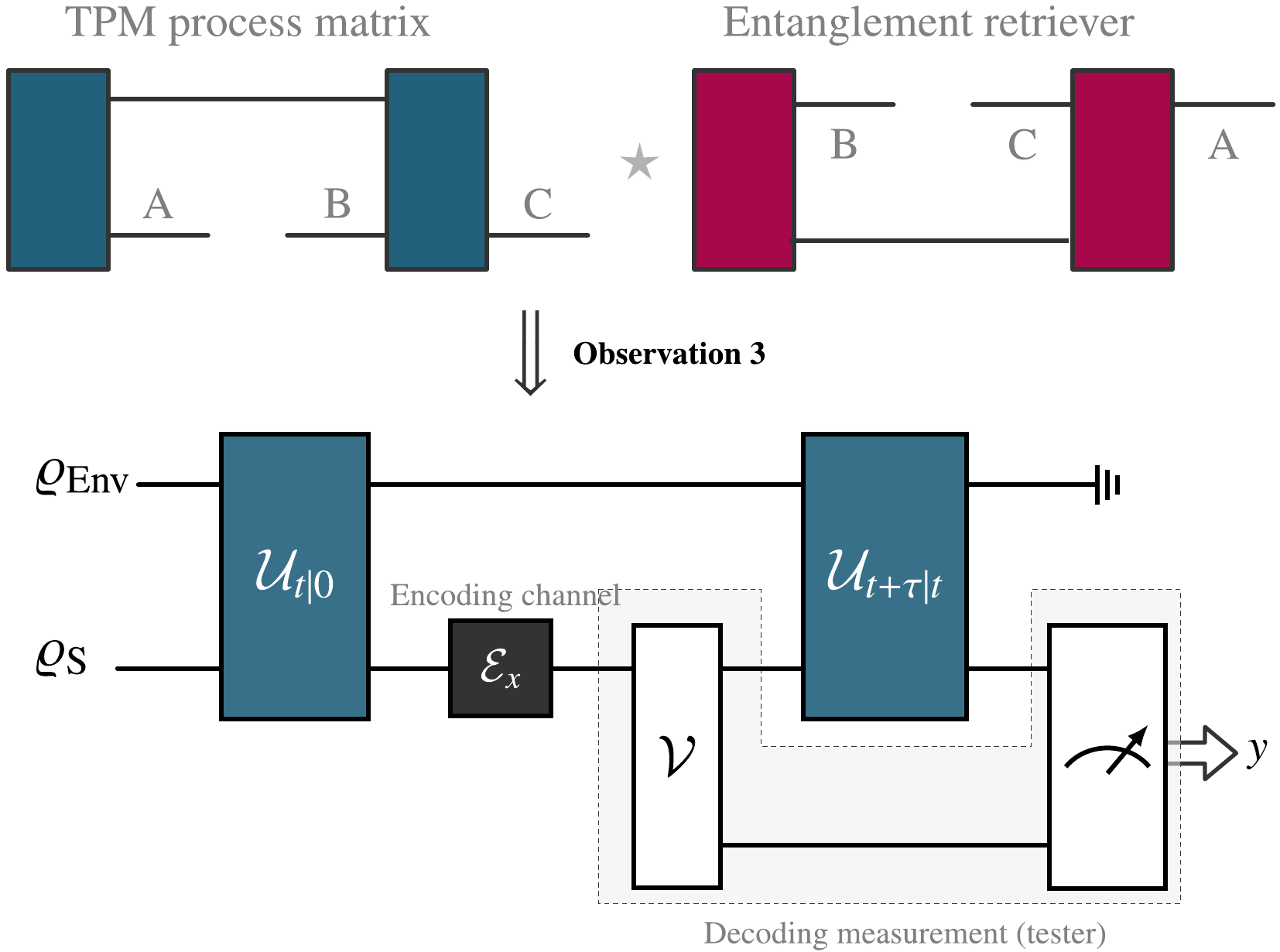}
    \caption{\textbf{Operational interpretation of ERs.} TPM process matrices and ERs correspond to high-order quantum operations, which can be represented as ``combs''. Their contraction ``$\star$'', $\Theta\star W=\tr(\Theta^\mathrm{T}\, W)$, is interpreted in terms of a superdense coding like protocol: an agent encodes $x\in\mathcal{X}$ with a unital channel $\mathcal{E}_x$, and another agent tries to decode it by measuring the system with a high-order measurement (tester), defined by a pre-processing operation $\mathcal{V}$, an auxiliary memory system, and a final joint measurement (POVM), which yields outcome $y$. }
    \label{fig:ERs}
\end{figure}

The second approach concerns a typical open system setting for which the process matrix $W$ is unknown. In this case, one may still be able to detect QM by measuring a suitable witness $Z$ by combining given TPM correlations. Any possible QM witness can 
be decomposed as a linear combination of TPM correlations, a fact which resembles the decomposition of entanglement witnesses into local measurements \cite{Guehne2009}.

\begin{observation}\hypertarget{Observation4}{ }
The action of a QM witness in the Heisenberg picture can always be written as
\begin{equation}\label{eq:memory_Heisenberg}
\tr[Z W(t,\tau)]=\sum_{ij}d_{ij}(Z)\hsp\mathfrak{g}^{\textrm{\tiny$(2)$}}(E_i,F_j;t,\tau),
\end{equation}
where $d_{ij}(Z)$ are real coefficients, the operators $E_i$ and $F_j$ are fixed, i.e., at the initial time they do not depend on the process nor the witness.
\end{observation}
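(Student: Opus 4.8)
The plan is to reduce the statement to Eq.~\eqref{eq:TPM2}, which already exhibits every two-point correlation as the inner product of the process matrix with a \emph{fixed} operator, $\mathfrak{g}^{\textrm{\tiny$(2)$}}(E,F;t,\tau)=\tr[(M_E\otimes F^\dagger F)\hsp W(t,\tau)]$. It therefore suffices to produce a single finite family of admissible probe operators $\{E_i\}_i$ and $\{F_j\}_j$---chosen once and for all, independently of the process and of the witness---such that $\{M_{E_i}\otimes F_j^\dagger F_j\}_{ij}$ spans, over $\mathbb{R}$, the space of self-adjoint operators on $\mathcal{H}_{\rm A}\otimes\mathcal{H}_{\rm B}\otimes\mathcal{H}_{\rm C}$. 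Granting this, an arbitrary self-adjoint operator---in particular any QM witness $Z$---can be expanded as $Z=\sum_{ij}d_{ij}(Z)\hsp M_{E_i}\otimes F_j^\dagger F_j$ with real coefficients $d_{ij}(Z)$, and contracting with $W(t,\tau)$ term by term yields Eq.~\eqref{eq:memory_Heisenberg}.

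For the spanning claim I would treat the tensor factor $\mathcal{H}_{\rm A}\otimes\mathcal{H}_{\rm B}$ first. Since $M_E$ is the CJ representation of $\rho\mapsto E\rho E^\dagger$, it is the rank-one positive operator $|E\rangle\rangle\langle\langle E|$, where $|E\rangle\rangle\in\mathcal{H}_{\rm A}\otimes\mathcal{H}_{\rm B}$ is the vectorization of $E$; as $M_{cE}=|c|^2M_E$, the admissibility constraint $E^\dagger E\leqcurved\mathrm{id}$ merely rescales things and its effect can be absorbed into the coefficients. Letting $E$ range over admissible operators, $|E\rangle\rangle$ sweeps all of $\mathcal{H}_{\rm A}\otimes\mathcal{H}_{\rm B}$ up to scale, so I would fix a spanning set of vectors $\{v_k\}$ together with the combinations $v_k+v_l$ and $v_k+\ii\hsp v_l$; by the polarization identity, the corresponding $M_E$'s then span every $|v_k\rangle\langle v_l|$, hence all of $\mathrm{Herm}(\mathcal{H}_{\rm A}\otimes\mathcal{H}_{\rm B})$, over $\mathbb{R}$. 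The factor $\mathcal{H}_{\rm C}$ is handled identically: taking $F=|g\rangle\langle f|$ with $\langle g|g\rangle=1$ gives $F^\dagger F=|f\rangle\langle f|$, so the operators $F^\dagger F$ realize all rank-one projectors on $\mathcal{H}_{\rm C}$, whose real span is $\mathrm{Herm}(\mathcal{H}_{\rm C})$. Tensoring self-adjoint operator bases of the two factors produces a basis of $\mathrm{Herm}(\mathcal{H}_{\rm A}\otimes\mathcal{H}_{\rm B}\otimes\mathcal{H}_{\rm C})$, so a fixed finite family $\{M_{E_i}\otimes F_j^\dagger F_j\}_{ij}$ does span this space over $\mathbb{R}$, which closes the argument.

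Beyond this the result is pure linear algebra; the genuine content is the decoupling of the probes from the process in Eq.~\eqref{eq:TPM2}. The one point worth flagging is that a QM witness $Z$ is pinned down by $W\mapsto\tr(ZW)$ only modulo operators that annihilate every $W\in\mathbf{TPM}$ (the affine hull of $\mathbf{TPM}$ being a proper subspace), so the decomposition above is in general over-complete and the $d_{ij}(Z)$ are not unique---but this is immaterial, since Eq.~\eqref{eq:memory_Heisenberg} asserts only the \emph{existence} of suitable real coefficients. I expect the main care in a full write-up to be fixing the CJ and vectorization conventions consistently, so that indeed $M_E=|E\rangle\rangle\langle\langle E|$ and Eq.~\eqref{eq:TPM2} holds with the transpositions of Eq.~\eqref{eq:link_product}; this is routine given Appendix A.
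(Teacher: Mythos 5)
Your proof is correct and follows essentially the same route as the paper's: both reduce the claim to the decoupled form $\tr[(M_E\otimes F^\dagger F)\,W(t,\tau)]$ of Eq.~\eqref{eq:TPM2} and then expand $Z$ with real coefficients in a fixed spanning set of product effects that are realizable by admissible instruments. The only difference is in the choice of spanning set --- the paper takes IC-POVM frames for $\mathrm{L}(\mathcal{H}_{\rm A}\otimes\mathcal{H}_{\rm B})$ and $\mathrm{L}(\mathcal{H}_{\rm C})$ and converts each frame element into Kraus operators $E_{\gamma_i}$, whereas you build the set directly from rank-one CJ operators $M_E=|E\rangle\rangle\langle\langle E|$ via polarization, which avoids multi-Kraus terms but is otherwise the same linear-algebraic argument.
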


The proof is given in \hyperlink{AppendixE}{Appendix E}.

\section{Quantum memory detection in cavity QED}

Here, we illustrate our QM detection scheme with a physical example. It consists of a two-level system, which is linearly coupled to a bath according to the Hamiltonian
\begin{equation}
H=\frac{\hbar  \omega_0}{2}\hsp\sigma_z+H_{\rm Env}+\sigma\otimes B^\dagger+\sigma^\dagger \otimes B.
\end{equation}
The system $\rm S$ is characterized by its ground and excited states, $\ket{g}$ and $\ket{e}$, which are separated by an energy-gap of $\hbar \omega_0$ and $\sigma:=\ket{e}\bra{g}$. The environment is assumed to be made of non-interacting Bosonic modes, 
\begin{equation}
H_{\rm Env}=\sum_k \hbar \omega_k b_k^\dagger b_k^{ }\hsp\textrm{ and }\hsp B=\sum_k\hbar g_k b_k^{ },
\end{equation}
where $b_k$ is the lowering operator for the $k$th mode with frequency $\omega_k$ and coupling constant $g_k$.

In this setting, the process matrix cannot be obtained exactly for a generic initial state of the bath, except for the case of a single bath mode.
In summary, we proceed as follows. First, we obtain the process matrix in the 
single-mode case.
Using our \hyperlink{Observation2}{Observation 2}, we derive an ER, which, fortunately, is simultaneously optimal [in the sense that there is one $W\in\mathbf{CM}$ with $\tr(W \Theta_*)=1$] and time-independent,
\begin{equation}
\Theta_*=2\hsp\ket{g}\bra{g}\otimes \ket{\Psi^-}\bra{\Psi^-},
\end{equation}
where $\ket{g}\in\mathcal{H}_{\rm B}$ and 
$\ket{\Psi^-}\propto\ket{g}\otimes \ket{e}-\ket{e}\otimes \ket{g} \in\mathcal{H}_{\rm A}\otimes \mathcal{H}_{\rm C}$ is the singlet state. Next, we employ our \hyperlink{Observation4}{Observation 4}, using the Heisenberg equations of motion as a power series on $\tau$. At the end, we arrive at the condition for non-quantumness
\begin{align}\label{eq:inequality}
2\hbar \omega_0\mathrm{Re}\Big\langle&{N}(t)^{-\frac{1}{2}}\sin\big({N}(t)^\frac{1}{2}\tau\big){B}(t)^\dagger {\sigma}(t)\cos\big({N}(t)^\frac{1}{2}\tau\big)\Big\rangle\nonumber \\ &+\Big\langle\cos\Big(2{N}(t)^{\frac{1}{2}}\tau\Big){H}_{\rm S}(t)\Big\rangle\leq \frac{\hbar \omega_0}{2},
\end{align}
where $N=B^\dagger B$ and $2H_{\rm S}=\hbar  \omega_0\sigma_z$. This inequality is general, and does not depend on the initial bath state nor the spectral density. Among other things, this inequality places a limit on the system-bath energy flow, that is, heat, in the absence of quantum memory. This can be observed by Taylor expanding the left-hand side in $\tau$; e.g., to first order one has $E(t)+\tau \dot{E}(t)+\mathcal{O}(\tau^2)$, $E(t)=\langle H_{\rm S}(t)\rangle$.

\begin{figure}
    \centering
    \includegraphics[width=1\linewidth]{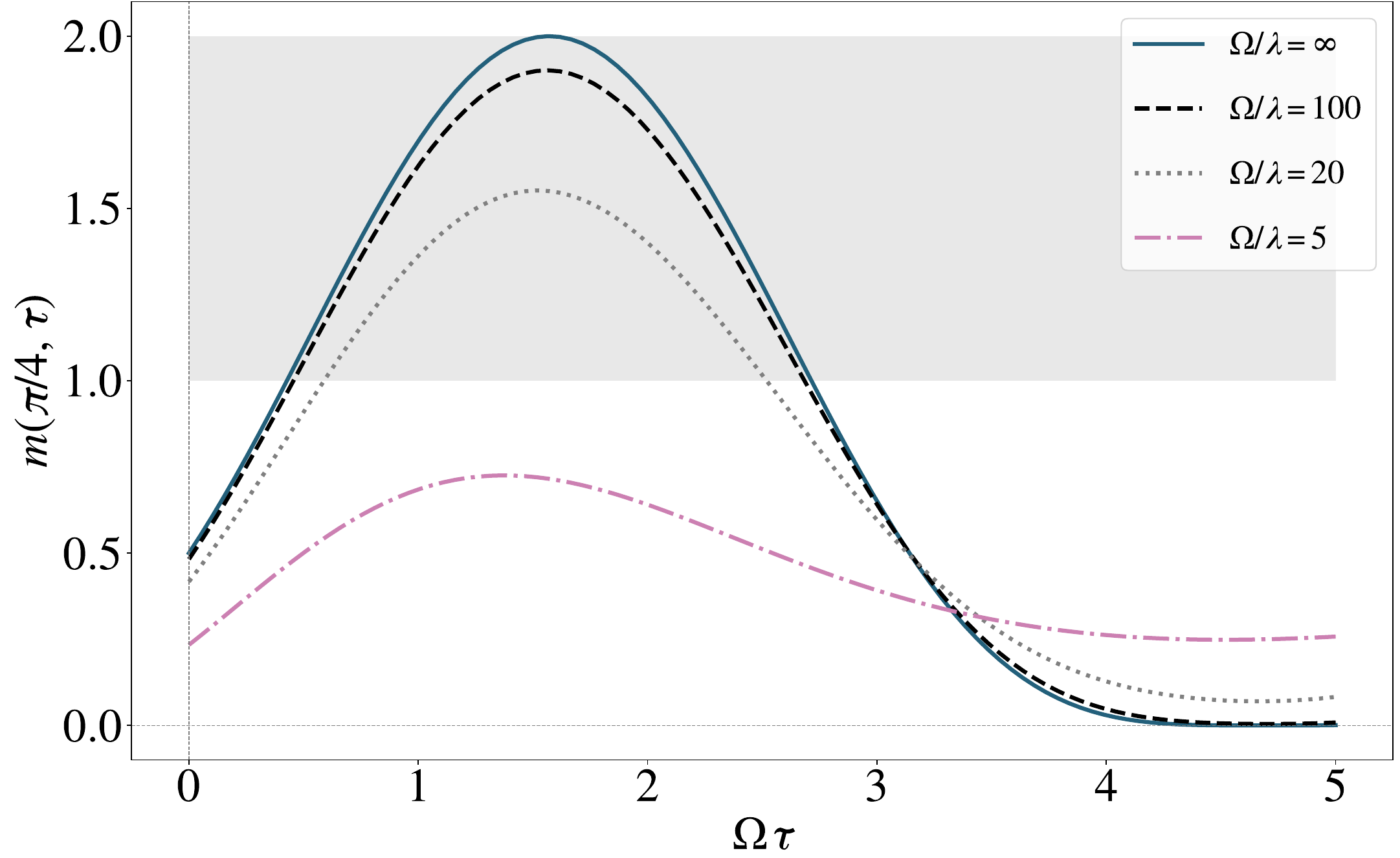}
    \caption{\textbf{Quantum memory in a damped resonant cavity.} Breakdown of the generalized regression formula (gray shade) for a two-level atom within a resonant optical cavity (Lorentzian spectral density) with frequency $\omega_0$, damping rate (inverse quality factor) $\lambda$ and Rabi frequency $\Omega$. The strong (non-Markovian) coupling regime occurs whenever $\Omega/\lambda<1$. Yet, this does not necessarily imply that the presence of 
    a quantum memory is detected (magenta curve) with the given witness. 
    }
    \label{fig:plot}
\end{figure}

The effect of the bath spectral density shows up explicitly if
we consider the system initially in the excited state and the modes in the vacuum state. The system-bath state remains pure at all times and \eqref{eq:inequality} simplifies to $m(t,\tau)\leq 1$, with
\begin{align}\label{eq:memory_SD}
m(t,\tau)&=\left|q(t)+\int_0^t\int_0^\tau f(t+\tau-r-s)q(r)q(s)\dd r \dd s\right|^2,
\end{align}
and $q(t)$ the probability amplitude of the initial state at time $t\geq 0$.
The second term in \eqref{eq:memory_SD} contains 
the two-time correlation function, $f(u)=\tr[B(u)B\varrho_{\rm Env}]\ee^{\ii  \omega_0 u}$, which is the Fourier transform of the bath spectral density $J(\omega)$, 
\begin{align}
f(u)=\int J(\omega)\ee^{\ii ( \omega_0-\omega)u}\hsp\dd\omega.
\end{align}
Fig.~\ref{fig:plot} shows exemplary results for exponentially decaying correlations, $2f(u)=\gamma_0 \lambda \ee^{-\lambda |u|}$, which correspond to a Lorentzian spectral density as in the case of a two-level atom within a resonant optical cavity \cite{BP07}. 
We observe that non-Markovianity in the standard sense (i.e., the breakdown of CP-divisibility) does not necessarily imply quantumness. Vacuum Rabi oscillations at the frequency $\Omega = \sqrt{2\gamma_0\lambda -\lambda^2}/2$ are observable in the regime of strong coupling and high cavity finesse, $\gamma_0 \gg \lambda$. The threshold for quantum memory detection lies at $\Omega \gtrsim 7\lambda $. The seminal experiment~\cite{Brune1996}, achieved $\Omega \approx 10^2 \lambda$, safely above that threshold. 

Finally, we remark that Eq.~\eqref{eq:memory_SD} holds true for \emph{all} spectral densities. In \hyperlink{appendixF}{Appendix F} we analyse in some details the Ohmic case, 

\section{Discussion}

Temporal correlations are essential in quantum physics, influencing both foundational questions and a diverse array of applications \cite{Budroni2013,Budroni2014,Budroni2016,Hoffmann2018,Costa2018,Spee2020,Budroni2021,Budroni2021,Vieira2024}. In this context, we enhance the analysis of these correlations in open systems, illustrating how they can indicate non-classicality through manifestations such as entanglement, coherence, and quantum memory within the framework of two-point measurements.

Modeling quantum processes with high-order maps allows us to access information that cannot be obtained from the reduced state of the open system, in particular analyze quantum temporal correlations. However, obtaining process tensors and extracting information from temporal correlations is highly challenging, with a variety of approaches proposed in recent years to address this problem; see, e.g., Refs.~\cite{Strathearn2018,Binder2018,J2019,Cygorek2022,White2023,Dowling2024,Cygorek2024a,Cygorek2024b,Li2025} and references therein. Here, we present an approach that can circumvent the necessity of knowing the process tensor exactly to detect non-classicality and applied it to a physically relevant model.

A possible future step would be to extend the analysis to multi-time correlations (i.e., involving more than two time slots). Although we believe that most of the results established here can be straightforwardly generalized, a careful analysis could unveil new physical insights that are not apparent in two-point correlations. 
Indeed, the authors of Ref.~\cite{Taranto2024} demonstrate that generalizing the concept of quantum memory to multi-time scenarios is considerably more complex; however, whether this reveals new physical insights (and if so, which ones) remains an intriguing open question.

\section{Acknowledgements}

We thank Luis A. Correa, Edward Gandar, Satoya Imai, Ties-A Ohst, Jyrki Piilo, Marco Tulio Quintino, Zhen-Peng Xu and Benjamin Yadin for the discussions and key remarks. This work was supported by the Deutsche Forschungsgemeinschaft (DFG, German Research Foundation, project numbers 447948357 and 440958198), the Sino-German Center for Research Promotion (Project M-0294), and the German Ministry of Education and Research (Project QuKuK, BMBF Grant No. 16KIS1618K). LSVS acknowledges support from the House of Young Talents of the University of Siegen.


%

\setcounter{equation}{0}
\setcounter{thm}{0}
\renewcommand{\theequation}{A\arabic{equation}}
\renewcommand{\thefigure}{A\arabic{figure}}
\setcounter{page}{1}
\makeatletter

\setcounter{secnumdepth}{2}

\twocolumngrid

\section*{Appendix A. From Heisenberg to Choi-Jamio{\l}kowski and back}\hypertarget{Heisenberg_to_CJ}{ }

Herein, we aim to succinctly explain how to transition between the Heisenberg and Choi-Jamio{\l}kowski (CJ) pictures for two-point measurement (TPM) correlations. This connection can be more easily understood by considering TPM correlations in the Schr\"{o}dinger picture first, wherein
\begin{equation}\label{eq:Born_Schrodinger_SM}
\mathfrak{g}^{\textrm{\tiny(2)}}(E,F;t,\tau)=\tr[F^\dagger F\hsp\mathcal{U}_{t+\tau|t}\circ\mathcal{M}_{E}\circ \mathcal{U}_{t|0}\hsp \varrho_{\rm S}\otimes \varrho_{\rm Env}].
\end{equation}
Here, $\mathcal{M}_{E}[\cdot] =E \cdot E^\dagger$ and $\varrho_{\rm S}\otimes \varrho_{\rm Env}$ is the initial system-environment state. Although this is perhaps the most intuitive way to understand, it is also the most inconvenient to perform any kind of calculations, whether analytical or numerical. Because of this, we move on to more convenient pictures, namely to the Heisenberg and CJ pictures.

To write Eq.~\eqref{eq:Born_Schrodinger_SM} in the Heisenberg picture we have to take the Hilbert-Schmidt dual of the maps. For this, it is convenient to write $\mathcal{U}_{t|s}$ as $\mathcal{U}_{t|s}\cdot =U(t|s)\cdot U(t|s)^\dagger$ and notice that $U(t+\tau|t)=U(t+\tau|0)U(t|0)^\dagger$ whence
\begin{align}
\mathfrak{g}^{\textrm{\tiny(2)}}(E,F;t,\tau)&=\tr[F^\dagger F\hsp\mathcal{U}_{t+\tau|t}\circ\mathcal{M}_{E}\circ \mathcal{U}_{t|0}\hsp \varrho_{\rm S}\otimes \varrho_{\rm Env}]\nonumber \\ &=\tr\big[\mathcal{U}_{t|0}^\dagger\circ\mathcal{M}_{E}^\dagger \circ\mathcal{U}_{t+\tau|t}^\dagger(F^\dagger F)\hsp\varrho_{\rm S}\otimes \varrho_{\rm Env}\big] \nonumber \\ &=\tr\big[\mathcal{U}_{t|0}^\dagger(E^\dagger)\mathcal{U}_{t+\tau|t}^\dagger(F^\dagger F)\mathcal{U}_{t|0}^\dagger(E)\hsp\varrho_{\rm S}\otimes \varrho_{\rm Env}\big]\nonumber \\ &=\langle E(t)^\dagger F(t+\tau)^\dagger F(t+\tau)E(t)\rangle.
\end{align}
The Heisenberg picture version of an operator $X$ (in 
Schr\"{o}dinger picture) is defined as $X(t)=\mathcal{U}_{t|0}^\dagger \hsp (X)$.

To move to the CJ picture, it is convenient to represent the open system $\rm S$ distinctly at distinct instants of time. We denote $\rm S$ as $\rm A$ at time $t$, before it is measured; after it is measured, the output system will be denoted as $\rm B$. The measurement action associated to $E$ is described by the CP linear map $\mathcal{M}_{E}:\mathrm{L}(\mathcal{H}_{\rm A})\to\mathrm{L}(\mathcal{H}_{\rm B})$, $\mathcal{M}_{E}[\cdot] = E\cdot E^\dagger$. The channel-state duality provided by the CJ isomorphism \cite{Choi75,J72} allows us to represent it as an operator $M_{E}\in\mathrm{L}(\mathcal{H}_{\rm A}\otimes \mathcal{H}_{\rm B})$ which satisfies $M_{E}\geqcurved 0$ and $\mathrm{id}_{\rm A}\geqcurved \tr_{\rm B}\hsp M_E$, obtained from the map $\mathcal{M}_{E}$ via
\begin{equation}\label{eq:CJI}
M_{E}=\sum_{ij}\ket{i}\bra{j}_{\rm A}\otimes \mathcal{M}_{E}(\ket{i}\bra{j}_{\rm A}), \end{equation} 
where $\{\ket{i}_{\rm A}\}\subset \mathcal{H}_{\rm A}$ is a \emph{fixed} orthonormal basis. The inverse mapping reads
\begin{equation}
\mathcal{M}_{E}\cdot =\tr_{\rm A}[(\hsp\cdot^\mathrm{T}\otimes \mathrm{id}_{\rm A})M_{E}],
\end{equation}
the transposition $\mathrm{T}$ being with respect to the bases we have fixed. Finally, the system $\rm S$ at the time $t+\tau$ is denoted as $\rm C$, thus $F\in\mathrm{L}(\mathcal{H}_{\rm C})$, $F^\dagger F\leq \mathrm{id}_{\rm C}$. 

The channel-state duality provided by CJ isomorphism together with Eq.~\eqref{eq:Born_Schrodinger_SM} imply that TPM probabilities are bilinear functions of the measurement operators $M_{E}$ and $F^\dagger F$,
\begin{equation}
\mathfrak{g}^{\textrm{\tiny(2)}}(E,F;t,\tau)=\mathcal{T}(M_{E},F^\dagger F).
\end{equation}
The map $\mathcal{T}$ is so-called process tensor, and encodes the statistically relevant aspects of system-environment interactions. Invoking Riesz representation lemma we can write
\begin{equation}
\mathfrak{g}^{\textrm{\tiny(2)}}(E,F;t,\tau)=\tr[(M_{E}\otimes F^\dagger F)W].
\end{equation}
Next, we impose $W$ to be positive semidefinite in order to ensure positivity of probabilities \footnote{To be fair, at this point one would only need to impose that $W$ be ``positive over tensor products'' to guarantee positivity of the probabilities (see Refs.~\cite{OCB12,ABCFGB15} for in-depth discussions). The need for $W$ to be \emph{de facto} positive semidefinite comes from the possibility of probing the system $\rm S$ with higher-order operations involving memory and entanglement.}. The causal order of the process, or, more precisely, the assumption of non-existence of retrocausal influences, together with normalization of probabilities imply $\tr_{\rm C}\hsp W=(\dim\mathcal{H}_{\rm B})^{-1}\tr_{\rm BC}\hsp W\otimes \mathrm{id}_{\rm B}$ and $\tr W=\dim\mathcal{H}_{\rm B}$ respectively \cite{CDP09}. 

In the context of OQSs, the TPM process matrix can be written in terms of the unitary maps and system-environment state as
\begin{equation}\label{eq:link_productApp}
W(t,\tau)^\mathrm{T}=\tr_{\rm Env}\{[\varrho_{\rm AEnv}(t)^{\TT_{\rm Env}}\otimes \mathrm{id}_{\rm BC}](\mathrm{id}_{\rm A}\otimes J_{t+\tau|t})\}.
\end{equation}
Here, $J_{t+\tau|t}$ is obtained from the unitary map $\mathcal{U}_{t+\tau|t}$ via CJ isomorphism [Eq.~\eqref{eq:CJI}] when the output space corresponding to the environment is traced out. Specifically, define $\mathcal{U}_{t+\tau|t}:\mathrm{L}(\mathcal{H}_{\rm B}\otimes \mathcal{H}_{\rm Env})\to \mathrm{L}(\mathcal{H}_{\rm B}\otimes \mathcal{H}_{\rm Env^\prime})$, where $\mathcal{H}_{\rm Env^\prime}\simeq \mathcal{H}_{\rm Env}$, and $\mathcal{J}(U_{t+\tau|t})\in\mathrm{L}(\mathcal{H}_{\rm A}\otimes \mathcal{H}_{\rm B}\otimes \mathcal{H}_{\rm Env}\otimes \mathcal{H}_{\rm Env^\prime})$, the latter obtained from $\mathcal{U}_{t+\tau|t}$ via Eq.~\eqref{eq:CJI}, wherein $J_{t+\tau|t}\equiv\tr_{\rm Env^\prime}\hsp \mathcal{J}(U_{t+\tau|t})$.

\section*{Appendix B. Proof of Observation 1}\hypertarget{Proof_Obs1}{ }

\begin{obsapp}\hypertarget{Observation1}{ }
TPM correlations satisfy the generalized quantum regression formula for all $E$ and $F$ if and only if the TPM process matrix can be written as
\begin{equation}\label{eq:CM_SM}
W=\int \varrho_\lambda\otimes N_\lambda\hsp \dd \omega_\lambda,
\end{equation}
where $\omega$ is a probability measure, $\varrho_{\lambda}$ are density operators that average to (with respect to $\omega$) the actual reduced system state at $t$, and $N_{\lambda}$ are CJ representations of channels transforming system states from $t$ to $t+\tau$; that is $N_\lambda\geqcurved 0$ and $\tr_{\rm C}\hsp N_\lambda=\mathrm{id}_{\rm B}$ for all $\lambda$. Two sufficient conditions for a TPM process matrix to satisfy Eq.~\eqref{eq:CM_SM} are: 
\begin{itemize}
    \item[(i)]\hypertarget{i}{ } the system-environment state $\varrho_{\rm SEnv}(t)$ is separable or
    \item[(ii)]\hypertarget{ii}{ } there exists an orthonormal basis $\{\ket{\mu}\}$ for the Hilbert space of the environment such that the maps
    \begin{equation}
    \tr_{\rm Env}\left[\mathcal{U}_{t+\tau|t}(\cdot \otimes \ket{\mu}\bra{\nu}) \right] \equiv 0 \quad \text{for} \quad \mu\neq\nu.
    \end{equation}
\end{itemize}
\end{obsapp}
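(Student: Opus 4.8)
The plan is to argue entirely at the level of the process matrix, using that process tomography makes the assignment $\mathfrak{g}^{\textrm{\tiny$(2)$}}\mapsto W$ a linear bijection: $\{F^\dagger F\}_F$ spans $\mathrm{L}(\mathcal{H}_{\rm C})$, while each $M_E=\sum_{ij}\ket{i}\bra{j}_{\rm A}\otimes E\ket{i}\bra{j}E^\dagger$ is rank one and, as $E$ varies over operators with $E^\dagger E\leqcurved\mathrm{id}$, the corresponding vectors sweep out $\mathcal{H}_{\rm A}\otimes\mathcal{H}_{\rm B}$, so $\{M_E\}_E$ spans $\mathrm{L}(\mathcal{H}_{\rm A}\otimes\mathcal{H}_{\rm B})$. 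Consequently $W$ is uniquely fixed by the correlations, and convex decompositions of $\mathfrak{g}^{\textrm{\tiny$(2)$}}$ translate into convex decompositions of $W$. The crucial building-block step is then: a TPM correlation obeys the regression formula~\eqref{eq:QRT} with a \emph{fixed} channel $\hat{\Phi}_{t+\tau|t}$ and reduced state $\varrho$ if and only if $W=\varrho\otimes N$ (in the conventions of Appendix~A), where $N$ is the Choi matrix of the Schr\"odinger-picture channel $\Phi$ dual to $\hat{\Phi}_{t+\tau|t}$. Indeed, cycling the trace rewrites Eq.~\eqref{eq:QRT} as $\tr[F^\dagger F\,\Phi(E\varrho E^\dagger)]$, which is exactly Eq.~\eqref{eq:link_productApp} with a trivial environment, $\varrho_{\rm AEnv}(t)\to\varrho$ and $J_{t+\tau|t}\to N$; the converse is the same identity read backwards together with the spanning property. (This is the process-matrix form of the known statement that Eq.~\eqref{eq:QRT} holds iff the system evolves through an independent noisy channel \cite{Pollock2018a,Pollock2018b,Li2018,MM21}.)

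With this, the equivalence is immediate. Equation~\eqref{eq:GQRT} is by definition a mixture $\mathfrak{g}^{\textrm{\tiny$(2)$}}=\int\mathfrak{g}^{\textrm{\tiny$(2)$}}_\lambda\,\dd\omega_\lambda$ of such building blocks, each carrying its own pair $(\varrho_\lambda,\hat{\Phi}_\lambda)$; applying the bijection term by term gives $W=\int\varrho_\lambda\otimes N_\lambda\,\dd\omega_\lambda$ with $N_\lambda\geqcurved 0$ and $\tr_{\rm C}\hsp N_\lambda=\mathrm{id}_{\rm B}$. Conversely, each $\varrho_\lambda\otimes N_\lambda$ is a legitimate process matrix (positivity, $\tr=\dim\mathcal{H}_{\rm B}$ and the causality constraint are readily checked), hence defines a QRT-compatible $\mathfrak{g}^{\textrm{\tiny$(2)$}}_\lambda$, and reassembling recovers Eq.~\eqref{eq:GQRT}. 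The averaged marginal $\int\varrho_\lambda\,\dd\omega_\lambda=\varrho_{\rm S}(t)$ is forced by matching the time-$t$ statistics: taking $F^\dagger F=\mathrm{id}_{\rm C}$ and using $\hat{\Phi}_\lambda(\mathrm{id})=\mathrm{id}$ yields $\int\tr[E^\dagger E\varrho_\lambda]\,\dd\omega_\lambda=\langle E(t)^\dagger E(t)\rangle=\tr[E^\dagger E\varrho_{\rm S}(t)]$ for all $E$. The explicit transpositions in Eqs.~\eqref{eq:CJI} and~\eqref{eq:link_productApp} do not interfere, since transposition sends density operators to density operators and channel Choi matrices to channel Choi matrices, hence preserves the form~\eqref{eq:CM_SM}.

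The two sufficient conditions follow by substituting into Eq.~\eqref{eq:link_productApp}. If $\varrho_{\rm SEnv}(t)=\int\sigma_\lambda\otimes\xi_\lambda\,\dd\omega_\lambda$ is separable, the partial transpose $\TT_{\rm Env}$ acts only on the $\xi_\lambda$, which remain valid states, so by linearity $W=\int\sigma_\lambda\otimes N_\lambda\,\dd\omega_\lambda$ (modulo the transposition bookkeeping of Appendix~A), where $N_\lambda$ is the Choi matrix of the CPTP map $X\mapsto\tr_{\rm Env}[\mathcal{U}_{t+\tau|t}(X\otimes\xi_\lambda)]$---hence $N_\lambda\geqcurved0$, $\tr_{\rm C}\hsp N_\lambda=\mathrm{id}_{\rm B}$---and $\int\sigma_\lambda\,\dd\omega_\lambda=\varrho_{\rm S}(t)$. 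For condition~\hyperlink{ii}{(ii)}, expand $\varrho_{\rm AEnv}(t)=\sum_{\mu\nu}\varrho_{\rm A}^{(\mu\nu)}\otimes\ket{\mu}\bra{\nu}$ in the distinguished basis; each summand of Eq.~\eqref{eq:link_productApp} becomes the Choi matrix of $X\mapsto\tr_{\rm Env}[\mathcal{U}_{t+\tau|t}(X\otimes\ket{\mu}\bra{\nu})]$, which vanishes for $\mu\neq\nu$ by hypothesis, so only the diagonal terms survive. Writing $p_\mu=\tr\varrho_{\rm A}^{(\mu\mu)}\geq0$ (with $\sum_\mu p_\mu=1$) and $\varrho_\mu=\varrho_{\rm A}^{(\mu\mu)}/p_\mu$, one obtains the discrete version of~\eqref{eq:CM_SM}, with $N_\mu$ the Choi matrix of the CPTP channel $X\mapsto\tr_{\rm Env}[\mathcal{U}_{t+\tau|t}(X\otimes\ket{\mu}\bra{\mu})]$ and $\sum_\mu p_\mu\varrho_\mu=\tr_{\rm Env}\varrho_{\rm AEnv}(t)=\varrho_{\rm S}(t)$.

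I expect the only delicate point to be the first part: showing that a convex mixture of single-$\hat{\Phi}$ regression formulas, each with its own pair $(\varrho_\lambda,\hat{\Phi}_\lambda)$, corresponds \emph{exactly}---not merely via a one-sided inclusion---to the set~\eqref{eq:CM_SM}. This rests on the tomographic bijection and on carefully tracking the transposition conventions of Appendix~A; once the building-block characterization is in place, everything else, including the two sufficient conditions, reduces to direct substitution into Eq.~\eqref{eq:link_productApp}.
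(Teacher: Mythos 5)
Your proposal is correct and follows essentially the same route as the paper: the tomographic spanning of $\{M_E\otimes F^\dagger F\}$ plus convex linearity yields the equivalence with Eq.~\eqref{eq:CM_SM}, and the sufficient conditions follow from decomposing $W$ over environment matrix elements $\ket{\mu}\bra{\nu}$ --- the paper derives that decomposition via its Lemmas~1--2 with explicit Kraus-like operators and a purification argument, whereas you expand the link-product formula directly, which is a harmless shortcut to the same expression. Your explicit check that the $\varrho_\lambda$ average to $\varrho_{\rm S}(t)$, obtained by setting $F^\dagger F=\mathrm{id}_{\rm C}$ and using unitality of each $\hat{\Phi}_\lambda$, is a detail the paper's proof leaves implicit.
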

\begin{proof}
To prove the equivalence between \eqref{eq:CM_SM} and the generalized quantum regression formula, it is sufficient to prove that Markov processes matrices, i.e., those satisfying $W_{\rm Markov}=\varrho\otimes N$, are in the same foot as the quantum regression theorem. Although the equivalence between such kind of quantum Markov processes and the quantum regression theorem has been considered before 
(see, e.g., Refs.~\cite{MM21} and \cite{Li2018}), for completeness, we will revisit such a result in our own way, from which the generalization [Eq.~\eqref{eq:CM_SM}] generalization will follow trivially from convex linearity.

To start off, note that such quantum Markov processes satisfy the standard quantum regression formula:
\begin{align}
\mathfrak{g}^{\textrm{\tiny(2)}}(E,F;t,\tau)&=\tr[(M_{E}\otimes F^\dagger F)(\varrho\otimes N)] \nonumber \\ &=\tr[F^\dagger F \mathcal{N}\circ \mathcal{M}_E(\varrho)] \nonumber \\ &=\tr[E^\dagger \mathcal{N}^\dagger(F^\dagger F)E\hsp \varrho],
\end{align}
where $N$ and $\mathcal{N}$ are CJ-dual of each other. Conversely, consider
\begin{equation}
\mathfrak{g}^{\textrm{\tiny(2)}}(E,F;t,\tau)=\tr[E^\dagger \mathcal{N}^\dagger(F^\dagger F)E\hsp \varrho],
\end{equation}
holding for all $E$ and $F$. Moving to the CJ picture we obtain
\begin{equation}
\tr[(X\otimes Y)(\varrho\otimes N)]=\tr[(X\otimes Y)W],
\end{equation}
for all $X\in\mathrm{L}(\mathcal{H}_{\rm A}\otimes \mathcal{H}_{\rm B})$, $Y\in\mathrm{L}(\mathcal{H}_{\rm C})$, positive semidefinite, where $X$ and $Y$ are obtained from $E$ and $F$ via CJ isomorphism [Eq.~\eqref{eq:CJI}]. The equality $W=\varrho\otimes N$ follows from the possibility of defining operator basis with positive semidefinite operators, like informationally complete POVMs.
From convex linearity, Eq.~\eqref{eq:CM_SM} holds true if and only if the TPM correlations are convex mixtures of quantum regression-like correlations.

To prove that conditions both \hyperlink{i_sm}{(i)} and \hyperlink{ii_sm}{(ii)} imply Eq.~\eqref{eq:CM_SM}, the following Lemmas will be useful.

\begin{lem}\hypertarget{Lemma1}{ }
Let $U:\mathcal{H}_1\otimes \mathcal{H}_2\to \mathcal{H}_1\otimes \mathcal{H}_2$ be a unitary operator. Then, for any orthonormal basis $\mathcal{B}_2=\{\ket{\mu}:\ket{\mu}\in\mathcal{H}_2\}$, one has
\begin{equation}
U(\ket{\psi}\otimes \ket{\phi})=\sum_\mu L_{\mu|\phi}\ket{\psi}\otimes \ket{\mu},
\end{equation}
where, for all fixed norm-1 vectors $\ket{\phi}\in\mathcal{H}_2$, $L_{\mu|\phi}$ are Kraus-like operators, i.e., $\sum_\mu L_{\mu|\phi}^\dagger L_{\mu|\phi}^{ }=\mathrm{id}$, that define a quantum instrument with elements
\begin{equation}\label{eq:instrumentL}
\mathcal{L}_{\mu|\phi}[\cdot] =\tr_{2}[U(\cdot \otimes \ket{\phi}\bra{\phi})U^\dagger(\mathrm{id} \otimes \ket{\mu}\bra{\mu})].
\end{equation}
\end{lem}
\begin{proof}
Let $\mathcal{B}_1=\{\ket{j}:\ket{j}\in\mathcal{H}_1\}$ be an orthonormal basis and write
\begin{align}\label{eq:Kraus}
U(\ket{i}\otimes \ket{\hat{\mu}})&=\sum_{j\mu} \mathfrak{d}_{j\mu}^{i\hat{\mu}}\ket{j}\otimes \ket{\mu}\nonumber \\ &=\sum_{\mu}\left(\sum_j \mathfrak{d}_{j\mu}^{i\hat{\mu}}\ket{j}\right)\otimes \ket{\mu}\nonumber \\ &=\sum_\mu L_{\mu|\hat{\mu}}\ket{i}\otimes \ket{\mu}
\end{align}
where $\mathfrak{d}_{j\mu}^{i\hat{\mu}}$ are the coordinates of the transformed vector on the basis $\mathcal{B}_1\otimes \mathcal{B}_2$. The operators $L_{\mu|\hat{\mu}}$ are defined via
\begin{equation}
L_{\mu|\hat{\mu}}\ket{i}:=\sum_j \mathfrak{d}_{j\mu}^{i\hat{\mu}}\ket{j}.
\end{equation}
This operator is independent on the basis $\mathcal{B}_1$. Similarly, one can define $L_{\mu|\phi}:=\sum_{\hat{\mu}} \langle \hat{\mu}|\phi\rangle L_{\mu|\hat{\mu}}$, for any $\ket{\phi}\in\mathcal{H}_2$.

The fact that $L_{\mu|\phi}$ are Kraus-like operators follow from Eq.~\eqref{eq:Kraus}, observing that $U$ is unitary,
\begin{equation}
\delta_{ij}=\sum_\mu \bra{i}L_{\mu|\phi}^\dagger L_{\mu|\phi}^{ }\ket{j}.
\end{equation}
Finally, denote by $\mathcal{L}_{\mu|\phi}[\cdot] =L_{\mu|\phi}^{ }\cdot L_{\mu|\phi}^\dagger$ the CP map induced by $L_{\mu|\phi}^{ }$, which acts as
\begin{align}
\mathcal{L}_{\mu|\phi}(\ket{i}\bra{j})&:=L_{\mu|\phi}^{ } \ket{i}\bra{j} L_{\mu|\phi}^\dagger \nonumber \\ &=(\mathrm{id}\otimes \bra{\mu})U(\ket{i}\bra{j}\otimes \ket{\phi}\bra{\phi})U^\dagger(\mathrm{id}\otimes\ket{\mu})\nonumber \\ &=\tr_{2}[U(\ket{i}\bra{j}\otimes \ket{\phi}\bra{\phi})U^\dagger(\ket{\mu}\bra{\mu}\otimes \mathrm{id})].
\end{align}
Then, Eq.~\eqref{eq:instrumentL} follows from linearity.
\end{proof}

\begin{lem}\hypertarget{Lemma2}{}
A TPM process with initial product state $\varrho_{\rm SEnv}(0)=\varrho_{\rm S}\otimes \varrho_{\rm Env}$ and evolution governed by unitary maps $\mathcal{U}_{t+\tau|t}$ yields the following process matrix
\begin{equation}\label{eq:decompositionW}
W(t,\tau)=\sum_{\mu\nu}\tr_{\rm Env}[\varrho_{\rm SEnv}(t)(\mathrm{id}\otimes P_{\mu\nu})]\otimes N_{\mu\nu}(t,\tau),
\end{equation}
where the vectors $\ket{\mu}\in \mathcal{H}_{\rm Env}$ define an orthonormal basis, $P_{\mu\nu}\equiv\ket{\mu}\bra{\nu}$, and $N_{\mu\nu}(t,\tau)$ is the CJ-dual of the map
\begin{equation}
\mathcal{N}_{\mu\nu}(t,\tau)[\cdot]=\tr_{\rm Env}\circ\mathcal{U}_{t+\tau|t}(\cdot \otimes P_{\mu\nu}).
\end{equation}
\end{lem}
\begin{proof}
It suffices to consider an initial pure state of the form $\ket{\psi}\otimes \ket{\phi}$; the extension for a generic product density operator will follows from convex linearity. Under this assumption, the TPM process matrix is obtained by tracing-out the following four-partite (non-normalized) state,
\begin{equation}
\|\kappa(t,\tau)\rangle=\mathcal{U}_{t+\tau|t}\big(\mathbb{S}_{\rm AC}\hsp \mathcal{U}_t(\ket{\psi}_{\rm A}\otimes \ket{\phi})\otimes \|\Phi^+\rangle_{\rm BC}\hsp \mathbb{S}_{\rm AC}\big)
\end{equation}
where $\mathbb{S}_{\rm AC}$ is the unitary \textsc{swap} on $\mathcal{H}_{\rm A}\otimes\mathcal{H}_{\rm C}$ and $\|\Phi^+\rangle_{\rm BC}$ the non-normalized Bell state, 
\begin{equation}
\|\Phi^+\rangle=\sum_i \ket{i}\otimes \ket{i}.
\end{equation}
Next, apply \hyperlink{Lemma1}{Lemma 1} to it twice to arrive at
\begin{equation}
\|\kappa(t,\tau)\rangle=\sum_{i\mu \mu^\prime} (L_{\mu|\phi}\ket{\psi})\otimes \ket{i}\otimes (K_{\mu^\prime|\mu}\ket{i})\otimes \ket{\mu^\prime},
\end{equation}
where $L_{\mu|\phi}$ and $K_{\mu^\prime|\mu}$ are Kraus-like operators. Tracing out the environment, one gets
\begin{align*}
W(t,\tau)&=\tr_{\rm Env}\hsp \|\kappa(t,\tau)\rangle\langle\kappa(t,\tau)\|\nonumber \\ &=\sum_{\ij\mu\nu}\tr_{\rm Env}[\mathcal{U}_{t|0}(\ket{\psi}\bra{\psi}\otimes \ket{\phi}\bra{\phi})(\mathrm{id}\otimes P_{\mu\nu})]\otimes \ket{i}\bra{j}\nonumber \\ &\hspace{3.25cm}\otimes \tr_{\rm Env}[\mathcal{U}_{t+\tau|t}(\ket{i}\bra{j}\otimes P_{\mu\nu})],
\end{align*}
where Eq.~\eqref{eq:instrumentL} of \hyperlink{Lemma1}{Lemma 1} was used. Eq.~\eqref{eq:decompositionW} follows from convex-linearity together with the CJ isomorphism [Eq.~\eqref{eq:CJI}].
\end{proof}

The decomposition of Eq.~\eqref{eq:decompositionW} in \hyperlink{Lemma2}{Lemma 2} allows us to state the connection between entanglement, coherence and quantum memory: \\

\noindent  (i) Suppose that at the time $t$ the system-environment state, $\varrho_{\rm Env}(t)$, is separable,
\begin{equation}
\varrho_{\rm SEnv}(t)=\int \varrho_\lambda \otimes \sigma_\lambda \hsp \dd\omega_\lambda,
\end{equation}
where $\varrho_\lambda$ and $\sigma_\lambda$ are density operators and $\omega$ is a probability measure. Plugging this into Eq.~\eqref{eq:decompositionW} we obtain
\begin{align}
W(t,\tau)&=\int \varrho_\lambda\otimes \sum_{\mu\nu}\tr(\sigma_\lambda P_{\mu\nu}) N_{\mu\nu}(t,\tau)\hsp \dd \omega_\lambda\nonumber \\ &=\int \varrho_\lambda\otimes N_\lambda\hsp \dd \omega_\lambda,
\end{align}
where
\begin{align}
N_\lambda &:=\sum_{\mu\nu}\tr(\sigma_\lambda P_{\mu\nu}) N_{\mu\nu}(t,\tau)\nonumber \\ &=\sum_{\mu\nu ij} \tr(\sigma_\lambda P_{\mu\nu})\hsp\ket{i}\bra{j}\otimes\tr_{\rm Env}[\mathcal{U}_{t+\tau|t}(\ket{i}\bra{j}\otimes P_{\mu\nu})] \nonumber \\ &=\sum_{ij} \ket{i}\bra{j}\otimes\tr_{\rm Env}\big[\mathcal{U}_{t+\tau|t}(\ket{i}\bra{j}\otimes \sigma_\lambda^\TT)\big],
\end{align}
which satisfy $N_\lambda\geqcurved 0$ and $\tr_{\rm C}\hsp N_\lambda=\mathrm{id}_{\rm B}$ since, for all $\lambda$, $\sigma_\lambda^\TT$ is an actual density operator. \\

\noindent (ii) Now suppose the map
\begin{equation}
\mathcal{N}_{\mu\nu}(t,\tau)[\cdot]=\tr_{\rm Env}\circ\mathcal{U}_{t+\tau|t}(\cdot \otimes P_{\mu\nu})
\end{equation}
is non-zero only for $\mu=\nu$. In this case
\begin{align}
W(t,\tau)&=\sum_{\mu}\tr_{\rm Env}[\varrho_{\rm SEnv}(t)(\mathrm{id}\otimes \ket{\mu}\bra{\mu})]\otimes N_{\mu\mu}(t,\tau) \nonumber \\ &=\sum_\mu p_\mu\hsp  \varrho_\mu\otimes N_\mu,
\end{align}
where $N_\mu\equiv N_{\mu\mu}$, $N_\mu=\sum_{ij}\tr_{\rm Env}\circ\mathcal{U}_{t+\tau|t}(\ket{i}\bra{j}\otimes \ket{\mu}\bra{\mu})$ and $p_\mu\hsp\varrho_\mu:=\tr_{\rm Env}[\varrho_{\rm SEnv}(t)(\mathrm{id}\otimes \ket{\mu}\bra{\mu})]$, $\tr\varrho_\mu=1$. 
\end{proof}

\section*{Appendix C. Proof of Observation 2}\hypertarget{Proof_Obs2}{ }

\begin{obsapp}
If $W\in\mathbf{CM}$ then
\begin{equation}\label{eq:ER-SM}
\mathcal{E}(W):=\max_{\Theta\in\mathbf{ER}} \tr(\Theta W)\leq 1. 
\end{equation}
Given $\Theta\in\mathbf{ER}$ and $Z_0$ positive on TPMs [i.e., $\tr(Z_0 W)\geq 0$ for all $W\in\mathbf{TPM}$], then
\begin{equation}\label{eq:QM-Wit-SM}
Z(\Theta,Z_0)=Z_0+\kappa(\Theta,Z_0)\left(\Theta-\frac{\mathrm{id}}{\dim(\mathcal{H}_{\rm A}\otimes\mathcal{H}_{\rm B})}\right)
\end{equation}
is a QM witness, with
\begin{equation}\label{eq:kappa-SM}
{\kappa(\Theta,Z_0)}=\min_{\Omega\in\mathbf{CM}}\frac{\tr(Z_0\Omega)}{(\dim\mathcal{H}_{\rm A})^{-1}-\tr(\Theta \Omega)}.
\end{equation}
Conversely, any QM witness can be refined into the form~\eqref{eq:QM-Wit-SM}. 
\end{obsapp}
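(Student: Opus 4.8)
The plan is to prove the three assertions in turn: the bound $\mathcal{E}(W)\le 1$ on $\mathbf{CM}$, the claim that $Z(\Theta,Z_0)$ is a QM witness, and the converse completeness statement. For the bound I first reduce to a single product term. Since $W=\int\varrho_\lambda\otimes N_\lambda\,\dd\omega_\lambda$ with $\omega$ a probability measure, linearity gives $\tr(\Theta W)=\int\tr[\Theta(\varrho_\lambda\otimes N_\lambda)]\,\dd\omega_\lambda$, so it suffices to show $\tr[\Theta(\varrho\otimes N)]\le 1$ for a state $\varrho$ on $\mathcal{H}_{\rm A}$ and a channel Choi matrix $N$ ($N\geqcurved 0$, $\tr_{\rm C}N=\mathrm{id}_{\rm B}$). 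I would set $G:=\tr_{\rm A}[\Theta(\varrho\otimes\mathrm{id}_{\rm BC})]$ on $\mathcal{H}_{\rm B}\otimes\mathcal{H}_{\rm C}$, so that $\tr[\Theta(\varrho\otimes N)]=\tr(GN)$. Writing $G=\tr_{\rm A}[(\varrho^{1/2}\otimes\mathrm{id})\Theta(\varrho^{1/2}\otimes\mathrm{id})]$ exhibits $G\geqcurved 0$, and since $\mathrm{id}_{\rm A}-\varrho\geqcurved 0$ the identical manipulation gives $\tr_{\rm A}\Theta-G=\tr_{\rm A}[((\mathrm{id}_{\rm A}-\varrho)^{1/2}\otimes\mathrm{id})\Theta((\mathrm{id}_{\rm A}-\varrho)^{1/2}\otimes\mathrm{id})]\geqcurved 0$. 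Hence $G\leqcurved\tr_{\rm A}\Theta\leqcurved\eta\otimes\mathrm{id}_{\rm C}$ by the ER constraint, and because $N\geqcurved 0$ we conclude $\tr(GN)\le\tr[(\eta\otimes\mathrm{id}_{\rm C})N]=\tr_{\rm B}[\eta\,\tr_{\rm C}N]=\tr\eta=1$, which closes \eqref{eq:ER-SM}.

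For the witness property I would first exploit the normalization: for every $W\in\mathbf{TPM}$ one has $\tr W=\dim\mathcal{H}_{\rm B}$, so the identity term collapses to a constant, $\tr[(\mathrm{id}/\dim(\mathcal{H}_{\rm A}\otimes\mathcal{H}_{\rm B}))W]=(\dim\mathcal{H}_{\rm A})^{-1}$, giving $\tr[Z(\Theta,Z_0)W]=\tr(Z_0W)-\kappa\big[(\dim\mathcal{H}_{\rm A})^{-1}-\tr(\Theta W)\big]$. The witness property on $\mathbf{CM}$ is then equivalent to $\tr(Z_0\Omega)\ge\kappa\,g(\Omega)$ for all $\Omega\in\mathbf{CM}$, where $g(\Omega):=(\dim\mathcal{H}_{\rm A})^{-1}-\tr(\Theta\Omega)$. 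I would verify this by splitting on the sign of $g(\Omega)$: for $g(\Omega)>0$ it reads $\kappa\le\tr(Z_0\Omega)/g(\Omega)$, for $g(\Omega)<0$ the inequality reverses, and for $g(\Omega)=0$ it reduces to $\tr(Z_0\Omega)\ge 0$, which holds since $Z_0$ is positive on TPMs and $\mathbf{CM}\subseteq\mathbf{TPM}$. The value $\kappa(\Theta,Z_0)$ of \eqref{eq:kappa-SM} is precisely the extremal slope compatible with all these constraints, and that the admissible range is nonempty uses $\tr(\Theta\Omega)\le 1$ from the first part together with $\tr(Z_0\Omega)\ge 0$. The only delicate point here is the careful bookkeeping of the two sign regimes of $g$, since for optimal ERs $\tr(\Theta\Omega)$ can exceed $(\dim\mathcal{H}_{\rm A})^{-1}$.

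For the converse I would argue by conic duality. The QM witnesses are exactly the dual cone $\mathbf{CM}^{*}=\{Z=Z^\dagger:\tr(ZW)\ge 0\ \forall W\in\mathbf{CM}\}$, and the forward direction already shows that every $Z(\Theta,Z_0)$ lies in $\mathbf{CM}^{*}$. To obtain the reverse inclusion I would invoke the bipolar theorem: it suffices to check that the pre-dual of the cone generated by $\{Z_0+\kappa(\Theta-\mathrm{id}/\dim(\mathcal{H}_{\rm A}\otimes\mathcal{H}_{\rm B})):Z_0\in\mathbf{TPM}^{*},\ \Theta\in\mathbf{ER}\}$ recovers exactly $\mathbf{CM}$. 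Concretely, a process $W$ survives all TPM-positive $Z_0$ iff $W\in\mathbf{TPM}$, while surviving the ER terms encodes $\mathcal{E}(W)\le 1$; the point is that the ER constraints $\Theta\geqcurved 0$ and $\tr_{\rm A}\Theta\leqcurved\eta\otimes\mathrm{id}_{\rm C}$ are exactly the SDP dual of the defining data of $\mathbf{CM}$ (positivity of each $N_\lambda$ together with the marginal $\tr_{\rm C}N_\lambda=\mathrm{id}_{\rm B}$), so that $\max_{\Theta\in\mathbf{ER}}\tr(\Theta W)$ is the support function detecting departure from $\mathbf{CM}$.

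The hard part will be establishing this completeness rigorously. One must prove strong duality for the ER program (exhibiting a Slater point) and show that the ER family is rich enough to reproduce an arbitrary supporting functional of $\mathbf{CM}$ up to an additive TPM-positive remainder — this is exactly why the statement asserts that a witness can be \emph{refined}, not literally rewritten, since absorbing the remainder into $Z_0$ is what restores the claimed form. Operationally I would phrase the final step constructively: given $Z\in\mathbf{CM}^{*}$ detecting some $W^{*}$ (i.e. $\tr(ZW^{*})<0$), solve the SDP \eqref{eq:ER-SM} at $W^{*}$ to extract an optimal $\Theta_{*}$, read off $\kappa$ from $\lambda_{*}(\Theta_{*})=\max_{\Omega\in\mathbf{CM}}\tr(\Theta_{*}\Omega)$ as in \eqref{eq:kappa-SM}, and show that $Z-Z(\Theta_{*},Z_0)$ can be taken positive on TPMs. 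Because this reproduces, and inherits the difficulty of, the separability-with-marginals problem, I expect the verification that the ER cone is dual-complete to be the principal obstacle.
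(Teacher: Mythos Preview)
Your arguments for the bound \eqref{eq:ER-SM} and for the forward witness property are correct and, in fact, more explicit than the paper's: the paper uses the same monotonicity $\varrho_\lambda\leqcurved\mathrm{id}_{\rm A}$ followed by $\tr_{\rm A}\Theta\leqcurved\eta\otimes\mathrm{id}_{\rm C}$ and the channel marginal, but skips your intermediate object $G$; and the paper barely spells out the forward direction, whereas your sign analysis of $g(\Omega)$ is the right way to make it honest.

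Where you diverge from the paper is the converse, and here you are working much harder than necessary. The paper does \emph{not} appeal to conic duality or the bipolar theorem; it gives a one-line algebraic construction. Given an arbitrary self-adjoint $Z$, set
\[
\Theta=\frac{\mathrm{id}}{\dim(\mathcal{H}_{\rm A}\otimes\mathcal{H}_{\rm B})}-\beta\bigl(Z-\sigma\otimes\tr_{\rm A}Z\bigr),
\]
with $\sigma$ any density operator on $\mathcal{H}_{\rm B}$. By construction $\tr_{\rm A}\Theta=(\dim\mathcal{H}_{\rm B})^{-1}\mathrm{id}_{\rm BC}$, and since the bracket is bounded there exists $\varepsilon>0$ with $\Theta\geqcurved 0$ for all $|\beta|\le\varepsilon$; hence $\Theta\in\mathbf{ER}$. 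Solving for $Z$ gives $Z=Z_0+\kappa(\Theta-\mathrm{id}/\dim(\mathcal{H}_{\rm A}\otimes\mathcal{H}_{\rm B}))$ with $Z_0=\sigma\otimes\tr_{\rm A}Z$ and $\kappa=-1/\beta$. Taking $\sigma$ maximally mixed, one has for every $W\in\mathbf{TPM}$
\[
\tr(Z_0W)=\tr\!\left[Z\left(\frac{\mathrm{id}_{\rm A}}{\dim\mathcal{H}_{\rm A}}\otimes\tr_{\rm A}W\right)\right],
\]
and since $(\dim\mathcal{H}_{\rm A})^{-1}\mathrm{id}_{\rm A}\otimes\tr_{\rm A}W\in\mathbf{CM}$ this is nonnegative whenever $Z$ is a QM witness. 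The optimal $\kappa$ is then fixed by tightness, $\min_{\Omega\in\mathbf{CM}}\tr(Z\Omega)=0$, which yields \eqref{eq:kappa-SM}.

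The point is that the ``TPM-positive remainder'' you worried about is simply $\sigma\otimes\tr_{\rm A}Z$: subtracting it kills the $\rm A$-marginal, and a small rescaling around the maximally mixed operator lands you in $\mathbf{ER}$. No Slater points, no strong duality, no dual-completeness of the ER cone are needed. Your route could presumably be made to work, but it replaces a two-line decomposition with a structural SDP argument; the trade-off is that the paper's construction is elementary but gives a non-optimal $\Theta$ (it is just \emph{some} ER reproducing $Z$), whereas your duality picture, if carried through, would more directly tie the witness to the optimal $\Theta_*$ in \eqref{eq:ER-SM}.
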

\begin{proof}
$W\in\mathbf{CM}$ means that $W=\int \hsp\varrho_{\lambda}\otimes N_{\lambda}\hsp \dd \omega_{\lambda}$ for some probability measure $\omega$, density operators $\varrho_{\lambda}$ and quantum channels $N_\lambda$. Since $\varrho_\lambda\leqcurved \mathrm{id}_{\rm A}$, one has
\begin{align}
\mathcal{E}(W)&=\max_{\Theta\in\mathbf{ER}} \tr(\Theta W)\nonumber \\ &=\max_{\Theta\in\mathbf{ER}}\int  \tr[\Theta(\varrho_\lambda\otimes N_\lambda)]\dd\omega_\lambda \nonumber \\  &\leq \max_{\Theta\in\mathbf{ER}}\int\tr[\Theta(\mathrm{id}_{\rm A}\otimes N_\lambda)]\dd\omega_\lambda \nonumber \\ &\leq \max_{\Theta\in\mathbf{ER}}\int\tr[(\sigma\otimes \mathrm{id}_{\rm C})N_{\lambda}]\dd\omega_\lambda\nonumber \\ &=1,
\end{align}
where the trace-constraint of each $N_{\rm \lambda}$, $\tr_{\rm C}\hsp N_{\lambda}=\mathrm{id}_{\rm B}$, was used.

Now let $Z$ be an arbitrary self-adjoint operator on $ABC$, and define
\begin{equation}
\Theta=\frac{\mathrm{id}}{\dim(\mathcal{H}_{\rm A}\otimes \mathcal{H}_{\rm B})}-\beta \left(Z-\sigma\otimes \tr_{\rm A}\hsp Z\right),
\end{equation}
where $\sigma$ is an \emph{arbitrary density operator} on $\mathcal{H}_{\rm B}$. $\Theta$ satisfies $\tr_{\rm A}\hsp \Theta= (\dim\mathcal{H}_{\rm B})^{-1}\mathrm{id}_{\rm BC}$, and there is an $\varepsilon>0$ such that for $\beta\in[-\varepsilon,\varepsilon]$ one has $\Theta\geqcurved 0$; so that
that $\Theta\in\mathbf{ER}$ for the values of $\beta$ in such an interval. One can thus write
\begin{equation}
Z=Z_0+\kappa\left(\Theta-\frac{\mathrm{id}}{\dim(\mathcal{H}_{\rm A}\otimes\mathcal{H}_{\rm B})}\right),
\end{equation}
where $Z_0=\sigma\otimes \tr_{\rm A}\hsp Z$ and $\kappa=-1/\beta$. If one picks $\sigma$ as the maximally mixed state, then for any $W\in\mathbf{TPM}$
\begin{align}
    \tr(Z_0W)&=(\dim\mathcal{H}_{\rm A})^{-1}\tr[(\mathrm{id}_{\rm A}\otimes \tr_{\rm A}\hsp Z)W] \nonumber \\ &=\tr\left[Z\left(\frac{\mathrm{id}_{\rm A}}{\dim\mathcal{H}_{\rm A}}\otimes \tr_{\rm A}\hsp W\right)\right].
\end{align}
If $Z$ is a QM witness then the right-hand side is always non-negative, since $N:=\tr_{\rm A}\hsp W$ satisfies $N\geqcurved 0$ and $\tr_{\rm C}\hsp N=\mathrm{id}_{\rm B}$ for all $W\in\mathbf{TPM}$. Finally, consider $Z$ as a function of $\kappa$, keeping $Z_0$ and $\Theta$ fixed. Each valid value of $\kappa$ (i.e., such that $Z$ is a QM witness) defines a hyperplane that separates the set $\bf CM$ from the other TPM processes. Thus, the optimal value for $\kappa$ consists of the one for which 
\begin{equation}
\min_{\Omega\in\mathbf{CM}}\tr(Z\Omega)=0,
\end{equation}
whence Eqs.~\eqref{eq:QM-Wit-SM} and \eqref{eq:kappa-SM} follow directly.
\end{proof}

\section*{Appendix D. Proof of Observation 3}\hypertarget{AppendixD}{ }

\begin{obsapp}\hypertarget{observation2}
The maximum average success probability is characterized 
by
\begin{equation}
\label{eq:maximum_sucess}
\max_{\mathcal{S}\in\mathbf{Strat}}
\frac{1}{|\mathcal{X}|}\sum_{x\in\mathcal{X}}\mathrm{Pr}(y=x|x,\mathcal{S})
= \frac{\mathcal{E}(W)}{\dim \mathcal{H}_{\rm A}},
\end{equation}
where $\mathrm{P}(y=x|x,\mathcal{S})$ reads as the probability of correctly decoding input $x$ in a strategy $\mathcal{S}$.
\end{obsapp}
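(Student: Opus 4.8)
The plan is to rewrite the average success probability as a linear functional of the process matrix $W$, identify the operator implementing it, and show that, as the strategy varies over $\mathbf{Strat}$, this operator sweeps exactly $\mathbf{ER}$ up to the fixed factor $\dim\mathcal{H}_{\rm A}$; this turns the optimization in Eq.~\eqref{eq:maximum_sucess} into the semidefinite program defining $\mathcal{E}(W)$. Concretely, I would first compute $\mathrm{Pr}(y=x|x,\mathcal{S})$ from the comb/link-product composition rules: since the tester effects $\{E_y\}$ are linear in the process and the encodings enter through their Choi operators $M_{\mathcal{E}_x}$, the probability is a trilinear function of $\{M_{\mathcal{E}_x}\}$, $\{E_y\}_{y\in\mathcal{X}\cup\emptyset}$ and $W$. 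Contracting each encoding with the effect carrying the same letter and averaging over the uniformly sampled $x$, one obtains
\[
\frac{1}{|\mathcal{X}|}\sum_{x\in\mathcal{X}}\mathrm{Pr}(y=x|x,\mathcal{S})=\frac{1}{\dim\mathcal{H}_{\rm A}}\hsp\tr(\Theta_{\mathcal{S}}\hsp W),
\]
for a strategy-dependent self-adjoint $\Theta_{\mathcal{S}}$ on $\mathcal{H}_{\rm A}\otimes\mathcal{H}_{\rm B}\otimes\mathcal{H}_{\rm C}$; the prefactor $1/\dim\mathcal{H}_{\rm A}$ is the normalization already visible in $\mathcal{E}(W)=(\dim\mathcal{H}_{\rm A})\bra{\Phi^+}\varrho_{\rm R}(W)\ket{\Phi^+}$, since $|\mathcal{X}|=(\dim\mathcal{H}_{\rm A})^2$.

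For the upper bound I would then verify $\Theta_{\mathcal{S}}\in\mathbf{ER}$ for every strategy. Positivity is immediate because $\Theta_{\mathcal{S}}$ is assembled from $M_{\mathcal{E}_x}\geqcurved 0$ and $E_x\geqcurved 0$. The marginal constraint $\eta\otimes\mathrm{id}_{\rm C}\geqcurved\tr_{\rm A}\Theta_{\mathcal{S}}$ follows by tracing out $\rm A$ and combining the tester normalization of $\{E_y\}_{y\in\mathcal{X}\cup\emptyset}$ with trace preservation ($\tr_{\rm B}M_{\mathcal{E}_x}=\mathrm{id}_{\rm A}$) and unitality ($\tr_{\rm A}M_{\mathcal{E}_x}=\mathrm{id}_{\rm B}$) of the encodings: unitality is precisely what forces the operator $\eta$ dominating the $\rm B$ marginal to be a bona fide density operator, while the slack relative to $\eta\otimes\mathrm{id}_{\rm C}$ is carried by the non-negative inconclusive effect $E_\emptyset$. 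Hence the left-hand side of Eq.~\eqref{eq:maximum_sucess} is at most $\mathcal{E}(W)/\dim\mathcal{H}_{\rm A}$ by the definition in Eq.~\eqref{eq:ER-SM}.

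For achievability I would, given any $\Theta\in\mathbf{ER}$ (in particular the maximizer $\Theta_*$), exhibit a strategy attaining $\tr(\Theta W)/\dim\mathcal{H}_{\rm A}$. Choose the encodings to be a complete family of $(\dim\mathcal{H}_{\rm A})^2$ unitaries forming a unitary $1$-design (e.g.\ the Heisenberg--Weyl operators); these are unital CPTP maps whose output dimension equals the input dimension. The $1$-design identity $\tfrac{1}{|\mathcal{X}|}\sum_x\mathcal{E}_x(X)\mathcal{E}_x^\dagger\propto(\tr X)\hsp\mathrm{id}$ makes the assignment $\mathcal{S}\mapsto\Theta_{\mathcal{S}}$ surjective onto $\mathbf{ER}$: one inverts it to read the required effects $\{E_x\}_{x\in\mathcal{X}}$ off $\Theta$, and sets $E_\emptyset$ equal to the operator---positive semidefinite exactly because of the ER marginal constraint---that completes $\{E_y\}_{y\in\mathcal{X}\cup\emptyset}$ to a valid tester. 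The resulting average success probability is $\tr(\Theta W)/\dim\mathcal{H}_{\rm A}$, so maximizing over $\Theta\in\mathbf{ER}$ yields $\mathcal{E}(W)/\dim\mathcal{H}_{\rm A}$ and matches the upper bound. Morally this realizes a superdense-coding protocol on the retrieved state $\varrho_{\rm R}(W)=\Theta_*\star W$: the $1$-design unitaries encode, a maximally-entangled-basis measurement decodes, and the success probability is the singlet fraction $\bra{\Phi^+}\varrho_{\rm R}(W)\ket{\Phi^+}$.

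The step I expect to be the main obstacle is the precise bookkeeping of the strategy--ER correspondence in both directions. In the upper bound, deriving the ER marginal inequality forces one to use trace preservation and unitality of the encodings simultaneously and to track correctly where the inconclusive effect absorbs the slack; in the achievability direction, one must check that every effect produced by inverting the $1$-design identity---including $E_\emptyset$---is positive semidefinite, which is exactly where the defining \emph{inequality} (rather than an equality) in the ER constraint is indispensable. A more routine check is that unitary encodings are consistent with the requirement that the output dimension of $\mathcal{E}_x$ not exceed the input dimension.
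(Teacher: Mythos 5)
Your proposal is correct and follows essentially the same route as the paper's Appendix D: the upper bound comes from showing that $\Theta_{\mathcal{S}}\propto\sum_x M_{\mathcal{E}_x}\star E_x$ lies in $\mathbf{ER}$ (positivity plus the marginal constraint from unitality of the encodings and the tester normalization, with $E_\emptyset$ absorbing the slack), and achievability comes from the superdense-coding construction with the $(\dim\mathcal{H}_{\rm A})^2$ Heisenberg--Weyl unitaries as encodings and the tester $E_y=F_y\star\Theta$ obtained by contracting the Bell-measurement effects with $\Theta$. The paper phrases the converse direction as a direct construction of a strategy from a given $\Theta\in\mathbf{ER}$ rather than as surjectivity of $\mathcal{S}\mapsto\Theta_{\mathcal{S}}$, but the content is identical.
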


\begin{proof}
In what follows, ``$\star$'' will denote so-called link product \cite{CDP09}. Let $\Theta\in\mathbf{ER}$ and consider
a POVM $\mathcal{X} \ni y \mapsto F_y \in\mathrm{L}(\mathcal{H}_{\rm A}\otimes \mathcal{H}_{\rm A^\prime})$,
aiming to decode the value of $x$. Here, $\mathcal{H}_{\rm A^\prime}\simeq \mathcal{H}_{\rm A}$ and $|\mathcal{X}|<\infty$. Then the mapping $y \mapsto E_y=F_y\star \Theta\in\mathrm{L}(\mathcal{H}_{\rm A^\prime}\otimes \mathcal{H}_{\rm B}\otimes \mathcal{H}_{\rm C})$ satisfies
\begin{align}
\sum_{y\in\mathcal{X}} E_y &=\sum_{y\in\mathcal{X}} F_y\star \Theta \nonumber \\ 
&=\mathrm{id}_{\rm AA^\prime}\star \Theta  \nonumber \\ 
&= \mathrm{id}_{\rm A^\prime}\otimes \tr_{\rm A}\hsp \Theta \nonumber \\ 
&\leq \mathrm{id}_{\rm A^\prime}\otimes \sigma\otimes \mathrm{id}_{\rm C}.
\end{align}
Therefore, if one defined the additional inconclusive tester effect as
\begin{equation}
E_{\emptyset}:=\mathrm{id}_{\rm A^\prime}\otimes \sigma\otimes \mathrm{id}_{\rm C}-\sum_{x\in\mathcal{X}} E_x,
\end{equation}
then the mapping $\{\emptyset\}\cup \mathcal{X}\ni y\mapsto E_y$ defines a tester. 

Next, consider $F_y$ as the Bell state measurement over $\mathcal{H}_{\rm A}\otimes \mathcal{H}_{\rm A^\prime}$ [therefore, $|\mathcal{X}|=(\dim\mathcal{H}_{\rm A})^2$]. From the CJ isomorphism, each $F_y$ is uniquely associated to a unitary transformation, say $U_x$. Denote by $G_x$ the CJ operator corresponding to the unitary transformation $U_x^{-1}$. Since $U_x U_x^{-1}$ is the identity, one has that with suitable relabeling of the Hilbert spaces,
\begin{align}
\tr(\Theta W)&=(\dim\mathcal{H}_{\rm A})\tr[(F_x\star G_x)\Theta W] &\forall x\in\mathcal{X}\nonumber \\ &=(\dim\mathcal{H}_{\rm A})\tr[(F_x\star \Theta)(W\star G_x)] &\forall x\in\mathcal{X} \nonumber \\  &=\frac{1}{\dim\mathcal{H}_{\rm A}}\sum_{x\in\mathcal{X}}\tr(E_x W_x) \nonumber \\ &=\frac{1}{|\mathcal{X}|^{1/2}}\sum_{x\in\mathcal{X}}\mathrm{P}(y=x|x;\mathcal{S}),
\end{align}
where the strategy $\mathcal{S}$ is defined by the unitary channel encodings $\mathcal{U}_y$ and the tester defined via link product of the CJ operator of the Bell state measurement effects, $F_y$, and the ER $\Theta$. Therefore
\begin{equation}\label{eq:ineq}
\max_{\Theta\in\mathbf{ER}}\tr(\Theta W)\leq \max_{\mathcal{S}\in\mathbf{Strat}}|\mathcal{X}|^{-1/2}\sum_{x\in\mathcal{X}}\mathrm{Pr}(y=x|x;\mathcal{S}).
\end{equation}
and especially that equality in Eq.~(\ref{eq:maximum_sucess})
can be reached.

To prove the reverse inequality, and thus establish the equality, consider the optimal strategy $\mathcal{S}_{\rm opt}$ as defined by a tester $\mathcal{X}\ni x\mapsto E_x^{\rm opt}\in\mathrm{L}(\mathcal{H}_{\rm A^\prime}\otimes \mathcal{H}_{\rm B}\otimes \mathcal{H}_{\rm C})$ and encoding channels $\mathcal{X}\ni x\mapsto M_x^{\rm opt}\in\mathrm{L}(\mathcal{H}_{\rm A}\otimes \mathcal{H}_{\rm A^\prime})$, where $\tr_{\rm A^\prime}\hsp M_x^{\rm opt}=\mathrm{id}_{\rm A}$. The additional assumptions to state the inequality is that the dimensionality of the output Hilbert space of each channel $M_x^{\rm opt}$ cannot exceed the dimensionality of the input and that they represent unital channels, $\tr_{\rm A}\hsp M_x^{\rm opt}=\mathrm{id}_{\rm A^\prime}$. Without loss of generality, assume $\mathcal{H}_{\rm A}\simeq \mathcal{H}_{\rm A^\prime}$. With the help of the link product, one can write
\begin{align}
\sum_{x\in\mathcal{X}}\mathrm{P}(y=x|x,\mathcal{S}_{\rm opt})&=\sum_{x\in\mathcal{X}}\tr\big[E_x^{\rm opt}\big(W\star M_x^{\rm opt}\big)\big] \nonumber \\ &=\sum_{x\in\mathcal{X}}\tr\big[\big(M_x^{\rm opt}\star E_x^{\rm opt}\big)W\big].
\end{align}
Finally, conclude that
\begin{equation}
\Theta_{\rm opt}:=\frac{1}{|\mathcal{X}|^{1/2}}\sum_{x\in\mathcal{X}}M_x^{\rm opt}\star E_x^{\rm opt}\in \mathbf{ER},
\end{equation}
since $\Theta\geq 0$ and
\begin{align}\label{eq:ineqproof}
\tr_{\rm A}\hsp \Theta &=\frac{1}{|\mathcal{X}|^{1/2}}\sum_{x\in\mathcal{X}}\mathrm{id}_{\rm A} \star (M_x^{\rm opt}\star E_x^{\rm opt}) \nonumber \\ &=\frac{\dim\mathcal{H}_{\rm A}}{|\mathcal{X}|^{1/2}}\sigma\otimes \mathrm{id}_{\rm C} \nonumber \\ &\leq \sigma\otimes \mathrm{id}_{\rm C},
\end{align}
as long as $|\mathcal{X}|\geq (\dim\mathcal{H}_{\rm A})^2$, where
\begin{equation}
\sigma=\frac{1}{\dim\mathcal{H}_{\rm C}}\tr_{\rm AC}\sum_{x\in\mathcal{X}}E_x.
\end{equation}
Moreover, the fact that $M_{x}^{\rm opt}$ is unital was used, namely, $\tr_{\rm A}\hsp M_x^{\rm opt}=\mathrm{id}_{\rm A^\prime}$. Consequently,
\begin{equation}
\frac{1}{|\mathcal{X}|^{1/2}}\sum_{x\in\mathcal{X}}\mathrm{Pr}(y=x|x;\mathcal{S}_{\rm opt})\leq \max_{\Theta\in\mathbf{ER}}\tr(\Theta W),
\end{equation}
which, together with Ineq.~\eqref{eq:ineq}, leads to the equality in Eq.~\eqref{eq:maximum_sucess}.
\end{proof}

\section*{Appendix E. Proof of Observation 4}\hypertarget{AppendixE}{ }

\begin{obsapp}\hypertarget{obs4}{ }
The action of a QM witness in the Heisenberg picture can be always be written as
\begin{equation}\label{eq:memory_HeisenbergApp}
\tr[Z W(t,\tau)]=\sum_{ij}d_{ij}(Z)\hsp\mathfrak{g}^{\textrm{\tiny$(2)$}}(E_i,F_j;t,\tau),
\end{equation}
where $d_{ij}(Z)$ are real coefficients and the operators $E_i$ and $F_j$ are fixed, i.e., at the initial time they do not depend on the process nor the witness.
\end{obsapp}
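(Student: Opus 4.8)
The plan is to reduce the statement to a linear-algebra fact: that one fixed, process-independent family of probe operators generates, through the bilinear form of TPM correlations, enough linear functionals on process matrices to reconstruct $\tr[ZW]$ for any self-adjoint $Z$. The starting point is Eq.~\eqref{eq:TPM2} (derived in \hyperlink{Heisenberg_to_CJ}{Appendix A}), $\mathfrak{g}^{\textrm{\tiny$(2)$}}(E,F;t,\tau)=\tr[(M_E\otimes F^\dagger F)\,W(t,\tau)]$, together with the explicit Choi--Jamio{\l}kowski operator of Eq.~\eqref{eq:CJI}. Setting $|E\rangle\rangle=\sum_i\ket{i}_{\rm A}\otimes E\ket{i}$ one has $M_E=|E\rangle\rangle\langle\langle E|$, so $E\mapsto|E\rangle\rangle$ is a linear bijection $\mathrm{L}(\mathcal{H}_{\rm A},\mathcal{H}_{\rm B})\to\mathcal{H}_{\rm A}\otimes\mathcal{H}_{\rm B}$ and the $M_E$ run over \emph{all} rank-one positive operators on $\mathcal{H}_{\rm A}\otimes\mathcal{H}_{\rm B}$.

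First I would fix two spanning families, once and for all, compatibly with the probe constraints $E^\dagger E,F^\dagger F\leqcurved\mathrm{id}$. On $\mathcal{H}_{\rm A}\otimes\mathcal{H}_{\rm B}$, choose vectors $\ket{v_i}$ of norm at most $1$ whose projectors $\ket{v_i}\bra{v_i}$ span the real space of Hermitian operators (Hermitian operators are always spanned by rank-one projectors, and rescaling leaves the span unchanged); let $E_i$ be the operator with $|E_i\rangle\rangle=\ket{v_i}$, so that $\tr(E_i^\dagger E_i)=\|\ket{v_i}\|^2\leq 1$ forces $E_i^\dagger E_i\leqcurved\mathrm{id}$ and $M_{E_i}=\ket{v_i}\bra{v_i}$. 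On $\mathcal{H}_{\rm C}$, analogously choose $F_j$ with $F_j^\dagger F_j\leqcurved\mathrm{id}$ such that $\{F_j^\dagger F_j\}$ spans $\mathrm{Herm}(\mathcal{H}_{\rm C})$. These families depend only on the Hilbert-space dimensions, not on the process or the witness.

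Next I would argue that $\{M_{E_i}\otimes F_j^\dagger F_j\}_{ij}$ spans $\mathrm{Herm}(\mathcal{H}_{\rm A}\otimes\mathcal{H}_{\rm B}\otimes\mathcal{H}_{\rm C})$: the real span of $\{A\otimes B:A\in\mathrm{Herm}(\mathcal{H}_{\rm A}\otimes\mathcal{H}_{\rm B}),\,B\in\mathrm{Herm}(\mathcal{H}_{\rm C})\}$ lies inside that space and has the same real dimension $(\dim\mathcal{H}_{\rm A}\dim\mathcal{H}_{\rm B})^2(\dim\mathcal{H}_{\rm C})^2$, hence equals it, and products of spanning sets of the factors span the tensor product. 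Since a QM witness $Z$ is self-adjoint, it then admits an expansion
\[
Z=\sum_{ij}d_{ij}(Z)\,M_{E_i}\otimes F_j^\dagger F_j
\]
with real coefficients $d_{ij}(Z)$ depending linearly on $Z$ (not unique if the family is overcomplete, which is harmless). Applying $\tr[\,\cdot\,W(t,\tau)]$ to both sides and using Eq.~\eqref{eq:TPM2} term by term yields $\tr[Z\,W(t,\tau)]=\sum_{ij}d_{ij}(Z)\,\mathfrak{g}^{\textrm{\tiny$(2)$}}(E_i,F_j;t,\tau)$, i.e., Eq.~\eqref{eq:memory_HeisenbergApp}; all time dependence sits in $W(t,\tau)$ — equivalently in the Heisenberg-evolved $E_i(t),F_j(t+\tau)$ inside $\mathfrak{g}^{\textrm{\tiny$(2)$}}$ — so at the initial time the probes $E_i,F_j$ are fixed.

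I expect no real obstacle; the only point requiring care is the second step, where one must simultaneously span the relevant Hermitian spaces, respect $E^\dagger E,F^\dagger F\leqcurved\mathrm{id}$, and keep the families independent of the process. The rescaled-rank-one-projector construction achieves all three at once. Note that this decomposition is purely linear-algebraic and need not reflect the $\mathbf{CM}$/$\mathbf{QM}$ geometry, in parallel with the way entanglement witnesses decompose into local measurements.
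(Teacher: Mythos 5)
Your proposal is correct and follows essentially the same route as the paper's Appendix E: expand the witness $Z$ in a fixed tensor-product spanning family of positive operators on $\mathcal{H}_{\rm A}\otimes\mathcal{H}_{\rm B}$ and $\mathcal{H}_{\rm C}$, identify each factor with a physically admissible probe via the Choi--Jamio{\l}kowski correspondence, and invoke bilinearity of $\tr[(M_E\otimes F^\dagger F)W]$. The only (harmless) difference is that the paper uses general positive-semidefinite frame elements $M_i$ and then Kraus-decomposes each into operators $E_{\gamma_i}$ with an index relabeling, whereas you choose rank-one elements from the outset so that each frame vector corresponds to a single $E_i$, and you are more explicit about enforcing $E_i^\dagger E_i,F_j^\dagger F_j\leq\mathrm{id}$.
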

\begin{proof}
In the CJ picture, the action of a QM witness $Z$ in a TPM process $W(t_1,t_2)$ reads
\begin{equation}\label{eq:Mt1t2}
m_{Z}(t,\tau)=\tr[ZW(t,\tau)].
\end{equation}
Let $M_i\in\mathrm{L}(\mathcal{H}_{\rm A}\otimes \mathcal{H}_{\rm B})$ and $F_j\in\mathrm{L}(\mathcal{H}_{\rm B})$ be positive semidefinite operators spanning their corresponding operator spaces, i.e., frames. Thus,
\begin{equation}
Z=\sum_{ij} d_{ij}(Z)M_i\otimes F_j,
\end{equation}
where $d_{ij}(Z)=\tr[(\tilde{M}_i\otimes \tilde{F}_j)Z]$, with $\tilde{M}_i$ (resp. $\tilde{F}_j$) denoting the dual of $M_i$ (resp. $F_j$). Therefore,
\begin{equation}
m_Z(t,\tau)=\sum_{ij}d_{ij}(Z)\tr[(M_i\otimes F_j)W(t,\tau)].
\end{equation}
Without loss of generality, consider the operators $F_j$ as defining an IC-POVM and $\tr_{\rm B}\hsp M_i\leq \mathrm{id}_{\rm A}$. From the channel-state duality, each operator $M_i$ can be mapped into a CP map,
\begin{equation}
M_i\mapsto \sum_{\gamma_i} E_{\gamma_i}\cdot E_{\gamma_i}.
\end{equation}
Plugging this into Eq.~\eqref{eq:Mt1t2}, moving to the Heisenberg picture (see \hyperlink{Heisenberg_to_CJ}{Appendix A}), and redefining the indices one gets Eq.~\eqref{eq:memory_HeisenbergApp}.
\end{proof}

\section*{Appendix F. Heat flow and the role of the spectral density in the spin-Boson model}\hypertarget{appendixF}{ }

In the context of the spin-boson model, we derive two main results, corresponding to Eqs.~(19) and (20) in the main text. Here, we discuss this results in more details, by showing their derivations as well as additional illustrative examples. The key point of our analysis is to find a suitable QM witness in the limiting case where the process matrix can be obtained analytically, then extend the analysis for the general case.

To start off, consider the spin-boson Hamiltonian in the frame rotating together with the free Hamiltonian,
\begin{equation}
H_{\rm RF}(t)=\tilde{B}(t)^\dagger \sigma+\sigma^\dagger \tilde{B}(t),
\end{equation}
where
\begin{equation}
\tilde{B}(t)=\sum_{\lambda}\ee^{-\ii (\omega_\lambda-\omega_0) t}\hsp \hbar g_\lambda \hsp b_\lambda.
\end{equation}
Since the total number of excitations is conserved, i.e., the ``number operator''
\begin{equation}
N=\sigma^\dagger \sigma+\sum_{\lambda}b_\lambda^\dagger b_{\lambda}^{ }
\end{equation}
commutes with the Hamiltonian, then, if the system is initialized in the state
\begin{equation}\label{eq:initial_state}
\ket{\psi(0)}=\ket{e}\otimes \ket{\mathrm{vac}},
\end{equation} 
it will evolve to
\begin{equation}\label{eq:single-exc}
\ket{\psi(t)}=q(t) \ket{e}\otimes \ket{\mathrm{vac}}+\sum_{\lambda} \alpha_\lambda(t)\ket{g}\otimes b_\lambda^\dagger\ket{\mathrm{vac}},
\end{equation}
where $q$ and $\alpha_\lambda$ are probability amplitudes. That is, the wave function of the system is trapped in the ``single excitation subspace'', which significantly simplifies the analysis. Plugging Eq.~\eqref{eq:single-exc} into the Schr\"{o}dinger equation,
\begin{equation}
\ii \hbar \frac{\dd}{\dd t}\ket{\psi(t)}=H_{\rm RF}(t)\ket{\psi(t)},
\end{equation}
we obtain
\begin{subeqnarray}
\dot{q}(t)&=&-\ii \sum_{\lambda} g_\lambda\hsp \ee^{-\ii (\omega_\lambda- \omega_0)t}\alpha_\lambda(t), \\ 
\dot{\alpha}_{\lambda}(t)&=&-\ii g_\lambda\hsp  \ee^{\ii (\omega_\lambda- \omega_0)t}q(t).
\end{subeqnarray}
Integrating the second equation and plugging the solution into the first we obtain the following integral equation
\begin{equation}\label{eq:convolution}
\dot{q}(t)=-\int_0^t f(t-s)q(s)\dd s,
\end{equation}
where the integration kernel is given by the bath two-time correlation function,
\begin{equation}
f(u)=\tr[B(u)B\hsp \varrho_{\rm Env}]\ee^{\ii  \omega_0 u}.
\end{equation}
The Fourier transform of the bath two-time correlation function defines the bath spectral density $J(\omega)$, 
\begin{equation}
f(u)=\int J(\omega)\ee^{\ii ( \omega_0-\omega)u}\hsp\dd\omega.
\end{equation}
Since the right-hand side of Eq.~\eqref{eq:convolution} is a convolution, it can be solved by means of the Laplace ($\rm Lap$) transform,
\begin{equation}\label{eq:Laplace}
\mathrm{Lap}[q(t)](w)=\frac{1}{w+\mathrm{Lap}[f(t)](w)},
\end{equation}
where the initial condition $q(0)=1$ was used [Eq.~\eqref{eq:initial_state}]. In the single-excitation subspace, the TPM process matrix $W(t,\tau)$ can be computed in terms of the amplitudes $q$ and $\alpha_k$ by using our \hyperlink{Lemma2}{Lemma 2}. 

\begin{figure*}
    \centering
    (a)\includegraphics[width=0.45\linewidth]{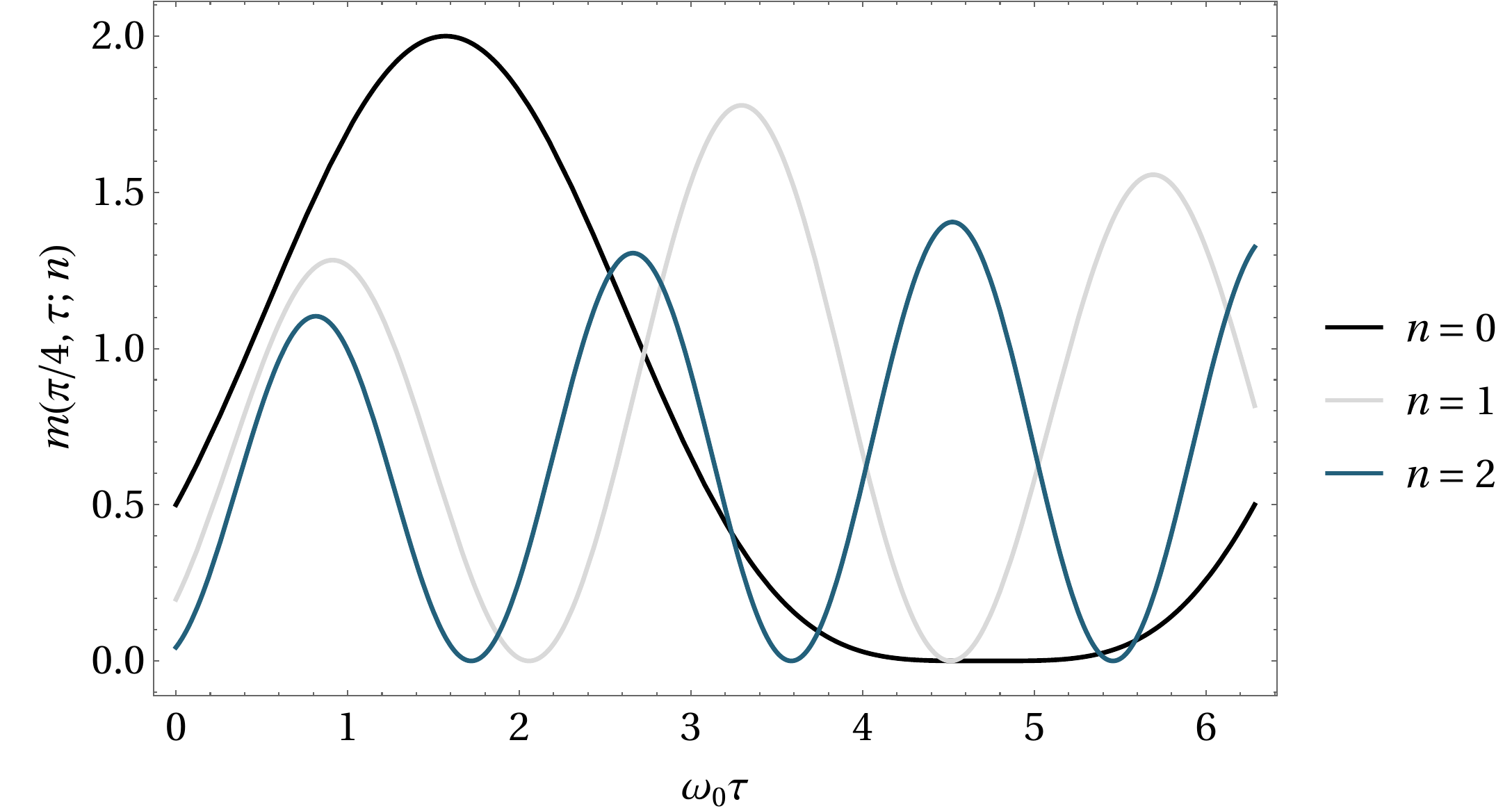}
    (b)\includegraphics[width=0.49\linewidth]{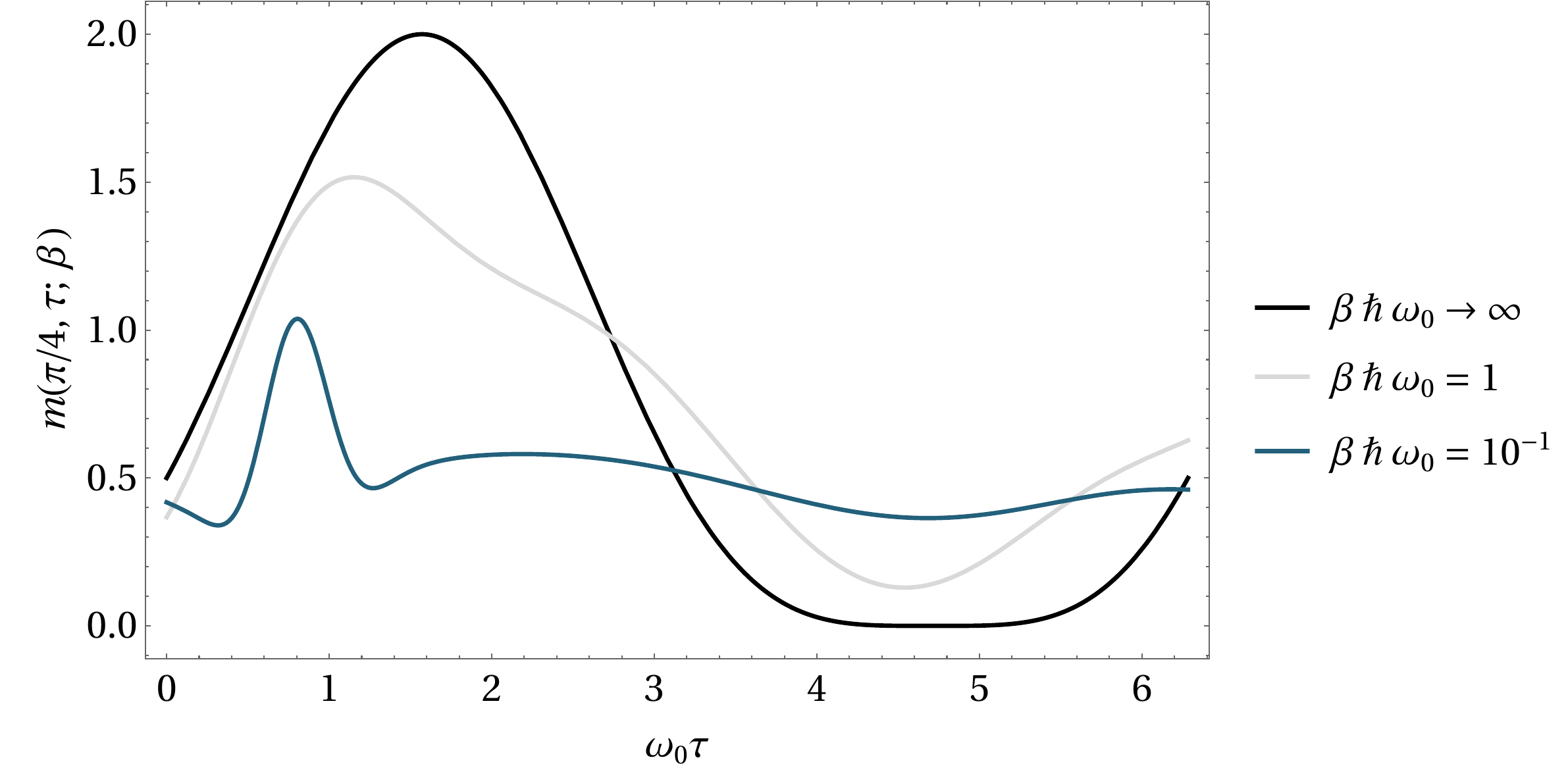}
    \caption{Quantum memory detection [$m(\pi/4,\tau)>1$] for Fock (a) and thermal states (b) in the spin-boson model.}
    \label{fig:fock-thermal}
\end{figure*}

To find a QM witness, we consider the simplest case of a single-mode resonant cavity, wherein
\begin{equation}\label{eq:single-mode}
J(\omega)=g^2 \delta(\omega- \omega_0).
\end{equation}
This corresponds to the resonant Jaynes-Cummings model in the rotating wave approximation.
We run our SDP method for this case,
\begin{align}
m(t,\tau)&=\mathcal{E}(W(t,\tau))\nonumber \\ &=\max_{\Theta\in\mathbf{ER}}\tr[W(t,\tau)\Theta].
\end{align}
By doing so we realize that the solution of this optimization problem (up to numerical precision) yields
\begin{equation}\label{eq:optimalER}
\Theta_*=2\hsp\ket{g}\bra{g}\otimes \ket{\Psi^-}\bra{\Psi^-},
\end{equation}
where $\ket{g}\in\mathcal{H}_{\rm B}$ and $\ket{\Psi^-}\in\mathcal{H}_{\rm A}\otimes \mathcal{H}_{\rm C}$ is the singlet state,  
\begin{equation}
\ket{\Psi^-}=\frac{1}{\sqrt{2}}(\ket{g}\otimes \ket{e}-\ket{e}\otimes \ket{g}).
\end{equation}
Clearly, $\Theta_*\in\mathbf{ER}$ since $\Theta_*\geq 0$ and
\begin{equation}
\tr_{\rm A}\hsp \Theta_*=\mathrm{id}_{\rm A}\otimes \ket{g}\bra{g}.
\end{equation}
Moreover, notice that
\begin{align}
\max_{W\in\mathbf{CM}}\tr(\Theta_* W)&=1,
\end{align}
and therefore the QM witness associated to $\Theta_*$ is already optimal (i.e., no more SDPs need to be computed). 
Plugging $\Theta_\star$ into $\tr[W(t,\tau)\Theta_\star]$,  we obtain
\begin{equation}\label{eq:witnessSD}
m(t,\tau)=\left|q(t)+\int_0^t\int_0^\tau f(t+\tau-r-s)q(r)q(s)\dd r \dd s\right|^2.
\end{equation}
This expression is one of our main results within this model [Eq.~(20) in the main text].

\begin{rem}
Even though we have used numerical optimization in the single-mode case [Eq.~\eqref{eq:single-mode}] to arrive at the form of $\Theta_*$, once we have it, Eq.~\eqref{eq:witnessSD} is valid for any spectral density. Of course, it might be the case that $\Theta_\star$ does not necessarily lead to an optimal QM witness outside the resonant single-mode case, but the violation of the inequality we derived, namely $m(t,\tau) \leq 1$,
is still a definite signature of non-classicality regardless of the spectral density.
\end{rem}

In the main text, we explored Eq.~\eqref{eq:witnessSD} for the case of a resonant Lorentzian spectral density,
\begin{equation}
J(\omega)=\frac{\lambda^2 \gamma_0}{2\pi[( \omega_0-\omega)^2+\lambda^2]},
\end{equation}
and initial state as in Eq.~\eqref{eq:initial_state}. To show that our results are \emph{not} limited to this particular example, we consider two additional examples.

\begin{example}[\sc Different initial states]
Consider again the single-mode resonant spectral density, $J(\omega)=g^2\delta(\omega- \omega_0)$, but with a more general initial state,
\begin{equation}
\varrho_{\rm SEnv}(0)=\ket{e}\bra{e}\otimes \varrho_{\rm Env},
\end{equation}
such that $\varrho_{\rm Env}$ is diagonal in the Fock basis of $H_{\rm Env}=\hbar \omega_0b^\dagger b$. 
As the total Hamiltonian is easily diagonalizable for a single bath mode, we do not need to restrict ourselves to the single excitation space. 
For an initial Fock state $\ket{n}$ one has
\begin{align}
m(t,\tau;n)&=\Big[\sin \left(\sqrt{n+1} gt\right) \sin \left(\sqrt{n+1} g\tau\right)\nonumber \\ &\quad+\cos \left(\sqrt{n+1} gt\right) \cos \left(\sqrt{n}g\tau\right)\Big]^2.
\end{align}
For a thermal state, one takes the Gibbs mixture over $n$. Fig.~\ref{fig:fock-thermal} shows exemplary values of $m$ for (a) various Fock states and (b) various inverse temperatures $\beta = 1/k_B T$. Quantum memory can be detected for states of low $n$ or temperature.
\end{example}

\begin{example}[\sc Ohmic spectral density]
As a more involved example, consider an Ohmic spectral density with hard cutoff,
\begin{equation}
J(\omega)=\eta \omega\hsp  \Theta_{\rm Heaviside}(\omega_{\rm C}-\omega).
\end{equation}
The cutoff frequency $\omega_{\rm C}$ determines how structured the bath is (the bigger it is the more ``Markovian'' the bath). The two-time correlation function equals
\begin{equation}
f(t)=\frac{\eta  \left(\ii t \omega_{\rm C}-\ee^{\ii \omega_{\rm C} t}+1\right) \ee^{\ii t ( \omega_0-\omega_{\rm C})}}{t^2}.
\end{equation}
Its Laplace transform reads
\begin{equation}
\mathrm{Lap}[f(t)](w)=-\ii \eta  \omega_{\rm C}-\eta (w-\ii  \omega_0) \log \left(\frac{w-\ii  \omega_0}{w-\ii ( \omega_0-\omega_{\rm C})}\right), 
\end{equation}
which determines the solution for $q(t)$ via \eqref{eq:Laplace}.
We numerically invert the Laplace transform to compute $m(t,\tau)$, which we then maximize over times $t$ and $\tau$ to obtain the strongest QM signature $m(t,\tau)$ for different cutoffs. The results are plotted in Fig.~\ref{fig:cutoff}. QM is detectable for $\omega_C \lesssim 3 \omega_0$ in this example.
\end{example}

\begin{figure}
    \centering
    \includegraphics[width=0.8\linewidth]{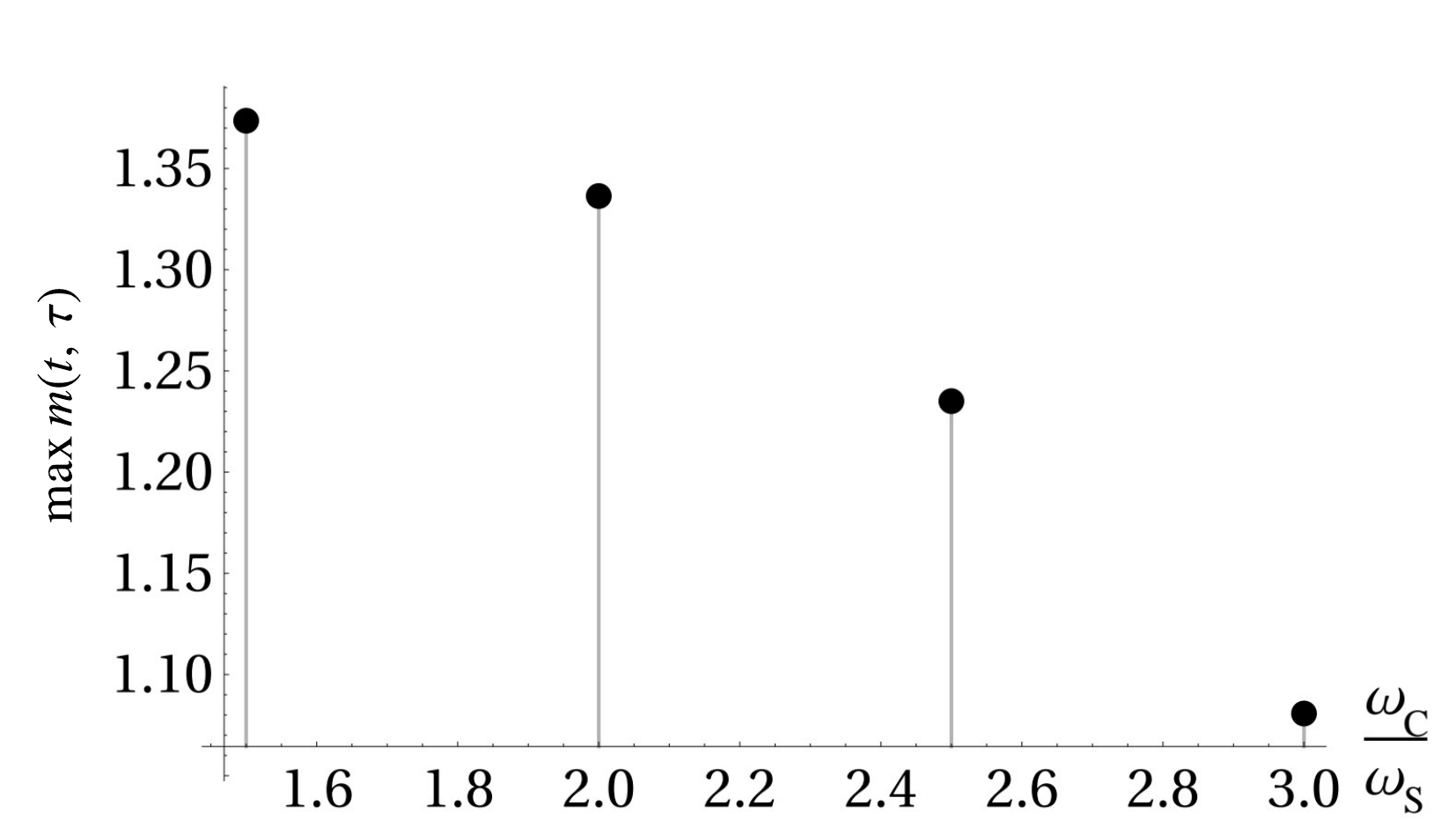}
    \caption{Numerical estimation of the cutoff frequency after which quantum memory is no longer detectable ($\omega_{\rm C}\approx 3 \omega_0$) via $m(t,\tau)$ (numerically maximised over different times) in the Ohmic spectral density with hard cutoff.}
    \label{fig:cutoff}
\end{figure}

In the main text, we present an inequality [Eq. (18) there] to detect quantum memory in a generic spin-boson model (under rotating-wave approximation in the Hamiltonian), which is independent on both spectral density and initial system-environment state. Such inequality is
\begin{align}\label{eq:IneqDJC}
2\hbar \omega_0\mathrm{Re}\Big\langle&{N}(t)^{-\frac{1}{2}}\sin\big({N}(t)^\frac{1}{2}\tau\big){B}(t)^\dagger {\sigma}(t)\cos\big({N}(t)^\frac{1}{2}\tau\big)\Big\rangle\nonumber \\ &+\Big\langle\cos\Big(2{N}(t)^{\frac{1}{2}}\tau\Big){H}_{\rm S}(t)\Big\rangle\leq \frac{\hbar \omega_0}{2},
\end{align}
where $N(t)\equiv B(t)^\dagger B(t)$. The derivation of this bound is length as it involves applications of the previous results together with symbolic calculations that was done with the help of \textsc{Mathematica} that will be omitted. In what follows, we outline the steps of the proof.

The first step is to fix a suitable frame for the operator spaces. Let $\mathbf{M}_{\rm AB}=\{M_a\in \mathrm{L}(\mathcal{H}_{\rm A}\otimes \mathcal{H}_{\rm B}):a\in\mathcal{A}\}$ and $\mathbf{F}_{\rm C}=\{F_b \in\mathrm{L}(\mathcal{H}_{\rm C}):b\in\mathcal{B}\}$ be frames for the corresponding operator spaces, i.e., finite set of operators such that $\mathrm{Span}\hsp \mathbf{M}_{\rm AB}=\mathrm{L}(\mathcal{H}_{\rm A}\otimes \mathcal{H}_{\rm B})$ and $\mathrm{Span}\hsp \mathbf{F}_{\rm C}=\mathrm{L}(\mathcal{H}_{\rm C})$. Since $\dim(\mathcal{H}_{\rm A}\otimes\mathcal{H}_{\rm B})=2\dim\mathcal{H}_{\rm C}=4$, SIC-POVMs are suitable frames since their dual set are easily computable. 

A set of rank-1 projectors $\{\Pi_x:0\leq x\leq d^2-1 \}\subset\mathrm{L}(\mathcal{H})$, $d=\dim\mathcal{H}$, defines a SIC-POVM if
\begin{equation}
\tr\big(\Pi_\mu^{(d)} \Pi_\nu^{(d)}\big)=\frac{d\delta_{\mu\nu}+1}{d+1},
\end{equation}
where the POVM effects are defined via $P_x^{(d)}=\Pi_x^{(d)}/d$. The dual elements are
\begin{equation}
\tilde{P}_x^{(d)}=d(d+1)P_x^{(d)}-\mathrm{id},
\end{equation}
which satisfies $\tr\big(\tilde{P}_\mu^{(d)}P_\nu^{(d)}\big)=\delta_{\mu\nu}$. Explicit expressions for SIC-POVMs of dimension 2 and 4 can be easily found in many references, e.g., \cite{Bengtsson2024,Appleby2025}. Define
\begin{equation}
d_{ij}(\Theta_*)=\tr\big[\big(\tilde{P}_i^{(4)}\otimes \tilde{P}_j^{(2)}\big)\Theta_*\big].
\end{equation}
In this representation one has
\begin{equation}
\Theta_*=\sum_{ij}d_{ij}(\Theta_*)\hsp P_i^{(4)}\otimes P_j^{(2)}.
\end{equation}
Next, $P_i^{(4)}$ is mapped into a CP map via channel-state duality, $P_i^{(4)}\mapsto E_i\cdot E_i^\dagger$, where
\begin{equation}
E_a=\sum_{ij}\bra{\Pi_a^{(4)}}(\ket{i}\otimes \ket{j})\hsp\ket{i}\bra{j},
\end{equation}
$\Pi_a^{(4)}=\ket{\Pi_a^{(4)}}\bra{\Pi_a^{(4)}}$. The $2^2\times 4^2=64$ real numbers $d_{ij}(\Theta_*)$ as well as the $16$ ($2\times 2$)-matrices $E_a$ can be easily computed with \textsc{Mathematica} (or any other software for symbolic calculation).

From \hyperlink{obs4}{Observation 4}, the action of $\Theta_*$ upon a TPM process matrix $W(t,\tau)$ in the Heisenberg picture reads
\begin{equation}
\tr[\Theta_* W(t,\tau)]=\sum_{ij}d_{ij}(\Theta_*)\hsp\mathfrak{g}^{\textrm{\tiny$(2)$}}(E_i,F_j;t,\tau).
\end{equation}
Let $t_1=t_0+t$ and $t_2=t_1+\tau$ and write $m_{Z_{\star}}$ as a formal power series of $\tau$,
\begin{align}\label{eq:power_series}
m_{\Theta_*}(t,\tau)&=\sum_{n\geq 0}\frac{\tau^n}{n!}\frac{\partial^{n}}{\partial T^{n}}m_{\Theta_*}(t,T)\Big|_{T=0}\nonumber \\ &=\sum_{ab}\sum_{n\geq 0}\frac{\tau^n}{n!}d_{ij}(\Theta_*)\nonumber \\ &\times\Big\langle\hat{E}_a(t_1)^\dagger \frac{\partial^{n}}{\partial T^{n}} \hat{F}_b(T)\Big|_{T=0} \hat{E}_a(t_1)\Big\rangle.
\end{align}

The next step is to move to a more convenient frame. In the frame rotating together with the free Hamiltonian (Dirac's interaction picture), one has
\begin{equation}
H_{\rm RF}(t)=\big(\tilde{B}(t)^\dagger \sigma+\sigma^\dagger \tilde{B}(t)\big),
\end{equation}
where
\begin{equation}
\tilde{B}(t)=\sum_{\lambda}\hbar g_\lambda\ee^{-\ii (\omega_\lambda-\omega_0) t} b_\lambda.
\end{equation}
The evolution of a system operator $X$ in the Heisenberg picture in the rotating frame is given by
\begin{align}
\dot{\hat{X}}(t)&=-\frac{1}{\ii\hbar}[\hat{H}_{\star-RF}(t),\hat{X}(t)] \nonumber \\ &=[\hat{L}(t)^\dagger \hat{B}(t),\hat{X}(t)]+[\hat{B}(t)^\dagger \hat{L}(t),\hat{X}(t)]\nonumber \\ &\equiv \mathfrak{H}[\hat{X}(t)]
\end{align}
The derivatives in Eq.~\eqref{eq:power_series} can thus be recursively computed via
\begin{equation}
\frac{\partial^{n}}{\partial T^n}\hat{X}(T)\Big|_{T=t_0+t}=\mathfrak{H}\left[\frac{\partial^{n-1}}{\partial T^{n-1}}\hat{X}(T)\Big|_{T=t_0+t}\right].
\end{equation}
By expanding in different orders of $\tau$, we conjectured the form of the inequality given in Eq.~\eqref{eq:IneqDJC}, based on its consistency with the Taylor expansion. Finally, we verified this conjecture by treating it as an induction hypothesis, which was confirmed using {\sc Mathematica}.


\begin{thebibliography}{89}%
\makeatletter
\providecommand \@ifxundefined [1]{%
 \@ifx{#1\undefined}
}%
\providecommand \@ifnum [1]{%
 \ifnum #1\expandafter \@firstoftwo
 \else \expandafter \@secondoftwo
 \fi
}%
\providecommand \@ifx [1]{%
 \ifx #1\expandafter \@firstoftwo
 \else \expandafter \@secondoftwo
 \fi
}%
\providecommand \natexlab [1]{#1}%
\providecommand \enquote  [1]{``#1''}%
\providecommand \bibnamefont  [1]{#1}%
\providecommand \bibfnamefont [1]{#1}%
\providecommand \citenamefont [1]{#1}%
\providecommand \href@noop [0]{\@secondoftwo}%
\providecommand \href [0]{\begingroup \@sanitize@url \@href}%
\providecommand \@href[1]{\@@startlink{#1}\@@href}%
\providecommand \@@href[1]{\endgroup#1\@@endlink}%
\providecommand \@sanitize@url [0]{\catcode `\\12\catcode `\$12\catcode `\&12\catcode `\#12\catcode `\^12\catcode `\_12\catcode `\%12\relax}%
\providecommand \@@startlink[1]{}%
\providecommand \@@endlink[0]{}%
\providecommand \url  [0]{\begingroup\@sanitize@url \@url }%
\providecommand \@url [1]{\endgroup\@href {#1}{\urlprefix }}%
\providecommand \urlprefix  [0]{URL }%
\providecommand \Eprint [0]{\href }%
\providecommand \doibase [0]{https://doi.org/}%
\providecommand \selectlanguage [0]{\@gobble}%
\providecommand \bibinfo  [0]{\@secondoftwo}%
\providecommand \bibfield  [0]{\@secondoftwo}%
\providecommand \translation [1]{[#1]}%
\providecommand \BibitemOpen [0]{}%
\providecommand \bibitemStop [0]{}%
\providecommand \bibitemNoStop [0]{.\EOS\space}%
\providecommand \EOS [0]{\spacefactor3000\relax}%
\providecommand \BibitemShut  [1]{\csname bibitem#1\endcsname}%
\let\auto@bib@innerbib\@empty
\bibitem [{\citenamefont {Gardiner}(1985)}]{Gardiner1985}%
  \BibitemOpen
  \bibfield  {author} {\bibinfo {author} {\bibfnamefont {C.~W.}\ \bibnamefont {Gardiner}},\ }\href {https://doi.org/10.1007/978-3-662-02452-2} {\emph {\bibinfo {title} {Handbook of Stochastic Methods for Physics, Chemistry and the Natural Sciences}}}\ (\bibinfo  {publisher} {Springer Berlin Heidelberg},\ \bibinfo {year} {1985})\BibitemShut {NoStop}%
\bibitem [{\citenamefont {Mandelbrot}\ and\ \citenamefont {Van~Ness}(1968)}]{Mandelbrot1968}%
  \BibitemOpen
  \bibfield  {author} {\bibinfo {author} {\bibfnamefont {B.~B.}\ \bibnamefont {Mandelbrot}}\ and\ \bibinfo {author} {\bibfnamefont {J.~W.}\ \bibnamefont {Van~Ness}},\ }\bibfield  {title} {\bibinfo {title} {Fractional brownian motions, fractional noises and applications},\ }\href {https://doi.org/10.1137/1010093} {\bibfield  {journal} {\bibinfo  {journal} {SIAM Review}\ }\textbf {\bibinfo {volume} {10}},\ \bibinfo {pages} {422–437} (\bibinfo {year} {1968})}\BibitemShut {NoStop}%
\bibitem [{\citenamefont {Vijay}\ \emph {et~al.}(2012)\citenamefont {Vijay}, \citenamefont {Macklin}, \citenamefont {Slichter}, \citenamefont {Weber}, \citenamefont {Murch}, \citenamefont {Naik}, \citenamefont {Korotkov},\ and\ \citenamefont {Siddiqi}}]{Vijay2012}%
  \BibitemOpen
  \bibfield  {author} {\bibinfo {author} {\bibfnamefont {R.}~\bibnamefont {Vijay}}, \bibinfo {author} {\bibfnamefont {C.}~\bibnamefont {Macklin}}, \bibinfo {author} {\bibfnamefont {D.~H.}\ \bibnamefont {Slichter}}, \bibinfo {author} {\bibfnamefont {S.~J.}\ \bibnamefont {Weber}}, \bibinfo {author} {\bibfnamefont {K.~W.}\ \bibnamefont {Murch}}, \bibinfo {author} {\bibfnamefont {R.}~\bibnamefont {Naik}}, \bibinfo {author} {\bibfnamefont {A.~N.}\ \bibnamefont {Korotkov}},\ and\ \bibinfo {author} {\bibfnamefont {I.}~\bibnamefont {Siddiqi}},\ }\bibfield  {title} {\bibinfo {title} {Stabilizing {R}abi oscillations in a superconducting qubit using quantum feedback},\ }\href {https://doi.org/10.1038/nature11505} {\bibfield  {journal} {\bibinfo  {journal} {Nature}\ }\textbf {\bibinfo {volume} {490}},\ \bibinfo {pages} {77–80} (\bibinfo {year} {2012})}\BibitemShut {NoStop}%
\bibitem [{\citenamefont {Alhambra}\ \emph {et~al.}(2020)\citenamefont {Alhambra}, \citenamefont {Riddell},\ and\ \citenamefont {Garcia-Pintos}}]{Alhambra2020}%
  \BibitemOpen
  \bibfield  {author} {\bibinfo {author} {\bibfnamefont {A.~M.}\ \bibnamefont {Alhambra}}, \bibinfo {author} {\bibfnamefont {J.}~\bibnamefont {Riddell}},\ and\ \bibinfo {author} {\bibfnamefont {L.~P.}\ \bibnamefont {Garcia-Pintos}},\ }\bibfield  {title} {\bibinfo {title} {Time evolution of correlation functions in quantum many-body systems},\ }\href {http://dx.doi.org/10.1103/PhysRevLett.124.110605} {\bibfield  {journal} {\bibinfo  {journal} {Physical Review Letters}\ }\textbf {\bibinfo {volume} {124}} (\bibinfo {year} {2020})}\BibitemShut {NoStop}%
\bibitem [{\citenamefont {Dowling}\ \emph {et~al.}(2023{\natexlab{a}})\citenamefont {Dowling}, \citenamefont {Figueroa-Romero}, \citenamefont {Pollock}, \citenamefont {Strasberg},\ and\ \citenamefont {Modi}}]{Dowling2023a}%
  \BibitemOpen
  \bibfield  {author} {\bibinfo {author} {\bibfnamefont {N.}~\bibnamefont {Dowling}}, \bibinfo {author} {\bibfnamefont {P.}~\bibnamefont {Figueroa-Romero}}, \bibinfo {author} {\bibfnamefont {F.~A.}\ \bibnamefont {Pollock}}, \bibinfo {author} {\bibfnamefont {P.}~\bibnamefont {Strasberg}},\ and\ \bibinfo {author} {\bibfnamefont {K.}~\bibnamefont {Modi}},\ }\bibfield  {title} {\bibinfo {title} {Relaxation of multitime statistics in quantum systems},\ }\href {http://dx.doi.org/10.22331/q-2023-06-01-1027} {\bibfield  {journal} {\bibinfo  {journal} {Quantum}\ }\textbf {\bibinfo {volume} {7}},\ \bibinfo {pages} {1027} (\bibinfo {year} {2023}{\natexlab{a}})}\BibitemShut {NoStop}%
\bibitem [{\citenamefont {Dowling}\ \emph {et~al.}(2023{\natexlab{b}})\citenamefont {Dowling}, \citenamefont {Figueroa-Romero}, \citenamefont {Pollock}, \citenamefont {Strasberg},\ and\ \citenamefont {Modi}}]{Dowling2023b}%
  \BibitemOpen
  \bibfield  {author} {\bibinfo {author} {\bibfnamefont {N.}~\bibnamefont {Dowling}}, \bibinfo {author} {\bibfnamefont {P.}~\bibnamefont {Figueroa-Romero}}, \bibinfo {author} {\bibfnamefont {F.~A.}\ \bibnamefont {Pollock}}, \bibinfo {author} {\bibfnamefont {P.}~\bibnamefont {Strasberg}},\ and\ \bibinfo {author} {\bibfnamefont {K.}~\bibnamefont {Modi}},\ }\bibfield  {title} {\bibinfo {title} {Equilibration of multitime quantum processes in finite time intervals},\ }\href {http://dx.doi.org/10.21468/SciPostPhysCore.6.2.043} {\bibfield  {journal} {\bibinfo  {journal} {SciPost Physics Core}\ }\textbf {\bibinfo {volume} {6}} (\bibinfo {year} {2023}{\natexlab{b}})}\BibitemShut {NoStop}%
\bibitem [{\citenamefont {Glauber}(1963)}]{Glauber1963}%
  \BibitemOpen
  \bibfield  {author} {\bibinfo {author} {\bibfnamefont {R.~J.}\ \bibnamefont {Glauber}},\ }\bibfield  {title} {\bibinfo {title} {The quantum theory of optical coherence},\ }\href {https://doi.org/10.1103/physrev.130.2529} {\bibfield  {journal} {\bibinfo  {journal} {Physical Review}\ }\textbf {\bibinfo {volume} {130}},\ \bibinfo {pages} {2529–2539} (\bibinfo {year} {1963})}\BibitemShut {NoStop}%
\bibitem [{\citenamefont {Lax}(1968)}]{Lax68}%
  \BibitemOpen
  \bibfield  {author} {\bibinfo {author} {\bibfnamefont {M.}~\bibnamefont {Lax}},\ }\bibfield  {title} {\bibinfo {title} {Quantum noise. {X}{I}. {M}ultitime correspondence between quantum and classical stochastic processes},\ }\href {http://dx.doi.org/10.1103/PhysRev.172.350} {\bibfield  {journal} {\bibinfo  {journal} {Physical Review}\ }\textbf {\bibinfo {volume} {172}},\ \bibinfo {pages} {350–361} (\bibinfo {year} {1968})}\BibitemShut {NoStop}%
\bibitem [{\citenamefont {Gardiner}\ and\ \citenamefont {Zoller}(2004)}]{GardinerZoller2004}%
  \BibitemOpen
  \bibfield  {author} {\bibinfo {author} {\bibfnamefont {C.~W.}\ \bibnamefont {Gardiner}}\ and\ \bibinfo {author} {\bibfnamefont {P.}~\bibnamefont {Zoller}},\ }\href {https://www.springer.com/gp/book/9783540225110} {\emph {\bibinfo {title} {Quantum Noise: A Handbook of Markovian and Non-Markovian Quantum Stochastic Methods with Applications to Quantum Optics}}},\ \bibinfo {edition} {2nd}\ ed.\ (\bibinfo  {publisher} {Springer},\ \bibinfo {address} {Berlin},\ \bibinfo {year} {2004})\BibitemShut {NoStop}%
\bibitem [{\citenamefont {Landi}\ \emph {et~al.}(2024)\citenamefont {Landi}, \citenamefont {Kewming}, \citenamefont {Mitchison},\ and\ \citenamefont {Potts}}]{Landi2024}%
  \BibitemOpen
  \bibfield  {author} {\bibinfo {author} {\bibfnamefont {G.~T.}\ \bibnamefont {Landi}}, \bibinfo {author} {\bibfnamefont {M.~J.}\ \bibnamefont {Kewming}}, \bibinfo {author} {\bibfnamefont {M.~T.}\ \bibnamefont {Mitchison}},\ and\ \bibinfo {author} {\bibfnamefont {P.~P.}\ \bibnamefont {Potts}},\ }\bibfield  {title} {\bibinfo {title} {Current fluctuations in open quantum systems: Bridging the gap between quantum continuous measurements and full counting statistics},\ }\href {http://dx.doi.org/10.1103/PRXQuantum.5.020201} {\bibfield  {journal} {\bibinfo  {journal} {PRX Quantum}\ }\textbf {\bibinfo {volume} {5}} (\bibinfo {year} {2024})}\BibitemShut {NoStop}%
\bibitem [{\citenamefont {Li}\ \emph {et~al.}(2018)\citenamefont {Li}, \citenamefont {Hall},\ and\ \citenamefont {Wiseman}}]{Li2018}%
  \BibitemOpen
  \bibfield  {author} {\bibinfo {author} {\bibfnamefont {L.}~\bibnamefont {Li}}, \bibinfo {author} {\bibfnamefont {M.~J.}\ \bibnamefont {Hall}},\ and\ \bibinfo {author} {\bibfnamefont {H.~M.}\ \bibnamefont {Wiseman}},\ }\bibfield  {title} {\bibinfo {title} {Concepts of quantum non-markovianity: A hierarchy},\ }\href {https://doi.org/10.1016/j.physrep.2018.07.001} {\bibfield  {journal} {\bibinfo  {journal} {Physics Reports}\ }\textbf {\bibinfo {volume} {759}},\ \bibinfo {pages} {1–51} (\bibinfo {year} {2018})}\BibitemShut {NoStop}%
\bibitem [{\citenamefont {Ford}\ and\ \citenamefont {O’Connell}(1996)}]{Ford1996}%
  \BibitemOpen
  \bibfield  {author} {\bibinfo {author} {\bibfnamefont {G.~W.}\ \bibnamefont {Ford}}\ and\ \bibinfo {author} {\bibfnamefont {R.~F.}\ \bibnamefont {O’Connell}},\ }\bibfield  {title} {\bibinfo {title} {There is no quantum regression theorem},\ }\href {http://dx.doi.org/10.1103/PhysRevLett.77.798} {\bibfield  {journal} {\bibinfo  {journal} {Physical Review Letters}\ }\textbf {\bibinfo {volume} {77}},\ \bibinfo {pages} {798–801} (\bibinfo {year} {1996})}\BibitemShut {NoStop}%
\bibitem [{\citenamefont {Alonso}\ and\ \citenamefont {de~Vega}(2005)}]{Alonso2005}%
  \BibitemOpen
  \bibfield  {author} {\bibinfo {author} {\bibfnamefont {D.}~\bibnamefont {Alonso}}\ and\ \bibinfo {author} {\bibfnamefont {I.}~\bibnamefont {de~Vega}},\ }\bibfield  {title} {\bibinfo {title} {Multiple-time correlation functions for non-markovian interaction: Beyond the quantum regression theorem},\ }\href {http://dx.doi.org/10.1103/PhysRevLett.94.200403} {\bibfield  {journal} {\bibinfo  {journal} {Physical Review Letters}\ }\textbf {\bibinfo {volume} {94}} (\bibinfo {year} {2005})}\BibitemShut {NoStop}%
\bibitem [{\citenamefont {Alonso}\ and\ \citenamefont {de~Vega}(2007)}]{Alonso2007}%
  \BibitemOpen
  \bibfield  {author} {\bibinfo {author} {\bibfnamefont {D.}~\bibnamefont {Alonso}}\ and\ \bibinfo {author} {\bibfnamefont {I.}~\bibnamefont {de~Vega}},\ }\bibfield  {title} {\bibinfo {title} {Hierarchy of equations of multiple-time correlation functions},\ }\href {http://dx.doi.org/10.1103/PhysRevA.75.052108} {\bibfield  {journal} {\bibinfo  {journal} {Physical Review A}\ }\textbf {\bibinfo {volume} {75}} (\bibinfo {year} {2007})}\BibitemShut {NoStop}%
\bibitem [{\citenamefont {Guarnieri}\ \emph {et~al.}(2014)\citenamefont {Guarnieri}, \citenamefont {Smirne},\ and\ \citenamefont {Vacchini}}]{Guarnieri2014}%
  \BibitemOpen
  \bibfield  {author} {\bibinfo {author} {\bibfnamefont {G.}~\bibnamefont {Guarnieri}}, \bibinfo {author} {\bibfnamefont {A.}~\bibnamefont {Smirne}},\ and\ \bibinfo {author} {\bibfnamefont {B.}~\bibnamefont {Vacchini}},\ }\bibfield  {title} {\bibinfo {title} {Quantum regression theorem and non-markovianity of quantum dynamics},\ }\href {http://dx.doi.org/10.1103/PhysRevA.90.022110} {\bibfield  {journal} {\bibinfo  {journal} {Physical Review A}\ }\textbf {\bibinfo {volume} {90}} (\bibinfo {year} {2014})}\BibitemShut {NoStop}%
\bibitem [{\citenamefont {de~Vega}\ and\ \citenamefont {Alonso}(2017)}]{VA17}%
  \BibitemOpen
  \bibfield  {author} {\bibinfo {author} {\bibfnamefont {I.}~\bibnamefont {de~Vega}}\ and\ \bibinfo {author} {\bibfnamefont {D.}~\bibnamefont {Alonso}},\ }\bibfield  {title} {\bibinfo {title} {Dynamics of non-markovian open quantum systems},\ }\href {https://doi.org/10.1103/revmodphys.89.015001} {\bibfield  {journal} {\bibinfo  {journal} {Reviews of Modern Physics}\ }\textbf {\bibinfo {volume} {89}} (\bibinfo {year} {2017})}\BibitemShut {NoStop}%
\bibitem [{\citenamefont {Cosacchi}\ \emph {et~al.}(2021)\citenamefont {Cosacchi}, \citenamefont {Seidelmann}, \citenamefont {Cygorek}, \citenamefont {Vagov}, \citenamefont {Reiter},\ and\ \citenamefont {Axt}}]{Cosacchi2021}%
  \BibitemOpen
  \bibfield  {author} {\bibinfo {author} {\bibfnamefont {M.}~\bibnamefont {Cosacchi}}, \bibinfo {author} {\bibfnamefont {T.}~\bibnamefont {Seidelmann}}, \bibinfo {author} {\bibfnamefont {M.}~\bibnamefont {Cygorek}}, \bibinfo {author} {\bibfnamefont {A.}~\bibnamefont {Vagov}}, \bibinfo {author} {\bibfnamefont {D.~E.}\ \bibnamefont {Reiter}},\ and\ \bibinfo {author} {\bibfnamefont {V.~M.}\ \bibnamefont {Axt}},\ }\bibfield  {title} {\bibinfo {title} {Accuracy of the quantum regression theorem for photon emission from a quantum dot},\ }\href {http://dx.doi.org/10.1103/PhysRevLett.127.100402} {\bibfield  {journal} {\bibinfo  {journal} {Physical Review Letters}\ }\textbf {\bibinfo {volume} {127}} (\bibinfo {year} {2021})}\BibitemShut {NoStop}%
\bibitem [{\citenamefont {Kurt}(2021)}]{Kurt2021}%
  \BibitemOpen
  \bibfield  {author} {\bibinfo {author} {\bibfnamefont {A.}~\bibnamefont {Kurt}},\ }\bibfield  {title} {\bibinfo {title} {Two-time correlation functions beyond quantum regression theorem: effect of external noise},\ }\href {http://dx.doi.org/10.1007/s11128-021-03153-6} {\bibfield  {journal} {\bibinfo  {journal} {Quantum Information Processing}\ }\textbf {\bibinfo {volume} {20}} (\bibinfo {year} {2021})}\BibitemShut {NoStop}%
\bibitem [{\citenamefont {Lonigro}\ and\ \citenamefont {Chruściński}(2022{\natexlab{a}})}]{Lonigro2022a}%
  \BibitemOpen
  \bibfield  {author} {\bibinfo {author} {\bibfnamefont {D.}~\bibnamefont {Lonigro}}\ and\ \bibinfo {author} {\bibfnamefont {D.}~\bibnamefont {Chruściński}},\ }\bibfield  {title} {\bibinfo {title} {Quantum regression in dephasing phenomena},\ }\href {http://dx.doi.org/10.1088/1751-8121/ac6a2d} {\bibfield  {journal} {\bibinfo  {journal} {Journal of Physics A: Mathematical and Theoretical}\ }\textbf {\bibinfo {volume} {55}},\ \bibinfo {pages} {225308} (\bibinfo {year} {2022}{\natexlab{a}})}\BibitemShut {NoStop}%
\bibitem [{\citenamefont {Lonigro}\ and\ \citenamefont {Chruściński}(2022{\natexlab{b}})}]{Lonigro2022b}%
  \BibitemOpen
  \bibfield  {author} {\bibinfo {author} {\bibfnamefont {D.}~\bibnamefont {Lonigro}}\ and\ \bibinfo {author} {\bibfnamefont {D.}~\bibnamefont {Chruściński}},\ }\bibfield  {title} {\bibinfo {title} {Quantum regression beyond the born-markov approximation for generalized spin-boson models},\ }\href {http://dx.doi.org/10.1103/PhysRevA.105.052435} {\bibfield  {journal} {\bibinfo  {journal} {Physical Review A}\ }\textbf {\bibinfo {volume} {105}} (\bibinfo {year} {2022}{\natexlab{b}})}\BibitemShut {NoStop}%
\bibitem [{\citenamefont {Giarmatzi}\ and\ \citenamefont {Costa}(2021)}]{GC21}%
  \BibitemOpen
  \bibfield  {author} {\bibinfo {author} {\bibfnamefont {C.}~\bibnamefont {Giarmatzi}}\ and\ \bibinfo {author} {\bibfnamefont {F.}~\bibnamefont {Costa}},\ }\bibfield  {title} {\bibinfo {title} {Witnessing quantum memory in non-markovian processes},\ }\href {https://doi.org/10.22331/q-2021-04-26-440} {\bibfield  {journal} {\bibinfo  {journal} {Quantum}\ }\textbf {\bibinfo {volume} {5}},\ \bibinfo {pages} {440} (\bibinfo {year} {2021})}\BibitemShut {NoStop}%
\bibitem [{\citenamefont {Milz}\ \emph {et~al.}(2021)\citenamefont {Milz}, \citenamefont {Spee}, \citenamefont {Xu}, \citenamefont {Pollock}, \citenamefont {Modi},\ and\ \citenamefont {Gühne}}]{MSXPMG21}%
  \BibitemOpen
  \bibfield  {author} {\bibinfo {author} {\bibfnamefont {S.}~\bibnamefont {Milz}}, \bibinfo {author} {\bibfnamefont {C.}~\bibnamefont {Spee}}, \bibinfo {author} {\bibfnamefont {Z.-P.}\ \bibnamefont {Xu}}, \bibinfo {author} {\bibfnamefont {F.}~\bibnamefont {Pollock}}, \bibinfo {author} {\bibfnamefont {K.}~\bibnamefont {Modi}},\ and\ \bibinfo {author} {\bibfnamefont {O.}~\bibnamefont {Gühne}},\ }\bibfield  {title} {\bibinfo {title} {Genuine multipartite entanglement in time},\ }\href {https://doi.org/10.21468/scipostphys.10.6.141} {\bibfield  {journal} {\bibinfo  {journal} {{SciPost} Physics}\ }\textbf {\bibinfo {volume} {10}} (\bibinfo {year} {2021})}\BibitemShut {NoStop}%
\bibitem [{\citenamefont {Nery}\ \emph {et~al.}(2021)\citenamefont {Nery}, \citenamefont {Quintino}, \citenamefont {Gu{\'{e}}rin}, \citenamefont {Maciel},\ and\ \citenamefont {Vianna}}]{NQAGMV21}%
  \BibitemOpen
  \bibfield  {author} {\bibinfo {author} {\bibfnamefont {M.}~\bibnamefont {Nery}}, \bibinfo {author} {\bibfnamefont {M.~T.}\ \bibnamefont {Quintino}}, \bibinfo {author} {\bibfnamefont {P.~A.}\ \bibnamefont {Gu{\'{e}}rin}}, \bibinfo {author} {\bibfnamefont {T.~O.}\ \bibnamefont {Maciel}},\ and\ \bibinfo {author} {\bibfnamefont {R.~O.}\ \bibnamefont {Vianna}},\ }\bibfield  {title} {\bibinfo {title} {Simple and maximally robust processes with no classical common-cause or direct-cause explanation},\ }\href {https://doi.org/10.22331/q-2021-09-09-538} {\bibfield  {journal} {\bibinfo  {journal} {Quantum}\ }\textbf {\bibinfo {volume} {5}},\ \bibinfo {pages} {538} (\bibinfo {year} {2021})}\BibitemShut {NoStop}%
\bibitem [{\citenamefont {Taranto}\ \emph {et~al.}(2024)\citenamefont {Taranto}, \citenamefont {Quintino}, \citenamefont {Murao},\ and\ \citenamefont {Milz}}]{Taranto2024}%
  \BibitemOpen
  \bibfield  {author} {\bibinfo {author} {\bibfnamefont {P.}~\bibnamefont {Taranto}}, \bibinfo {author} {\bibfnamefont {M.~T.}\ \bibnamefont {Quintino}}, \bibinfo {author} {\bibfnamefont {M.}~\bibnamefont {Murao}},\ and\ \bibinfo {author} {\bibfnamefont {S.}~\bibnamefont {Milz}},\ }\bibfield  {title} {\bibinfo {title} {Characterising the hierarchy of multi-time quantum processes with classical memory},\ }\href {https://doi.org/10.22331/q-2024-05-02-1328} {\bibfield  {journal} {\bibinfo  {journal} {Quantum}\ }\textbf {\bibinfo {volume} {8}},\ \bibinfo {pages} {1328} (\bibinfo {year} {2024})}\BibitemShut {NoStop}%
\bibitem [{\citenamefont {Roy}\ \emph {et~al.}(2024)\citenamefont {Roy}, \citenamefont {Srivastava}, \citenamefont {Mahanti}, \citenamefont {Giarmatzi},\ and\ \citenamefont {Gilchrist}}]{Roy2024}%
  \BibitemOpen
  \bibfield  {author} {\bibinfo {author} {\bibfnamefont {A.~K.}\ \bibnamefont {Roy}}, \bibinfo {author} {\bibfnamefont {V.}~\bibnamefont {Srivastava}}, \bibinfo {author} {\bibfnamefont {S.}~\bibnamefont {Mahanti}}, \bibinfo {author} {\bibfnamefont {C.}~\bibnamefont {Giarmatzi}},\ and\ \bibinfo {author} {\bibfnamefont {A.}~\bibnamefont {Gilchrist}},\ }\bibfield  {title} {\bibinfo {title} {Semi-device-independent certification of quantum non-markovianity using sequential random access codes},\ }\href {http://dx.doi.org/10.1103/PhysRevA.110.012608} {\bibfield  {journal} {\bibinfo  {journal} {Physical Review A}\ }\textbf {\bibinfo {volume} {110}} (\bibinfo {year} {2024})}\BibitemShut {NoStop}%
\bibitem [{\citenamefont {Bäcker}\ \emph {et~al.}(2024)\citenamefont {Bäcker}, \citenamefont {Beyer},\ and\ \citenamefont {Strunz}}]{BBS24}%
  \BibitemOpen
  \bibfield  {author} {\bibinfo {author} {\bibfnamefont {C.}~\bibnamefont {Bäcker}}, \bibinfo {author} {\bibfnamefont {K.}~\bibnamefont {Beyer}},\ and\ \bibinfo {author} {\bibfnamefont {W.~T.}\ \bibnamefont {Strunz}},\ }\bibfield  {title} {\bibinfo {title} {Local disclosure of quantum memory in non-markovian dynamics},\ }\href {http://dx.doi.org/10.1103/PhysRevLett.132.060402} {\bibfield  {journal} {\bibinfo  {journal} {Physical Review Letters}\ }\textbf {\bibinfo {volume} {132}} (\bibinfo {year} {2024})}\BibitemShut {NoStop}%
\bibitem [{\citenamefont {Santos}\ \emph {et~al.}(2024)\citenamefont {Santos}, \citenamefont {Xu}, \citenamefont {Piilo},\ and\ \citenamefont {Gühne}}]{Santos2024}%
  \BibitemOpen
  \bibfield  {author} {\bibinfo {author} {\bibfnamefont {L.}~\bibnamefont {Santos}}, \bibinfo {author} {\bibfnamefont {Z.-P.}\ \bibnamefont {Xu}}, \bibinfo {author} {\bibfnamefont {J.}~\bibnamefont {Piilo}},\ and\ \bibinfo {author} {\bibfnamefont {O.}~\bibnamefont {Gühne}},\ }\bibfield  {title} {\bibinfo {title} {Quantifying information flow in quantum processes},\ }\href@noop {} {\  (\bibinfo {year} {2024})},\ \Eprint {https://arxiv.org/abs/2402.04213v2} {arXiv:2402.04213v2 [quant-ph]} \BibitemShut {NoStop}%
\bibitem [{\citenamefont {Ohst}\ \emph {et~al.}(2024)\citenamefont {Ohst}, \citenamefont {Zhang}, \citenamefont {Nguyen}, \citenamefont {Plávala},\ and\ \citenamefont {Quintino}}]{Ohst2024}%
  \BibitemOpen
  \bibfield  {author} {\bibinfo {author} {\bibfnamefont {T.-A.}\ \bibnamefont {Ohst}}, \bibinfo {author} {\bibfnamefont {S.}~\bibnamefont {Zhang}}, \bibinfo {author} {\bibfnamefont {H.~C.}\ \bibnamefont {Nguyen}}, \bibinfo {author} {\bibfnamefont {M.}~\bibnamefont {Plávala}},\ and\ \bibinfo {author} {\bibfnamefont {M.~T.}\ \bibnamefont {Quintino}},\ }\bibfield  {title} {\bibinfo {title} {Characterising memory in quantum channel discrimination via constrained separability problems},\ }\href@noop {} {\  (\bibinfo {year} {2024})},\ \Eprint {https://arxiv.org/abs/2411.08110v1} {arXiv:2411.08110v1 [quant-ph]} \BibitemShut {NoStop}%
\bibitem [{\citenamefont {Goswami}\ \emph {et~al.}(2024)\citenamefont {Goswami}, \citenamefont {Roy}, \citenamefont {Srivastava}, \citenamefont {Perez}, \citenamefont {Giarmatzi}, \citenamefont {Gilchrist},\ and\ \citenamefont {Costa}}]{Goswami2024}%
  \BibitemOpen
  \bibfield  {author} {\bibinfo {author} {\bibfnamefont {K.}~\bibnamefont {Goswami}}, \bibinfo {author} {\bibfnamefont {A.~K.}\ \bibnamefont {Roy}}, \bibinfo {author} {\bibfnamefont {V.}~\bibnamefont {Srivastava}}, \bibinfo {author} {\bibfnamefont {B.}~\bibnamefont {Perez}}, \bibinfo {author} {\bibfnamefont {C.}~\bibnamefont {Giarmatzi}}, \bibinfo {author} {\bibfnamefont {A.}~\bibnamefont {Gilchrist}},\ and\ \bibinfo {author} {\bibfnamefont {F.}~\bibnamefont {Costa}},\ }\bibfield  {title} {\bibinfo {title} {Hamiltonian characterisation of multi-time processes with classical memory},\ }\href@noop {} {\  (\bibinfo {year} {2024})},\ \Eprint {https://arxiv.org/abs/2412.01998v1} {arXiv:2412.01998v1 [quant-ph]} \BibitemShut {NoStop}%
\bibitem [{\citenamefont {Foligno}\ \emph {et~al.}(2023)\citenamefont {Foligno}, \citenamefont {Zhou},\ and\ \citenamefont {Bertini}}]{Foligno2023}%
  \BibitemOpen
  \bibfield  {author} {\bibinfo {author} {\bibfnamefont {A.}~\bibnamefont {Foligno}}, \bibinfo {author} {\bibfnamefont {T.}~\bibnamefont {Zhou}},\ and\ \bibinfo {author} {\bibfnamefont {B.}~\bibnamefont {Bertini}},\ }\bibfield  {title} {\bibinfo {title} {Temporal entanglement in chaotic quantum circuits},\ }\href {http://dx.doi.org/10.1103/PhysRevX.13.041008} {\bibfield  {journal} {\bibinfo  {journal} {Physical Review X}\ }\textbf {\bibinfo {volume} {13}} (\bibinfo {year} {2023})}\BibitemShut {NoStop}%
\bibitem [{\citenamefont {Carignano}\ \emph {et~al.}(2024)\citenamefont {Carignano}, \citenamefont {Marimón},\ and\ \citenamefont {Tagliacozzo}}]{Carignano2024}%
  \BibitemOpen
  \bibfield  {author} {\bibinfo {author} {\bibfnamefont {S.}~\bibnamefont {Carignano}}, \bibinfo {author} {\bibfnamefont {C.~R.}\ \bibnamefont {Marimón}},\ and\ \bibinfo {author} {\bibfnamefont {L.}~\bibnamefont {Tagliacozzo}},\ }\bibfield  {title} {\bibinfo {title} {Temporal entropy and the complexity of computing the expectation value of local operators after a quench},\ }\href {http://dx.doi.org/10.1103/PhysRevResearch.6.033021} {\bibfield  {journal} {\bibinfo  {journal} {Physical Review Research}\ }\textbf {\bibinfo {volume} {6}} (\bibinfo {year} {2024})}\BibitemShut {NoStop}%
\bibitem [{\citenamefont {Yu}\ \emph {et~al.}(2025)\citenamefont {Yu}, \citenamefont {Ohst}, \citenamefont {Nguyen},\ and\ \citenamefont {Nimmrichter}}]{Yu2025}%
  \BibitemOpen
  \bibfield  {author} {\bibinfo {author} {\bibfnamefont {M.}~\bibnamefont {Yu}}, \bibinfo {author} {\bibfnamefont {T.-A.}\ \bibnamefont {Ohst}}, \bibinfo {author} {\bibfnamefont {H.-C.}\ \bibnamefont {Nguyen}},\ and\ \bibinfo {author} {\bibfnamefont {S.}~\bibnamefont {Nimmrichter}},\ }\bibfield  {title} {\bibinfo {title} {Quantum memory in spontaneous emission processes},\ }\Eprint {https://arxiv.org/abs/2504.08605v1} {arXiv:2504.08605v1 [quant-ph]}  (\bibinfo {year} {2025})\BibitemShut {NoStop}%
\bibitem [{\citenamefont {Lindblad}(1976)}]{Lindblad76}%
  \BibitemOpen
  \bibfield  {author} {\bibinfo {author} {\bibfnamefont {G.}~\bibnamefont {Lindblad}},\ }\bibfield  {title} {\bibinfo {title} {On the generators of quantum dynamical semigroups},\ }\href {https://doi.org/10.1007/bf01608499} {\bibfield  {journal} {\bibinfo  {journal} {Communications in Mathematical Physics}\ }\textbf {\bibinfo {volume} {48}},\ \bibinfo {pages} {119} (\bibinfo {year} {1976})}\BibitemShut {NoStop}%
\bibitem [{\citenamefont {Gorini}\ \emph {et~al.}(1976)\citenamefont {Gorini}, \citenamefont {Kossakowski},\ and\ \citenamefont {Sudarshan}}]{GKS76}%
  \BibitemOpen
  \bibfield  {author} {\bibinfo {author} {\bibfnamefont {V.}~\bibnamefont {Gorini}}, \bibinfo {author} {\bibfnamefont {A.}~\bibnamefont {Kossakowski}},\ and\ \bibinfo {author} {\bibfnamefont {E.~C.~G.}\ \bibnamefont {Sudarshan}},\ }\bibfield  {title} {\bibinfo {title} {Completely positive dynamical semigroups of n-level systems},\ }\href {https://doi.org/10.1063/1.522979} {\bibfield  {journal} {\bibinfo  {journal} {Journal of Mathematical Physics}\ }\textbf {\bibinfo {volume} {17}},\ \bibinfo {pages} {821} (\bibinfo {year} {1976})}\BibitemShut {NoStop}%
\bibitem [{\citenamefont {Albash}\ \emph {et~al.}(2012)\citenamefont {Albash}, \citenamefont {Boixo}, \citenamefont {Lidar},\ and\ \citenamefont {Zanardi}}]{Albash2012}%
  \BibitemOpen
  \bibfield  {author} {\bibinfo {author} {\bibfnamefont {T.}~\bibnamefont {Albash}}, \bibinfo {author} {\bibfnamefont {S.}~\bibnamefont {Boixo}}, \bibinfo {author} {\bibfnamefont {D.~A.}\ \bibnamefont {Lidar}},\ and\ \bibinfo {author} {\bibfnamefont {P.}~\bibnamefont {Zanardi}},\ }\bibfield  {title} {\bibinfo {title} {Quantum adiabatic markovian master equations},\ }\href {http://dx.doi.org/10.1088/1367-2630/14/12/123016} {\bibfield  {journal} {\bibinfo  {journal} {New Journal of Physics}\ }\textbf {\bibinfo {volume} {14}},\ \bibinfo {pages} {123016} (\bibinfo {year} {2012})}\BibitemShut {NoStop}%
\bibitem [{\citenamefont {Yamaguchi}\ \emph {et~al.}(2017)\citenamefont {Yamaguchi}, \citenamefont {Yuge},\ and\ \citenamefont {Ogawa}}]{Yamaguchi2017}%
  \BibitemOpen
  \bibfield  {author} {\bibinfo {author} {\bibfnamefont {M.}~\bibnamefont {Yamaguchi}}, \bibinfo {author} {\bibfnamefont {T.}~\bibnamefont {Yuge}},\ and\ \bibinfo {author} {\bibfnamefont {T.}~\bibnamefont {Ogawa}},\ }\bibfield  {title} {\bibinfo {title} {Markovian quantum master equation beyond adiabatic regime},\ }\href {http://dx.doi.org/10.1103/PhysRevE.95.012136} {\bibfield  {journal} {\bibinfo  {journal} {Physical Review E}\ }\textbf {\bibinfo {volume} {95}} (\bibinfo {year} {2017})}\BibitemShut {NoStop}%
\bibitem [{\citenamefont {Dann}\ \emph {et~al.}(2018)\citenamefont {Dann}, \citenamefont {Levy},\ and\ \citenamefont {Kosloff}}]{Dann2018}%
  \BibitemOpen
  \bibfield  {author} {\bibinfo {author} {\bibfnamefont {R.}~\bibnamefont {Dann}}, \bibinfo {author} {\bibfnamefont {A.}~\bibnamefont {Levy}},\ and\ \bibinfo {author} {\bibfnamefont {R.}~\bibnamefont {Kosloff}},\ }\bibfield  {title} {\bibinfo {title} {Time-dependent markovian quantum master equation},\ }\href {http://dx.doi.org/10.1103/PhysRevA.98.052129} {\bibfield  {journal} {\bibinfo  {journal} {Physical Review A}\ }\textbf {\bibinfo {volume} {98}} (\bibinfo {year} {2018})}\BibitemShut {NoStop}%
\bibitem [{\citenamefont {Di~Meglio}\ \emph {et~al.}(2024)\citenamefont {Di~Meglio}, \citenamefont {Plenio},\ and\ \citenamefont {Huelga}}]{DiMeglio2024}%
  \BibitemOpen
  \bibfield  {author} {\bibinfo {author} {\bibfnamefont {G.}~\bibnamefont {Di~Meglio}}, \bibinfo {author} {\bibfnamefont {M.~B.}\ \bibnamefont {Plenio}},\ and\ \bibinfo {author} {\bibfnamefont {S.~F.}\ \bibnamefont {Huelga}},\ }\bibfield  {title} {\bibinfo {title} {Time dependent markovian master equation beyond the adiabatic limit},\ }\href {http://dx.doi.org/10.22331/q-2024-11-21-1534} {\bibfield  {journal} {\bibinfo  {journal} {Quantum}\ }\textbf {\bibinfo {volume} {8}},\ \bibinfo {pages} {1534} (\bibinfo {year} {2024})}\BibitemShut {NoStop}%
\bibitem [{\citenamefont {Rivas}\ \emph {et~al.}(2014)\citenamefont {Rivas}, \citenamefont {Huelga},\ and\ \citenamefont {Plenio}}]{RHP14}%
  \BibitemOpen
  \bibfield  {author} {\bibinfo {author} {\bibfnamefont {{\'{A}}.}~\bibnamefont {Rivas}}, \bibinfo {author} {\bibfnamefont {S.~F.}\ \bibnamefont {Huelga}},\ and\ \bibinfo {author} {\bibfnamefont {M.~B.}\ \bibnamefont {Plenio}},\ }\bibfield  {title} {\bibinfo {title} {Quantum non-markovianity: characterization, quantification and detection},\ }\href {https://doi.org/10.1088/0034-4885/77/9/094001} {\bibfield  {journal} {\bibinfo  {journal} {Reports on Progress in Physics}\ }\textbf {\bibinfo {volume} {77}},\ \bibinfo {pages} {094001} (\bibinfo {year} {2014})}\BibitemShut {NoStop}%
\bibitem [{\citenamefont {Pollock}\ \emph {et~al.}(2018{\natexlab{a}})\citenamefont {Pollock}, \citenamefont {Rodríguez-Rosario}, \citenamefont {Frauenheim}, \citenamefont {Paternostro},\ and\ \citenamefont {Modi}}]{Pollock2018a}%
  \BibitemOpen
  \bibfield  {author} {\bibinfo {author} {\bibfnamefont {F.~A.}\ \bibnamefont {Pollock}}, \bibinfo {author} {\bibfnamefont {C.}~\bibnamefont {Rodríguez-Rosario}}, \bibinfo {author} {\bibfnamefont {T.}~\bibnamefont {Frauenheim}}, \bibinfo {author} {\bibfnamefont {M.}~\bibnamefont {Paternostro}},\ and\ \bibinfo {author} {\bibfnamefont {K.}~\bibnamefont {Modi}},\ }\bibfield  {title} {\bibinfo {title} {Operational markov condition for quantum processes},\ }\href {http://dx.doi.org/10.1103/PhysRevLett.120.040405} {\bibfield  {journal} {\bibinfo  {journal} {Physical Review Letters}\ }\textbf {\bibinfo {volume} {120}} (\bibinfo {year} {2018}{\natexlab{a}})}\BibitemShut {NoStop}%
\bibitem [{\citenamefont {Pollock}\ \emph {et~al.}(2018{\natexlab{b}})\citenamefont {Pollock}, \citenamefont {Rodríguez-Rosario}, \citenamefont {Frauenheim}, \citenamefont {Paternostro},\ and\ \citenamefont {Modi}}]{Pollock2018b}%
  \BibitemOpen
  \bibfield  {author} {\bibinfo {author} {\bibfnamefont {F.~A.}\ \bibnamefont {Pollock}}, \bibinfo {author} {\bibfnamefont {C.}~\bibnamefont {Rodríguez-Rosario}}, \bibinfo {author} {\bibfnamefont {T.}~\bibnamefont {Frauenheim}}, \bibinfo {author} {\bibfnamefont {M.}~\bibnamefont {Paternostro}},\ and\ \bibinfo {author} {\bibfnamefont {K.}~\bibnamefont {Modi}},\ }\bibfield  {title} {\bibinfo {title} {Non-markovian quantum processes: Complete framework and efficient characterization},\ }\href {http://dx.doi.org/10.1103/PhysRevA.97.012127} {\bibfield  {journal} {\bibinfo  {journal} {Physical Review A}\ }\textbf {\bibinfo {volume} {97}} (\bibinfo {year} {2018}{\natexlab{b}})}\BibitemShut {NoStop}%
\bibitem [{\citenamefont {Milz}\ and\ \citenamefont {Modi}(2021)}]{MM21}%
  \BibitemOpen
  \bibfield  {author} {\bibinfo {author} {\bibfnamefont {S.}~\bibnamefont {Milz}}\ and\ \bibinfo {author} {\bibfnamefont {K.}~\bibnamefont {Modi}},\ }\bibfield  {title} {\bibinfo {title} {Quantum stochastic processes and quantum non-markovian phenomena},\ }\href {https://doi.org/10.1103/prxquantum.2.030201} {\bibfield  {journal} {\bibinfo  {journal} {{PRX} Quantum}\ }\textbf {\bibinfo {volume} {2}} (\bibinfo {year} {2021})}\BibitemShut {NoStop}%
\bibitem [{\citenamefont {Choi}(1975)}]{Choi75}%
  \BibitemOpen
  \bibfield  {author} {\bibinfo {author} {\bibfnamefont {M.-D.}\ \bibnamefont {Choi}},\ }\bibfield  {title} {\bibinfo {title} {Completely positive linear maps on complex matrices},\ }\href {https://doi.org/10.1016/0024-3795\%2875\%2990075-0} {\bibfield  {journal} {\bibinfo  {journal} {Linear Algebra and its Applications}\ }\textbf {\bibinfo {volume} {10}},\ \bibinfo {pages} {285} (\bibinfo {year} {1975})}\BibitemShut {NoStop}%
\bibitem [{\citenamefont {Jamio{\l}kowski}(1972)}]{J72}%
  \BibitemOpen
  \bibfield  {author} {\bibinfo {author} {\bibfnamefont {A.}~\bibnamefont {Jamio{\l}kowski}},\ }\bibfield  {title} {\bibinfo {title} {Linear transformations which preserve trace and positive semidefiniteness of operators},\ }\href {https://doi.org/10.1016/0034-4877\%2872\%2990011-0} {\bibfield  {journal} {\bibinfo  {journal} {Reports on Mathematical Physics}\ }\textbf {\bibinfo {volume} {3}},\ \bibinfo {pages} {275} (\bibinfo {year} {1972})}\BibitemShut {NoStop}%
\bibitem [{\citenamefont {White}\ \emph {et~al.}(2022)\citenamefont {White}, \citenamefont {Pollock}, \citenamefont {Hollenberg}, \citenamefont {Modi},\ and\ \citenamefont {Hill}}]{White2022}%
  \BibitemOpen
  \bibfield  {author} {\bibinfo {author} {\bibfnamefont {G.}~\bibnamefont {White}}, \bibinfo {author} {\bibfnamefont {F.}~\bibnamefont {Pollock}}, \bibinfo {author} {\bibfnamefont {L.}~\bibnamefont {Hollenberg}}, \bibinfo {author} {\bibfnamefont {K.}~\bibnamefont {Modi}},\ and\ \bibinfo {author} {\bibfnamefont {C.}~\bibnamefont {Hill}},\ }\bibfield  {title} {\bibinfo {title} {Non-markovian quantum process tomography},\ }\href {http://dx.doi.org/10.1103/PRXQuantum.3.020344} {\bibfield  {journal} {\bibinfo  {journal} {PRX Quantum}\ }\textbf {\bibinfo {volume} {3}} (\bibinfo {year} {2022})}\BibitemShut {NoStop}%
\bibitem [{\citenamefont {Chiribella}\ \emph {et~al.}(2008{\natexlab{a}})\citenamefont {Chiribella}, \citenamefont {D{\textquotesingle}Ariano},\ and\ \citenamefont {Perinotti}}]{CDP08a}%
  \BibitemOpen
  \bibfield  {author} {\bibinfo {author} {\bibfnamefont {G.}~\bibnamefont {Chiribella}}, \bibinfo {author} {\bibfnamefont {G.~M.}\ \bibnamefont {D{\textquotesingle}Ariano}},\ and\ \bibinfo {author} {\bibfnamefont {P.}~\bibnamefont {Perinotti}},\ }\bibfield  {title} {\bibinfo {title} {Transforming quantum operations: Quantum supermaps},\ }\href {https://doi.org/10.1209/0295-5075/83/30004} {\bibfield  {journal} {\bibinfo  {journal} {{EPL} (Europhysics Letters)}\ }\textbf {\bibinfo {volume} {83}},\ \bibinfo {pages} {30004} (\bibinfo {year} {2008}{\natexlab{a}})}\BibitemShut {NoStop}%
\bibitem [{\citenamefont {Chiribella}\ \emph {et~al.}(2008{\natexlab{b}})\citenamefont {Chiribella}, \citenamefont {D'Ariano},\ and\ \citenamefont {Perinotti}}]{CDP08b}%
  \BibitemOpen
  \bibfield  {author} {\bibinfo {author} {\bibfnamefont {G.}~\bibnamefont {Chiribella}}, \bibinfo {author} {\bibfnamefont {G.~M.}\ \bibnamefont {D'Ariano}},\ and\ \bibinfo {author} {\bibfnamefont {P.}~\bibnamefont {Perinotti}},\ }\bibfield  {title} {\bibinfo {title} {Quantum circuit architecture},\ }\href {https://doi.org/10.1103/physrevlett.101.060401} {\bibfield  {journal} {\bibinfo  {journal} {Physical Review Letters}\ }\textbf {\bibinfo {volume} {101}} (\bibinfo {year} {2008}{\natexlab{b}})}\BibitemShut {NoStop}%
\bibitem [{\citenamefont {Chiribella}\ \emph {et~al.}(2009)\citenamefont {Chiribella}, \citenamefont {D'Ariano},\ and\ \citenamefont {Perinotti}}]{CDP09}%
  \BibitemOpen
  \bibfield  {author} {\bibinfo {author} {\bibfnamefont {G.}~\bibnamefont {Chiribella}}, \bibinfo {author} {\bibfnamefont {G.~M.}\ \bibnamefont {D'Ariano}},\ and\ \bibinfo {author} {\bibfnamefont {P.}~\bibnamefont {Perinotti}},\ }\bibfield  {title} {\bibinfo {title} {Theoretical framework for quantum networks},\ }\href {https://doi.org/10.1103/physreva.80.022339} {\bibfield  {journal} {\bibinfo  {journal} {Physical Review A}\ }\textbf {\bibinfo {volume} {80}} (\bibinfo {year} {2009})}\BibitemShut {NoStop}%
\bibitem [{\citenamefont {Yadin}\ \emph {et~al.}(2016)\citenamefont {Yadin}, \citenamefont {Ma}, \citenamefont {Girolami}, \citenamefont {Gu},\ and\ \citenamefont {Vedral}}]{Yadin2016}%
  \BibitemOpen
  \bibfield  {author} {\bibinfo {author} {\bibfnamefont {B.}~\bibnamefont {Yadin}}, \bibinfo {author} {\bibfnamefont {J.}~\bibnamefont {Ma}}, \bibinfo {author} {\bibfnamefont {D.}~\bibnamefont {Girolami}}, \bibinfo {author} {\bibfnamefont {M.}~\bibnamefont {Gu}},\ and\ \bibinfo {author} {\bibfnamefont {V.}~\bibnamefont {Vedral}},\ }\bibfield  {title} {\bibinfo {title} {Quantum processes which do not use coherence},\ }\href {http://dx.doi.org/10.1103/PhysRevX.6.041028} {\bibfield  {journal} {\bibinfo  {journal} {Physical Review X}\ }\textbf {\bibinfo {volume} {6}} (\bibinfo {year} {2016})}\BibitemShut {NoStop}%
\bibitem [{Note1()}]{Note1}%
  \BibitemOpen
  \bibinfo {note} {This is because any strictly incoherent operation, which includes pure dephasing, can be performed in terms of incoherent unitary operations (see Ref.~\cite {Yadin2016} for details).}\BibitemShut {Stop}%
\bibitem [{\citenamefont {Liu}\ \emph {et~al.}(2011)\citenamefont {Liu}, \citenamefont {Li}, \citenamefont {Huang}, \citenamefont {Li}, \citenamefont {Guo}, \citenamefont {Laine}, \citenamefont {Breuer},\ and\ \citenamefont {Piilo}}]{LLHLGLBP11}%
  \BibitemOpen
  \bibfield  {author} {\bibinfo {author} {\bibfnamefont {B.-H.}\ \bibnamefont {Liu}}, \bibinfo {author} {\bibfnamefont {L.}~\bibnamefont {Li}}, \bibinfo {author} {\bibfnamefont {Y.-F.}\ \bibnamefont {Huang}}, \bibinfo {author} {\bibfnamefont {C.-F.}\ \bibnamefont {Li}}, \bibinfo {author} {\bibfnamefont {G.-C.}\ \bibnamefont {Guo}}, \bibinfo {author} {\bibfnamefont {E.-M.}\ \bibnamefont {Laine}}, \bibinfo {author} {\bibfnamefont {H.-P.}\ \bibnamefont {Breuer}},\ and\ \bibinfo {author} {\bibfnamefont {J.}~\bibnamefont {Piilo}},\ }\bibfield  {title} {\bibinfo {title} {Experimental control of the transition from markovian to non-markovian dynamics of open~quantum~systems},\ }\href {https://doi.org/10.1038/nphys2085} {\bibfield  {journal} {\bibinfo  {journal} {Nature Physics}\ }\textbf {\bibinfo {volume} {7}},\ \bibinfo {pages} {931} (\bibinfo {year} {2011})}\BibitemShut {NoStop}%
\bibitem [{\citenamefont {Liu}\ \emph {et~al.}(2018)\citenamefont {Liu}, \citenamefont {Lyyra}, \citenamefont {Sun}, \citenamefont {Liu}, \citenamefont {Li}, \citenamefont {Guo}, \citenamefont {Maniscalco},\ and\ \citenamefont {Piilo}}]{LLSLLGMP18}%
  \BibitemOpen
  \bibfield  {author} {\bibinfo {author} {\bibfnamefont {Z.-D.}\ \bibnamefont {Liu}}, \bibinfo {author} {\bibfnamefont {H.}~\bibnamefont {Lyyra}}, \bibinfo {author} {\bibfnamefont {Y.-N.}\ \bibnamefont {Sun}}, \bibinfo {author} {\bibfnamefont {B.-H.}\ \bibnamefont {Liu}}, \bibinfo {author} {\bibfnamefont {C.-F.}\ \bibnamefont {Li}}, \bibinfo {author} {\bibfnamefont {G.-C.}\ \bibnamefont {Guo}}, \bibinfo {author} {\bibfnamefont {S.}~\bibnamefont {Maniscalco}},\ and\ \bibinfo {author} {\bibfnamefont {J.}~\bibnamefont {Piilo}},\ }\bibfield  {title} {\bibinfo {title} {Experimental implementation of fully controlled dephasing dynamics and synthetic spectral densities},\ }\href {https://doi.org/10.1038/s41467-018-05817-x} {\bibfield  {journal} {\bibinfo  {journal} {Nature Communications}\ }\textbf {\bibinfo {volume} {9}} (\bibinfo {year} {2018})}\BibitemShut {NoStop}%
\bibitem [{Note2()}]{Note2}%
  \BibitemOpen
  \bibinfo {note} {It is therefore consistent with previous definitions of ``quantum memory'' in different contexts; cf. \cite {Nielsen2003,Chiribella2008,Lvovsky2009,Terhal2015,Rosset2018}.}\BibitemShut {Stop}%
\bibitem [{\citenamefont {Gühne}\ and\ \citenamefont {Tóth}(2009)}]{Guehne2009}%
  \BibitemOpen
  \bibfield  {author} {\bibinfo {author} {\bibfnamefont {O.}~\bibnamefont {Gühne}}\ and\ \bibinfo {author} {\bibfnamefont {G.}~\bibnamefont {Tóth}},\ }\bibfield  {title} {\bibinfo {title} {Entanglement detection},\ }\href {http://dx.doi.org/10.1016/j.physrep.2009.02.004} {\bibfield  {journal} {\bibinfo  {journal} {Physics Reports}\ }\textbf {\bibinfo {volume} {474}},\ \bibinfo {pages} {1–75} (\bibinfo {year} {2009})}\BibitemShut {NoStop}%
\bibitem [{\citenamefont {Skrzypczyk}\ and\ \citenamefont {Cavalcanti}(2023)}]{SC2023}%
  \BibitemOpen
  \bibfield  {author} {\bibinfo {author} {\bibfnamefont {P.}~\bibnamefont {Skrzypczyk}}\ and\ \bibinfo {author} {\bibfnamefont {D.}~\bibnamefont {Cavalcanti}},\ }\href {http://dx.doi.org/10.1088/978-0-7503-3343-6} {\emph {\bibinfo {title} {Semidefinite Programming in Quantum Information Science}}}\ (\bibinfo  {publisher} {IOP Publishing},\ \bibinfo {year} {2023})\BibitemShut {NoStop}%
\bibitem [{\citenamefont {Berta}\ \emph {et~al.}(2016)\citenamefont {Berta}, \citenamefont {Fawzi},\ and\ \citenamefont {Scholz}}]{Berta2016}%
  \BibitemOpen
  \bibfield  {author} {\bibinfo {author} {\bibfnamefont {M.}~\bibnamefont {Berta}}, \bibinfo {author} {\bibfnamefont {O.}~\bibnamefont {Fawzi}},\ and\ \bibinfo {author} {\bibfnamefont {V.~B.}\ \bibnamefont {Scholz}},\ }\bibfield  {title} {\bibinfo {title} {Quantum bilinear optimization},\ }\href {http://dx.doi.org/10.1137/15M1037731} {\bibfield  {journal} {\bibinfo  {journal} {SIAM Journal on Optimization}\ }\textbf {\bibinfo {volume} {26}},\ \bibinfo {pages} {1529–1564} (\bibinfo {year} {2016})}\BibitemShut {NoStop}%
\bibitem [{\citenamefont {Berta}\ \emph {et~al.}(2021)\citenamefont {Berta}, \citenamefont {Borderi}, \citenamefont {Fawzi},\ and\ \citenamefont {Scholz}}]{Berta2021}%
  \BibitemOpen
  \bibfield  {author} {\bibinfo {author} {\bibfnamefont {M.}~\bibnamefont {Berta}}, \bibinfo {author} {\bibfnamefont {F.}~\bibnamefont {Borderi}}, \bibinfo {author} {\bibfnamefont {O.}~\bibnamefont {Fawzi}},\ and\ \bibinfo {author} {\bibfnamefont {V.~B.}\ \bibnamefont {Scholz}},\ }\bibfield  {title} {\bibinfo {title} {Semidefinite programming hierarchies for constrained bilinear optimization},\ }\href {http://dx.doi.org/10.1007/s10107-021-01650-1} {\bibfield  {journal} {\bibinfo  {journal} {Mathematical Programming}\ }\textbf {\bibinfo {volume} {194}},\ \bibinfo {pages} {781–829} (\bibinfo {year} {2021})}\BibitemShut {NoStop}%
\bibitem [{Note3()}]{Note3}%
  \BibitemOpen
  \bibinfo {note} {The link product between two operators, say $X$ and $Y$ on Hilbert spaces $\protect \mathcal {H}_{\protect \rm A}\otimes \protect \mathcal {H}_{\protect \rm B}$ and $\protect \mathcal {H}_{\protect \rm B}\otimes \protect \mathcal {H}_{\protect \rm C}$ respectively, is defined as $$X\star Y=\protect \mathrm {Tr}\protect \hspace {0.05cm}_{\protect \rm B}\protect \big [\protect \big (X^{\Gamma _{\protect \rm B}}\otimes \protect \mathrm {id}_{\protect \rm C}\protect \big )\protect \big (\protect \mathrm {id}_{\protect \rm A}\otimes Y\protect \big )\protect \big ].$$ It can be understood as a simple way to translate composition between channels in the CJ picture; see Ref.~\cite {CDP09} for details.}\BibitemShut {Stop}%
\bibitem [{\citenamefont {Terhal}\ and\ \citenamefont {Horodecki}(2000)}]{Terhal2000}%
  \BibitemOpen
  \bibfield  {author} {\bibinfo {author} {\bibfnamefont {B.~M.}\ \bibnamefont {Terhal}}\ and\ \bibinfo {author} {\bibfnamefont {P.}~\bibnamefont {Horodecki}},\ }\bibfield  {title} {\bibinfo {title} {Schmidt number for density matrices},\ }\href {http://dx.doi.org/10.1103/PhysRevA.61.040301} {\bibfield  {journal} {\bibinfo  {journal} {Physical Review A}\ }\textbf {\bibinfo {volume} {61}} (\bibinfo {year} {2000})}\BibitemShut {NoStop}%
\bibitem [{\citenamefont {Moroder}\ and\ \citenamefont {Gittsovich}(2012)}]{Moroder2012}%
  \BibitemOpen
  \bibfield  {author} {\bibinfo {author} {\bibfnamefont {T.}~\bibnamefont {Moroder}}\ and\ \bibinfo {author} {\bibfnamefont {O.}~\bibnamefont {Gittsovich}},\ }\bibfield  {title} {\bibinfo {title} {Calibration-robust entanglement detection beyond bell inequalities},\ }\bibfield  {journal} {\bibinfo  {journal} {Physical Review A}\ }\textbf {\bibinfo {volume} {85}},\ \href {https://doi.org/10.1103/physreva.85.032301} {10.1103/physreva.85.032301} (\bibinfo {year} {2012})\BibitemShut {NoStop}%
\bibitem [{\citenamefont {Breuer}\ and\ \citenamefont {Petruccione}(2007)}]{BP07}%
  \BibitemOpen
  \bibfield  {author} {\bibinfo {author} {\bibfnamefont {H.-P.}\ \bibnamefont {Breuer}}\ and\ \bibinfo {author} {\bibfnamefont {F.}~\bibnamefont {Petruccione}},\ }\href {https://doi.org/10.1093/acprof\%3Aoso/9780199213900.001.0001} {\emph {\bibinfo {title} {The Theory of Open Quantum Systems}}}\ (\bibinfo  {publisher} {Oxford University {Press, Oxford}},\ \bibinfo {year} {2007})\BibitemShut {NoStop}%
\bibitem [{\citenamefont {Brune}\ \emph {et~al.}(1996)\citenamefont {Brune}, \citenamefont {Schmidt-Kaler}, \citenamefont {Maali}, \citenamefont {Dreyer}, \citenamefont {Hagley}, \citenamefont {Raimond},\ and\ \citenamefont {Haroche}}]{Brune1996}%
  \BibitemOpen
  \bibfield  {author} {\bibinfo {author} {\bibfnamefont {M.}~\bibnamefont {Brune}}, \bibinfo {author} {\bibfnamefont {F.}~\bibnamefont {Schmidt-Kaler}}, \bibinfo {author} {\bibfnamefont {A.}~\bibnamefont {Maali}}, \bibinfo {author} {\bibfnamefont {J.}~\bibnamefont {Dreyer}}, \bibinfo {author} {\bibfnamefont {E.}~\bibnamefont {Hagley}}, \bibinfo {author} {\bibfnamefont {J.~M.}\ \bibnamefont {Raimond}},\ and\ \bibinfo {author} {\bibfnamefont {S.}~\bibnamefont {Haroche}},\ }\bibfield  {title} {\bibinfo {title} {Quantum rabi oscillation: A direct test of field quantization in a cavity},\ }\href {http://dx.doi.org/10.1103/PhysRevLett.76.1800} {\bibfield  {journal} {\bibinfo  {journal} {Physical Review Letters}\ }\textbf {\bibinfo {volume} {76}},\ \bibinfo {pages} {1800–1803} (\bibinfo {year} {1996})}\BibitemShut {NoStop}%
\bibitem [{\citenamefont {Budroni}\ \emph {et~al.}(2013)\citenamefont {Budroni}, \citenamefont {Moroder}, \citenamefont {Kleinmann},\ and\ \citenamefont {Gühne}}]{Budroni2013}%
  \BibitemOpen
  \bibfield  {author} {\bibinfo {author} {\bibfnamefont {C.}~\bibnamefont {Budroni}}, \bibinfo {author} {\bibfnamefont {T.}~\bibnamefont {Moroder}}, \bibinfo {author} {\bibfnamefont {M.}~\bibnamefont {Kleinmann}},\ and\ \bibinfo {author} {\bibfnamefont {O.}~\bibnamefont {Gühne}},\ }\bibfield  {title} {\bibinfo {title} {Bounding temporal quantum correlations},\ }\href {http://dx.doi.org/10.1103/PhysRevLett.111.020403} {\bibfield  {journal} {\bibinfo  {journal} {Physical Review Letters}\ }\textbf {\bibinfo {volume} {111}} (\bibinfo {year} {2013})}\BibitemShut {NoStop}%
\bibitem [{\citenamefont {Budroni}\ and\ \citenamefont {Emary}(2014)}]{Budroni2014}%
  \BibitemOpen
  \bibfield  {author} {\bibinfo {author} {\bibfnamefont {C.}~\bibnamefont {Budroni}}\ and\ \bibinfo {author} {\bibfnamefont {C.}~\bibnamefont {Emary}},\ }\bibfield  {title} {\bibinfo {title} {Temporal quantum correlations and leggett-garg inequalities in multilevel systems},\ }\href {http://dx.doi.org/10.1103/PhysRevLett.113.050401} {\bibfield  {journal} {\bibinfo  {journal} {Physical Review Letters}\ }\textbf {\bibinfo {volume} {113}} (\bibinfo {year} {2014})}\BibitemShut {NoStop}%
\bibitem [{\citenamefont {Budroni}(2016)}]{Budroni2016}%
  \BibitemOpen
  \bibfield  {author} {\bibinfo {author} {\bibfnamefont {C.}~\bibnamefont {Budroni}},\ }\href {http://dx.doi.org/10.1007/978-3-319-24169-2} {\emph {\bibinfo {title} {Temporal Quantum Correlations and Hidden Variable Models}}}\ (\bibinfo  {publisher} {Springer International Publishing},\ \bibinfo {year} {2016})\BibitemShut {NoStop}%
\bibitem [{\citenamefont {Hoffmann}\ \emph {et~al.}(2018)\citenamefont {Hoffmann}, \citenamefont {Spee}, \citenamefont {Gühne},\ and\ \citenamefont {Budroni}}]{Hoffmann2018}%
  \BibitemOpen
  \bibfield  {author} {\bibinfo {author} {\bibfnamefont {J.}~\bibnamefont {Hoffmann}}, \bibinfo {author} {\bibfnamefont {C.}~\bibnamefont {Spee}}, \bibinfo {author} {\bibfnamefont {O.}~\bibnamefont {Gühne}},\ and\ \bibinfo {author} {\bibfnamefont {C.}~\bibnamefont {Budroni}},\ }\bibfield  {title} {\bibinfo {title} {Structure of temporal correlations of a qubit},\ }\href {http://dx.doi.org/10.1088/1367-2630/aae87f} {\bibfield  {journal} {\bibinfo  {journal} {New Journal of Physics}\ }\textbf {\bibinfo {volume} {20}},\ \bibinfo {pages} {102001} (\bibinfo {year} {2018})}\BibitemShut {NoStop}%
\bibitem [{\citenamefont {Costa}\ \emph {et~al.}(2018)\citenamefont {Costa}, \citenamefont {Ringbauer}, \citenamefont {Goggin}, \citenamefont {White},\ and\ \citenamefont {Fedrizzi}}]{Costa2018}%
  \BibitemOpen
  \bibfield  {author} {\bibinfo {author} {\bibfnamefont {F.}~\bibnamefont {Costa}}, \bibinfo {author} {\bibfnamefont {M.}~\bibnamefont {Ringbauer}}, \bibinfo {author} {\bibfnamefont {M.~E.}\ \bibnamefont {Goggin}}, \bibinfo {author} {\bibfnamefont {A.~G.}\ \bibnamefont {White}},\ and\ \bibinfo {author} {\bibfnamefont {A.}~\bibnamefont {Fedrizzi}},\ }\bibfield  {title} {\bibinfo {title} {Unifying framework for spatial and temporal quantum correlations},\ }\href {http://dx.doi.org/10.1103/PhysRevA.98.012328} {\bibfield  {journal} {\bibinfo  {journal} {Physical Review A}\ }\textbf {\bibinfo {volume} {98}} (\bibinfo {year} {2018})}\BibitemShut {NoStop}%
\bibitem [{\citenamefont {Spee}\ \emph {et~al.}(2020)\citenamefont {Spee}, \citenamefont {Siebeneich}, \citenamefont {Gloger}, \citenamefont {Kaufmann}, \citenamefont {Johanning}, \citenamefont {Kleinmann}, \citenamefont {Wunderlich},\ and\ \citenamefont {Gühne}}]{Spee2020}%
  \BibitemOpen
  \bibfield  {author} {\bibinfo {author} {\bibfnamefont {C.}~\bibnamefont {Spee}}, \bibinfo {author} {\bibfnamefont {H.}~\bibnamefont {Siebeneich}}, \bibinfo {author} {\bibfnamefont {T.~F.}\ \bibnamefont {Gloger}}, \bibinfo {author} {\bibfnamefont {P.}~\bibnamefont {Kaufmann}}, \bibinfo {author} {\bibfnamefont {M.}~\bibnamefont {Johanning}}, \bibinfo {author} {\bibfnamefont {M.}~\bibnamefont {Kleinmann}}, \bibinfo {author} {\bibfnamefont {C.}~\bibnamefont {Wunderlich}},\ and\ \bibinfo {author} {\bibfnamefont {O.}~\bibnamefont {Gühne}},\ }\bibfield  {title} {\bibinfo {title} {Genuine temporal correlations can certify the quantum dimension},\ }\href {https://doi.org/10.1088/1367-2630/ab6d42} {\bibfield  {journal} {\bibinfo  {journal} {New Journal of Physics}\ }\textbf {\bibinfo {volume} {22}},\ \bibinfo {pages} {023028} (\bibinfo {year} {2020})}\BibitemShut {NoStop}%
\bibitem [{\citenamefont {Budroni}\ \emph {et~al.}(2021)\citenamefont {Budroni}, \citenamefont {Vitagliano},\ and\ \citenamefont {Woods}}]{Budroni2021}%
  \BibitemOpen
  \bibfield  {author} {\bibinfo {author} {\bibfnamefont {C.}~\bibnamefont {Budroni}}, \bibinfo {author} {\bibfnamefont {G.}~\bibnamefont {Vitagliano}},\ and\ \bibinfo {author} {\bibfnamefont {M.~P.}\ \bibnamefont {Woods}},\ }\bibfield  {title} {\bibinfo {title} {Ticking-clock performance enhanced by nonclassical temporal correlations},\ }\href {http://dx.doi.org/10.1103/PhysRevResearch.3.033051} {\bibfield  {journal} {\bibinfo  {journal} {Physical Review Research}\ }\textbf {\bibinfo {volume} {3}} (\bibinfo {year} {2021})}\BibitemShut {NoStop}%
\bibitem [{\citenamefont {Vieira}\ \emph {et~al.}(2024)\citenamefont {Vieira}, \citenamefont {Milz}, \citenamefont {Vitagliano},\ and\ \citenamefont {Budroni}}]{Vieira2024}%
  \BibitemOpen
  \bibfield  {author} {\bibinfo {author} {\bibfnamefont {L.~B.}\ \bibnamefont {Vieira}}, \bibinfo {author} {\bibfnamefont {S.}~\bibnamefont {Milz}}, \bibinfo {author} {\bibfnamefont {G.}~\bibnamefont {Vitagliano}},\ and\ \bibinfo {author} {\bibfnamefont {C.}~\bibnamefont {Budroni}},\ }\bibfield  {title} {\bibinfo {title} {Witnessing environment dimension through temporal correlations},\ }\href {http://dx.doi.org/10.22331/q-2024-01-10-1224} {\bibfield  {journal} {\bibinfo  {journal} {Quantum}\ }\textbf {\bibinfo {volume} {8}},\ \bibinfo {pages} {1224} (\bibinfo {year} {2024})}\BibitemShut {NoStop}%
\bibitem [{\citenamefont {Strathearn}\ \emph {et~al.}(2018)\citenamefont {Strathearn}, \citenamefont {Kirton}, \citenamefont {Kilda}, \citenamefont {Keeling},\ and\ \citenamefont {Lovett}}]{Strathearn2018}%
  \BibitemOpen
  \bibfield  {author} {\bibinfo {author} {\bibfnamefont {A.}~\bibnamefont {Strathearn}}, \bibinfo {author} {\bibfnamefont {P.}~\bibnamefont {Kirton}}, \bibinfo {author} {\bibfnamefont {D.}~\bibnamefont {Kilda}}, \bibinfo {author} {\bibfnamefont {J.}~\bibnamefont {Keeling}},\ and\ \bibinfo {author} {\bibfnamefont {B.~W.}\ \bibnamefont {Lovett}},\ }\bibfield  {title} {\bibinfo {title} {Efficient non-markovian quantum dynamics using time-evolving matrix product operators},\ }\href {http://dx.doi.org/10.1038/s41467-018-05617-3} {\bibfield  {journal} {\bibinfo  {journal} {Nature Communications}\ }\textbf {\bibinfo {volume} {9}} (\bibinfo {year} {2018})}\BibitemShut {NoStop}%
\bibitem [{\citenamefont {Binder}\ \emph {et~al.}(2018)\citenamefont {Binder}, \citenamefont {Thompson},\ and\ \citenamefont {Gu}}]{Binder2018}%
  \BibitemOpen
  \bibfield  {author} {\bibinfo {author} {\bibfnamefont {F.~C.}\ \bibnamefont {Binder}}, \bibinfo {author} {\bibfnamefont {J.}~\bibnamefont {Thompson}},\ and\ \bibinfo {author} {\bibfnamefont {M.}~\bibnamefont {Gu}},\ }\bibfield  {title} {\bibinfo {title} {Practical unitary simulator for non-markovian complex processes},\ }\href {http://dx.doi.org/10.1103/PhysRevLett.120.240502} {\bibfield  {journal} {\bibinfo  {journal} {Physical Review Letters}\ }\textbf {\bibinfo {volume} {120}} (\bibinfo {year} {2018})}\BibitemShut {NoStop}%
\bibitem [{\citenamefont {Jørgensen}\ and\ \citenamefont {Pollock}(2019)}]{J2019}%
  \BibitemOpen
  \bibfield  {author} {\bibinfo {author} {\bibfnamefont {M.~R.}\ \bibnamefont {Jørgensen}}\ and\ \bibinfo {author} {\bibfnamefont {F.~A.}\ \bibnamefont {Pollock}},\ }\bibfield  {title} {\bibinfo {title} {Exploiting the causal tensor network structure of quantum processes to efficiently simulate non-markovian path integrals},\ }\href {http://dx.doi.org/10.1103/PhysRevLett.123.240602} {\bibfield  {journal} {\bibinfo  {journal} {Physical Review Letters}\ }\textbf {\bibinfo {volume} {123}} (\bibinfo {year} {2019})}\BibitemShut {NoStop}%
\bibitem [{\citenamefont {Cygorek}\ \emph {et~al.}(2022)\citenamefont {Cygorek}, \citenamefont {Cosacchi}, \citenamefont {Vagov}, \citenamefont {Axt}, \citenamefont {Lovett}, \citenamefont {Keeling},\ and\ \citenamefont {Gauger}}]{Cygorek2022}%
  \BibitemOpen
  \bibfield  {author} {\bibinfo {author} {\bibfnamefont {M.}~\bibnamefont {Cygorek}}, \bibinfo {author} {\bibfnamefont {M.}~\bibnamefont {Cosacchi}}, \bibinfo {author} {\bibfnamefont {A.}~\bibnamefont {Vagov}}, \bibinfo {author} {\bibfnamefont {V.~M.}\ \bibnamefont {Axt}}, \bibinfo {author} {\bibfnamefont {B.~W.}\ \bibnamefont {Lovett}}, \bibinfo {author} {\bibfnamefont {J.}~\bibnamefont {Keeling}},\ and\ \bibinfo {author} {\bibfnamefont {E.~M.}\ \bibnamefont {Gauger}},\ }\bibfield  {title} {\bibinfo {title} {Simulation of open quantum systems by automated compression of arbitrary environments},\ }\href {http://dx.doi.org/10.1038/s41567-022-01544-9} {\bibfield  {journal} {\bibinfo  {journal} {Nature Physics}\ }\textbf {\bibinfo {volume} {18}},\ \bibinfo {pages} {662–668} (\bibinfo {year} {2022})}\BibitemShut {NoStop}%
\bibitem [{\citenamefont {White}\ \emph {et~al.}(2023)\citenamefont {White}, \citenamefont {Jurcevic}, \citenamefont {Hill},\ and\ \citenamefont {Modi}}]{White2023}%
  \BibitemOpen
  \bibfield  {author} {\bibinfo {author} {\bibfnamefont {G.~A.~L.}\ \bibnamefont {White}}, \bibinfo {author} {\bibfnamefont {P.}~\bibnamefont {Jurcevic}}, \bibinfo {author} {\bibfnamefont {C.~D.}\ \bibnamefont {Hill}},\ and\ \bibinfo {author} {\bibfnamefont {K.}~\bibnamefont {Modi}},\ }\bibfield  {title} {\bibinfo {title} {Unifying non-markovian characterisation with an efficient and self-consistent framework},\ }\href@noop {} {\  (\bibinfo {year} {2023})},\ \Eprint {https://arxiv.org/abs/2312.08454v1} {arXiv:2312.08454v1 [quant-ph]} \BibitemShut {NoStop}%
\bibitem [{\citenamefont {Dowling}\ \emph {et~al.}(2024)\citenamefont {Dowling}, \citenamefont {Modi}, \citenamefont {Muñoz}, \citenamefont {Singh},\ and\ \citenamefont {White}}]{Dowling2024}%
  \BibitemOpen
  \bibfield  {author} {\bibinfo {author} {\bibfnamefont {N.}~\bibnamefont {Dowling}}, \bibinfo {author} {\bibfnamefont {K.}~\bibnamefont {Modi}}, \bibinfo {author} {\bibfnamefont {R.~N.}\ \bibnamefont {Muñoz}}, \bibinfo {author} {\bibfnamefont {S.}~\bibnamefont {Singh}},\ and\ \bibinfo {author} {\bibfnamefont {G.~A.~L.}\ \bibnamefont {White}},\ }\bibfield  {title} {\bibinfo {title} {Capturing long-range memory structures with tree-geometry process tensors},\ }\href {http://dx.doi.org/10.1103/PhysRevX.14.041018} {\bibfield  {journal} {\bibinfo  {journal} {Physical Review X}\ }\textbf {\bibinfo {volume} {14}} (\bibinfo {year} {2024})}\BibitemShut {NoStop}%
\bibitem [{\citenamefont {Cygorek}\ and\ \citenamefont {Gauger}(2024)}]{Cygorek2024a}%
  \BibitemOpen
  \bibfield  {author} {\bibinfo {author} {\bibfnamefont {M.}~\bibnamefont {Cygorek}}\ and\ \bibinfo {author} {\bibfnamefont {E.~M.}\ \bibnamefont {Gauger}},\ }\bibfield  {title} {\bibinfo {title} {Understanding and utilizing the inner bonds of process tensors},\ }\href@noop {} {\  (\bibinfo {year} {2024})},\ \Eprint {https://arxiv.org/abs/2404.01287v1} {arXiv:2404.01287v1 [quant-ph]} \BibitemShut {NoStop}%
\bibitem [{\citenamefont {Cygorek}\ \emph {et~al.}(2024)\citenamefont {Cygorek}, \citenamefont {Lovett}, \citenamefont {Keeling},\ and\ \citenamefont {Gauger}}]{Cygorek2024b}%
  \BibitemOpen
  \bibfield  {author} {\bibinfo {author} {\bibfnamefont {M.}~\bibnamefont {Cygorek}}, \bibinfo {author} {\bibfnamefont {B.~W.}\ \bibnamefont {Lovett}}, \bibinfo {author} {\bibfnamefont {J.}~\bibnamefont {Keeling}},\ and\ \bibinfo {author} {\bibfnamefont {E.~M.}\ \bibnamefont {Gauger}},\ }\bibfield  {title} {\bibinfo {title} {Tree-like process tensor contraction for automated compression of environments},\ }\href@noop {} {\  (\bibinfo {year} {2024})},\ \Eprint {https://arxiv.org/abs/2405.16548v1} {arXiv:2405.16548v1 [quant-ph]} \BibitemShut {NoStop}%
\bibitem [{\citenamefont {Li}\ \emph {et~al.}(2025)\citenamefont {Li}, \citenamefont {He}, \citenamefont {Zheng}, \citenamefont {Dong}, \citenamefont {Luan}, \citenamefont {Yu},\ and\ \citenamefont {Zhang}}]{Li2025}%
  \BibitemOpen
  \bibfield  {author} {\bibinfo {author} {\bibfnamefont {Z.-T.}\ \bibnamefont {Li}}, \bibinfo {author} {\bibfnamefont {X.-L.}\ \bibnamefont {He}}, \bibinfo {author} {\bibfnamefont {C.-C.}\ \bibnamefont {Zheng}}, \bibinfo {author} {\bibfnamefont {Y.-Q.}\ \bibnamefont {Dong}}, \bibinfo {author} {\bibfnamefont {T.}~\bibnamefont {Luan}}, \bibinfo {author} {\bibfnamefont {X.-T.}\ \bibnamefont {Yu}},\ and\ \bibinfo {author} {\bibfnamefont {Z.-C.}\ \bibnamefont {Zhang}},\ }\bibfield  {title} {\bibinfo {title} {Quantum comb tomography via learning isometries on {S}tiefel manifold},\ }\href {http://dx.doi.org/10.1103/PhysRevLett.134.010803} {\bibfield  {journal} {\bibinfo  {journal} {Physical Review Letters}\ }\textbf {\bibinfo {volume} {134}} (\bibinfo {year} {2025})}\BibitemShut {NoStop}%
\bibitem [{\citenamefont {Nielsen}(2003)}]{Nielsen2003}%
  \BibitemOpen
  \bibfield  {author} {\bibinfo {author} {\bibfnamefont {M.~A.}\ \bibnamefont {Nielsen}},\ }\bibfield  {title} {\bibinfo {title} {Quantum computation by measurement and quantum memory},\ }\href {http://dx.doi.org/10.1016/S0375-9601(02)01803-0} {\bibfield  {journal} {\bibinfo  {journal} {Physics Letters A}\ }\textbf {\bibinfo {volume} {308}},\ \bibinfo {pages} {96–100} (\bibinfo {year} {2003})}\BibitemShut {NoStop}%
\bibitem [{\citenamefont {Chiribella}\ \emph {et~al.}(2008{\natexlab{c}})\citenamefont {Chiribella}, \citenamefont {D’Ariano},\ and\ \citenamefont {Perinotti}}]{Chiribella2008}%
  \BibitemOpen
  \bibfield  {author} {\bibinfo {author} {\bibfnamefont {G.}~\bibnamefont {Chiribella}}, \bibinfo {author} {\bibfnamefont {G.~M.}\ \bibnamefont {D’Ariano}},\ and\ \bibinfo {author} {\bibfnamefont {P.}~\bibnamefont {Perinotti}},\ }\bibfield  {title} {\bibinfo {title} {Memory effects in quantum channel discrimination},\ }\href {http://dx.doi.org/10.1103/PhysRevLett.101.180501} {\bibfield  {journal} {\bibinfo  {journal} {Physical Review Letters}\ }\textbf {\bibinfo {volume} {101}} (\bibinfo {year} {2008}{\natexlab{c}})}\BibitemShut {NoStop}%
\bibitem [{\citenamefont {Lvovsky}\ \emph {et~al.}(2009)\citenamefont {Lvovsky}, \citenamefont {Sanders},\ and\ \citenamefont {Tittel}}]{Lvovsky2009}%
  \BibitemOpen
  \bibfield  {author} {\bibinfo {author} {\bibfnamefont {A.~I.}\ \bibnamefont {Lvovsky}}, \bibinfo {author} {\bibfnamefont {B.~C.}\ \bibnamefont {Sanders}},\ and\ \bibinfo {author} {\bibfnamefont {W.}~\bibnamefont {Tittel}},\ }\bibfield  {title} {\bibinfo {title} {Optical quantum memory},\ }\href {http://dx.doi.org/10.1038/nphoton.2009.231} {\bibfield  {journal} {\bibinfo  {journal} {Nature Photonics}\ }\textbf {\bibinfo {volume} {3}},\ \bibinfo {pages} {706–714} (\bibinfo {year} {2009})}\BibitemShut {NoStop}%
\bibitem [{\citenamefont {Terhal}(2015)}]{Terhal2015}%
  \BibitemOpen
  \bibfield  {author} {\bibinfo {author} {\bibfnamefont {B.~M.}\ \bibnamefont {Terhal}},\ }\bibfield  {title} {\bibinfo {title} {Quantum error correction for quantum memories},\ }\href {http://dx.doi.org/10.1103/RevModPhys.87.307} {\bibfield  {journal} {\bibinfo  {journal} {Reviews of Modern Physics}\ }\textbf {\bibinfo {volume} {87}},\ \bibinfo {pages} {307–346} (\bibinfo {year} {2015})}\BibitemShut {NoStop}%
\bibitem [{\citenamefont {Rosset}\ \emph {et~al.}(2018)\citenamefont {Rosset}, \citenamefont {Buscemi},\ and\ \citenamefont {Liang}}]{Rosset2018}%
  \BibitemOpen
  \bibfield  {author} {\bibinfo {author} {\bibfnamefont {D.}~\bibnamefont {Rosset}}, \bibinfo {author} {\bibfnamefont {F.}~\bibnamefont {Buscemi}},\ and\ \bibinfo {author} {\bibfnamefont {Y.-C.}\ \bibnamefont {Liang}},\ }\bibfield  {title} {\bibinfo {title} {Resource theory of quantum memories and their faithful verification with minimal assumptions},\ }\href {http://dx.doi.org/10.1103/PhysRevX.8.021033} {\bibfield  {journal} {\bibinfo  {journal} {Physical Review X}\ }\textbf {\bibinfo {volume} {8}} (\bibinfo {year} {2018})}\BibitemShut {NoStop}%
\bibitem [{\citenamefont {Oreshkov}\ \emph {et~al.}(2012)\citenamefont {Oreshkov}, \citenamefont {Costa},\ and\ \citenamefont {Brukner}}]{OCB12}%
  \BibitemOpen
  \bibfield  {author} {\bibinfo {author} {\bibfnamefont {O.}~\bibnamefont {Oreshkov}}, \bibinfo {author} {\bibfnamefont {F.}~\bibnamefont {Costa}},\ and\ \bibinfo {author} {\bibfnamefont {{\v{C}}.}~\bibnamefont {Brukner}},\ }\bibfield  {title} {\bibinfo {title} {Quantum correlations with no causal order},\ }\href {https://doi.org/10.1038/ncomms2076} {\bibfield  {journal} {\bibinfo  {journal} {Nature Communications}\ }\textbf {\bibinfo {volume} {3}} (\bibinfo {year} {2012})}\BibitemShut {NoStop}%
\bibitem [{\citenamefont {Ara{\'{u}}jo}\ \emph {et~al.}(2015)\citenamefont {Ara{\'{u}}jo}, \citenamefont {Branciard}, \citenamefont {Costa}, \citenamefont {Feix}, \citenamefont {Giarmatzi},\ and\ \citenamefont {Brukner}}]{ABCFGB15}%
  \BibitemOpen
  \bibfield  {author} {\bibinfo {author} {\bibfnamefont {M.}~\bibnamefont {Ara{\'{u}}jo}}, \bibinfo {author} {\bibfnamefont {C.}~\bibnamefont {Branciard}}, \bibinfo {author} {\bibfnamefont {F.}~\bibnamefont {Costa}}, \bibinfo {author} {\bibfnamefont {A.}~\bibnamefont {Feix}}, \bibinfo {author} {\bibfnamefont {C.}~\bibnamefont {Giarmatzi}},\ and\ \bibinfo {author} {\bibfnamefont {{\v{C}}.}~\bibnamefont {Brukner}},\ }\bibfield  {title} {\bibinfo {title} {Witnessing causal nonseparability},\ }\href {https://doi.org/10.1088/1367-2630/17/10/102001} {\bibfield  {journal} {\bibinfo  {journal} {New Journal of Physics}\ }\textbf {\bibinfo {volume} {17}},\ \bibinfo {pages} {102001} (\bibinfo {year} {2015})}\BibitemShut {NoStop}%
\bibitem [{Note4()}]{Note4}%
  \BibitemOpen
  \bibinfo {note} {To be fair, at this point one would only need to impose that $W$ be ``positive over tensor products'' to guarantee positivity of the probabilities (see Refs.~\cite {OCB12,ABCFGB15} for in-depth discussions). The need for $W$ to be \protect \emph {de facto} positive semidefinite comes from the possibility of probing the system $\protect \rm S$ with higher-order operations involving memory and entanglement.}\BibitemShut {Stop}%
\bibitem [{\citenamefont {Bengtsson}\ \emph {et~al.}(2024)\citenamefont {Bengtsson}, \citenamefont {Grassl},\ and\ \citenamefont {McConnell}}]{Bengtsson2024}%
  \BibitemOpen
  \bibfield  {author} {\bibinfo {author} {\bibfnamefont {I.}~\bibnamefont {Bengtsson}}, \bibinfo {author} {\bibfnamefont {M.}~\bibnamefont {Grassl}},\ and\ \bibinfo {author} {\bibfnamefont {G.}~\bibnamefont {McConnell}},\ }\bibfield  {title} {\bibinfo {title} {{SIC}-{POVM}s from {S}tark units: Dimensions $n^2+3=4p$, $p$ prime},\ }\href@noop {} {\  (\bibinfo {year} {2024})},\ \Eprint {https://arxiv.org/abs/2403.02872v1} {arXiv:2403.02872v1 [quant-ph]} \BibitemShut {NoStop}%
\bibitem [{\citenamefont {Appleby}\ \emph {et~al.}(2025)\citenamefont {Appleby}, \citenamefont {Flammia},\ and\ \citenamefont {Kopp}}]{Appleby2025}%
  \BibitemOpen
  \bibfield  {author} {\bibinfo {author} {\bibfnamefont {M.}~\bibnamefont {Appleby}}, \bibinfo {author} {\bibfnamefont {S.~T.}\ \bibnamefont {Flammia}},\ and\ \bibinfo {author} {\bibfnamefont {G.~S.}\ \bibnamefont {Kopp}},\ }\bibfield  {title} {\bibinfo {title} {A constructive approach to {Z}auner's conjecture via the {S}tark conjectures},\ }\href@noop {} {\  (\bibinfo {year} {2025})},\ \Eprint {https://arxiv.org/abs/2501.03970v1} {arXiv:2501.03970v1 [math.NT]} \BibitemShut {NoStop}%
\end{thebibliography}
\end{document}